\renewcommand{\theequation}{\arabic{equation}}
\newtheorem{Lem}{Lemma}
\newtheorem{Th}{Theorem}
\newtheorem{Pro}{Proposition}
\newtheorem{Rem}{Remark}
\def\X{{\bf X}}
\def\x{{\bf x}}
\def\U{{\bf U}}
\def\u{{\bf u}}
\def\S{{\bf S}}
\def\a{{\bf a}}
\def\b{{\bf b}}
\def\e{{\bf e}}
\def\c{{\bf c}}
\def\p{{\bf p}}
\def\bb{{\boldsymbol\beta}}
\def\bxi{{\boldsymbol\xi}}
\def\bphi{\boldsymbol\phi}
\def\brho{\boldsymbol\rho}
\def\0{{\bf 0}}
\def\trans{^{\rm T}}
\def\pr{\hbox{pr}}
\def\wh{\widehat}
\def\wt{\widetilde}
\def\n{\nonumber}
\def\var{\hbox{var}}
\def\cov{\hbox{cov}}
\def\trace{\hbox{trace}}
\def\diag{\mbox{diag}}
\def\sumi{\sum_{i=1}^n}
\def\fyx{f_{Y\mid \X}}
\def\fx{f_{\X}}
\def\boxit#1{\vbox{\hrule\hbox{\vrule\kern6pt  \vbox{\kern6pt#1\kern6pt}\kern6pt\vrule}\hrule}}
\def\bse{\begin{eqnarray*}}
\def\ese{\end{eqnarray*}}
\def\be{\begin{eqnarray}}
\def\ee{\end{eqnarray}}
\def\bsq{\begin{equation*}}
\def\esq{\end{equation*}}
\def\bq{\begin{equation}}
\def\eq{\end{equation}}
\def\var{\hbox{var}}
\def\cov{\hbox{cov}}
\def\trace{\hbox{trace}}
\def\wh{\widehat}
\def\wt{\widetilde}
\def\eff{_{\rm eff}}
\def\mR{\mathbb{R}}
\def\n{\nonumber}
\def\cov{\mbox{cov}}
\def\diag{\mbox{diag}}
\def\sumi{\sum_{i=1}^n}
\def\trans{^{\rm T}}
\def\ba{{\boldsymbol\alpha}}
\def\bg{{\boldsymbol\gamma}}
\def\A{{\bf A}}
\def\a{{\bf a}}
\def\B{{\bf B}}
\def\C{{\bf C}}
\def\c{{\bf c}}
\def\g{{\bf g}}
\def\r{{\bf r}}
\def\h{{\bf h}}
\def\b{{\bf b}}
\def\u{{\bf u}}
\def\I{{\bf I}}
\def\M{{\bf M}}
\def\U{{\bf U}}
\def\u{{\bf u}}
\def\v{{\bf v}}
\def\W{{\bf W}}
\def\X{{\bf X}}
\def\O{{\bf O}}
\def\o{{\bf o}}
\def\R{{\bf R}}
\def\x{{\bf x}}
\def\bSig{{\bf \Sigma}}
\def\bLam{{\bf \Lambda}}
\def\diag{\hbox{diag}}
\def\trace{\hbox{trace}}
\def\diag{\hbox{diag}}
\def\log{\hbox{log}}
\def\squarebox#1{\hbox to #1{\hfill\vbox to #1{\vfill}}}
\def\btheta{{\boldsymbol \theta}}
\def\bta{{\boldsymbol \eta}}
\def\0{{\bf 0}}
\def\mx{\mathcal X}
\def\var{\hbox{var}}
\def\cov{\hbox{cov}}
\def\trace{\hbox{trace}}
\def\pr{\hbox{pr}}
\def\wh{\widehat}
\def\wt{\widetilde}
\def\diag{\hbox{diag}}
\def\log{\hbox{log}}
\def\diag{\hbox{diag}}
\definecolor{green}{rgb}{0, 0.42, 0.25}
\newcommand{\blind}{1}
\begin{document}

\def\spacingset#1{\renewcommand{\baselinestretch}%
{#1}\small\normalsize} \spacingset{1}


\if1\blind
{
  \title{\bf Semiparametric Approach to Estimation of Marginal and Quantile Effects}
  \author{Seong-ho Lee$^1$, Yanyuan Ma$^1$, and Elvezio Ronchetti$^2$\\
    $^1$Department of Statistics, Pennsylvania State University, USA \\
    $^2$Research Center for Statistics and GSEM, University of Geneva, Switzerland}
  \date{}
  \maketitle
} \fi

\if0\blind
{
  \title{\LARGE\bf Semiparametric Approach to Estimation of Marginal and Quantile Effects}
  \date{}
  \maketitle
} \fi

\bigskip
\begin{abstract}
We consider a semiparametric generalized linear model and
study estimation of both marginal and quantile effects in
this model. We propose an approximate maximum likelihood
estimator, and rigorously establish the consistency, the asymptotic
normality, and
the semiparametric efficiency of our method in both the marginal effect
and the quantile effect estimation. Simulation studies are conducted
to illustrate the finite sample performance, and we apply the new tool
to analyze a Swiss non-labor income data and discover a new interesting
predictor.
\end{abstract}

\noindent%
{\it Keywords:}  generalized linear model, 
marginal effect, quantile effect, misspecification, robustness, semiparametric efficiency.
\vfill

\newpage
\spacingset{1.9} 

%
%

\section{Introduction}

Generalized linear models (GLMs) \citep{McCullaghNelder1989} are arguably the most
frequently used models in statistical research and application. With response
variable $Y\in\mR$ and covariates $\X\in\mR^p$, GLMs
  have the familiar form 
\be\label{eq:EF}
f_{Y\mid\X}(y,\x,\bb,\phi) =
\exp\left\{\frac{y\bb\trans\x-b(\bb\trans\x)}{a(\phi)}+c(y,\phi)\right\}, 
\ee
where $\bb\in\mR^p$ is an unknown regression coefficient vector, and $\phi$ is
typically an unknown scalar parameter, if it is present. Here, 
the functions $a(\cdot), b(\cdot)$, and $c(\cdot)$ have pre-specified
forms, and various choices of these functions lead to different GLMs
extensively studied in the statistical literature.

While a large body of the classic literature is mainly concerned with the
estimation and the inference of the model parameters $\bb$ and $\phi$ in
(\ref{eq:EF}), an equally important target of research is to
estimate marginal effect, say $\bxi$, defined as 
\be\label{eq:bxidef}
\bxi \equiv E\left\{\frac{\partial E(Y\mid\X)}{\partial\X}\right\}. 
\ee
The marginal effect is a quantity of important meaning in several fields, 
in particular econometrics, and is used to 
capture the average rate of change of the regression mean with respect
to the covariates \citep{greene2000econometric}. 

A complementary, less known but important quantity is
quantile effect, defined as 
\be\label{eq:btadef}
\bta_\tau \equiv E\left\{\frac{\partial Q_\tau(Y\mid\X)}{\partial \X}\right\}
\ee
for a continuous response variable $Y$,
where $\tau\in(0,1)$ is the quantile level and
$Q_\tau(Y\mid\x)$ is the $\tau$th conditional quantile of $Y$ given
$\x$. Here, we 
  restrict our attention to the continuous response case only
because the definition of  conditional quantile in the discrete case
is
  not universally agreed upon in the literature \citep{parzen2004}.
The quantile effect measures
the average rate of change in the conditional quantiles, 
and
can be equivalently written as
$E\{\bb Q_\tau'(Y\mid\bb\trans\X)\}$,
where $Q_\tau'(Y\mid\bb\trans\x)$ is the derivative of
$Q_\tau(Y\mid\bb\trans\x)$ with respect to $\bb\trans\x$.
Obviously, our consideration of  quantile effect 
is a natural consequence of considering
quantile regression, which is a key statistical tool widely used 
as an alternative to the traditional mean regression models.
Instead of modeling the conditional mean,
the quantile regression estimates the relationship
between the covariates and the response through the conditional quantile
$Q_\tau(Y\mid\bb\trans\x)$, and considering its corresponding
rate of change directly leads to the quantile effect defined in
(\ref{eq:btadef}). Comparing (\ref{eq:bxidef}) and
(\ref{eq:btadef}), it is clear that $\bxi$ and $\bta_\tau$ are in fact
marginal mean effect and marginal quantile effect respectively, while
we shorten the names to marginal effect and quantile
effect. 

Marginal effect has direct interpretation as
  measures of risk. 
For instance, in logistic regression, especially in the medical
literature, it is customary
to misinterpret odds ratios as measures of risk. Instead,  marginal
and  quantile 
effects represent changes in the probability of the occurrence of a binary event
with respect to given risk factors and have direct interpretation 
 on the 
probability scale. This point has been stressed in the medical
literature, e.g., by 
\cite{NortonDowdMaciejewski2019}. Marginal effect also captures how
strongly the mean treatment outcome relies on different covariates on
average. This further leads to the understanding on how average treatment
effect is affected by covariates. Treatment effect is directly linked
to causal inference which is currently one of the trending topics in
Statistics, as further confirmed by the 2021 Nobel Prize in Economics
based on the seminal paper by \cite{imbensangrist1994}.
In fact,
estimation of marginal effect has been studied by many earlier works
in the statistics literature, where the name ``average mean
derivatives estimation'' was adopted. For example, 
\cite{hardle1989investigating} systematically studied  the estimation
problem, 
and \cite{newey1993efficiency} studied semiparametric efficiency
properties 
of weighted average mean derivative estimators.
Likewise, quantile effect, formerly termed as
average quantile derivatives, 
has also been proposed and studied as an alternative to the marginal
effect by \cite{chaudhuri1997average}.
Following these earlier works,
different approaches to estimating marginal effect under various
settings have been proposed, see,
e.g.,  \cite{hristache2001direct}, \cite{cattaneo2010robust}, and \cite{cattaneo2013generalized}.
The estimation of marginal effect is also quite widespread in medical
economics concerning health outcomes.
For instance, a two-year study performed at the University of
Chicago focuses on marginal effect on expenditures and length of stays
in hospitals,  rather than
on parameter estimation of  GLM; see \cite{br2005}. For the same study,
\cite{ManningBasuMullahy2005} estimate marginal effect
on inpatient 
expenditures  
using a parametric family of distributions, a three parameter generalized Gamma
distribution, which is an extension of the GLMs. 
Marginal effect is also frequently the target of study in
economics. Parametric estimation
of marginal effect in microeconometrics
concerning the female labor force participation for 872 women from
Switzerland can be found in \cite{gerfin1996parametric} and \cite{kleiber2008applied}.
We will use this example as a benchmark to compare our nonparametric estimation
in Section \ref{sec:real}.
In a GLM, the marginal effect happens to be
identical to the regression coefficient $\bb$ multiplied by 
average conditional variance, i.e. 
$\bxi  =\bb E\left\{\var(Y\mid\bb\trans\X)\right\}$.
Similarly for the quantile effect, we have
$\bta_\tau = \bb E\left\{Q_\tau'(Y\mid\bb\trans\X)\right\}$.
This allows 
for their straightforward estimation. 
Standard statistical packages provide direct estimation of marginal effect under  GLMs,
including the {\bf margins} command in Stata and the
package {\bf mfx} in R \citep{Fernihough2019}.

However, despite its central role in the statistical literature and its wide
application,  GLM has its restrictions. Indeed,
 GLM prespecifies the three functions $a(\cdot), b(\cdot)$, and
$c(\cdot)$ hence is purely parametric, therefore the model is susceptible to
model misspecification. In addition, to facilitate
computation, the choices of the three functions are often out of
convenience, which further increases the chance of mis-modeling.
To take into account the possibly large impact of deviation from the
distributional assumptions on the
parameter estimates and on the corresponding inference based on  GLMs, 
a broad stream of the literature in the past decades has focused on robust methods
for the estimation of the parameters;
see e.g. 
\cite{stefanskietal86}, \cite{kunschetal89}, and \cite{cantoniandronchetti01} for a general class of optimally bounded influence estimators and tests for the parameters 
of  GLMs. The corresponding formulas for the estimation of
marginal and  quantile effects remain the same
as the parametric case, but with the robust estimates of the parameters.

Here we go one step further in relaxing
the parametric form of  GLMs and consider a
semiparametric generalized linear model (sGLM), where we assume
\be\label{eq:sglm}
f_{Y\mid\X}(y,\x,\bb,c)
=\frac{\exp\left\{y\bb\trans\x+c(y)\right\}}
{\int\exp\left\{y\bb\trans\x+c(y)\right\}d\mu(y)},
\ee
and leave $c(\cdot)$ unspecified. 
Here and in the following text, $\mu(\cdot)$ denotes the
  Lebesgue measure for a continuous 
  variable $y$ and the counting measure for a discrete $y$.
For identifiability, we fix $c(0)=0$ and for convenience, 
we assume each component of $\x$ to be centered so 
$E(\X)=\0$. Our focus is to estimate  marginal and  quantile effects 
of (\ref{eq:sglm}), i.e. $\bxi$ defined in (\ref{eq:bxidef}) and
$\bta_\tau$ in (\ref{eq:btadef}).

Model (\ref{eq:sglm}) is not entirely new, and has been proposed and
studied in 
\cite{luo2012proportional} and \cite{huangrathouz2012}
as a semiparametric proportional likelihood
ratio model. 
\cite{ning2017likelihood} studied  high dimensional issue in 
(\ref{eq:sglm}), while \cite{lin2021} studied the estimation of the
parameter $\bb$ in  (\ref{eq:sglm}).
Different from these existing
works, our main focus is not in the model parameter $\bb$, but in the
marginal effect $\bxi$ and the quantile effect $\bta_\tau$.


In this context, we develop an approximate
likelihood 
procedure to estimate both the model parameters $\bb$ and $c(\cdot)$
in \eqref{eq:sglm}, which allows us to
subsequently estimate the marginal effect $\bxi$ and the quantile
effect $\bta_\tau$. We show that despite of
the apparent difference of the estimation variability results from the
theoretical efficiency bounds, our
estimators are in fact semiparametricly efficient for both $\bxi$ and
$\bta_\tau$. 
 We point out that our method differs from \cite{lin2021} 
  primarily in how to estimate the nonparametric component $c(y)$,
  which in turn has significant influence on estimation of the model parameters
  and  functionals. Due to the
  special structure of  model  (\ref{eq:sglm}), kernel estimation
  for $c(y)$
  is difficult to implement without giving up the model information
and resorting to model-free estimation
such as the Nadaraya-Watson estimator, 
  hence the estimation of $c(y)$ in \cite{lin2021} is not ideal. 
  This leads to efficiency loss of the $\bb$ estimation in finite samples,
  and greatly affects the efficiency of the $\bxi$ and $\bta_\tau$ estimation.
  The efficiency gain from our method is demonstrated by simulation
  experiments 
  in Section \ref{sec:sim}.

Specifically, our main contributions to the literature
are the following.
First,  we define and consider estimating
 marginal effect and quantile effect in a semiparametric model, which
have been studied by many earlier works in Statistics under various
settings and 
are receiving more attention due to the recent
interest in  mean treatment effect or mean outcome estimation. 
Second, we
discover that in this specific problem, kernel estimation
may not be a good approach to estimate the function $c(y)$ while
B-spline based estimation fits naturally and has a
clear advantage.
 This 
is not usually the case in many semiparametric estimation problems.
Third, we provide a clear and transparent estimation procedure, which is
computationally simple and enjoys the advantage of convex
optimization. 
Fourth, we establish the optimality of the resulting marginal and quantile
effect estimators as functionals of the model parameters. This
optimality is not a straightforward result to 
obtain. It requires keen statistical insight to link two seemingly unrelevant
results and the right mathematical tools to achieve it.

The rest of the article is organized as the following. In Section
\ref{sec:main}, we provide the main methodological
results of our work and describe the estimation procedures for
both $\bxi$ and $\bta_\tau$. 
In Section \ref{sec:asym},
we establish the asymptotic properties of
our estimators, compare them to the theoretical efficiency bounds, and
show that they in fact reach the efficiency bounds. 
Simulation studies are conducted in Section \ref{sec:sim} to
illustrate the finite sample performance of our method in comparison
with existing methods. An interesting dataset concerning
Swiss non-labor income is analyzed in Section \ref{sec:real},
where our nonparametric analysis reveals new insights concerning model
choice, marginal effect,  and significant covariates that would remain hidden
if using a strict parametric approach.
We conclude our paper with a discussion in Section \ref{sec:dis}. All
the technical details are relegated to the supplementary material.

\section{Methodology}\label{sec:main}

\subsection{Efficiency Bound of Marginal Effect Estimation}\label{sec:xibound}

Before engaging ourselves in the estimation  and  inference of the
marginal effect $\bxi$ associated with the model (\ref{eq:sglm}), we
would like first to understand the limit of our endeavor by
establishing the efficiency bound of the $\bxi$ estimation. 
Let $v(\bb\trans\x)\equiv
E[\{Y-E(Y\mid\bb\trans\x)\}^2\mid\bb\trans\x]$. The derivation
detailed in Section \ref{sec:derive1} leads to 
the efficient influence function as
\bse
\bphi\eff(y,\x)
= \bb v(\bb\trans\x)-\bb E\{v(\bb\trans\X)\}
+\bb y^2+\M\x y+\a(y) 
-E\{\bb Y^2+\M\x Y+\a(Y)\mid\x\},
\ese
where $\a(y)$ satisfies 
\bse
&&E[E\{\a(Y) \mid \X\}\mid y]-\a(y) \n\\
&=& 2\bb E[ yE(Y\mid\X)-
    E\{YE(Y\mid\X) \mid \X\}\mid y]+\M E[\X\{ y - E( Y \mid \X)\}\mid y].
\ese
 and 
\bse
\M&=&\left(E\{v(\bb\trans\X)\}\I
-E\left[2\bb\X\trans Y v(\bb\trans\X) +\a(Y)
\{Y-E(Y\mid\bb\trans\X)\}\X\trans 
\right]\right) \n\\
&&\times\left[E\{\X\X\trans v(\bb\trans\X)\}\right]^{-1},
\ese
Obviously, the variance of the efficient influence function,
i.e. $\var\{\bphi\eff(Y,\X)\}$, is the efficiency bound in
estimating $\bxi$.

\subsection{Efficiency Bound of Quantile Effect Estimation}\label{sec:btabound}
In Section \ref{sec:derive2}, we further derive the efficient influence function of $\bta_\tau$
under \eqref{eq:sglm}. 
Now for notational brevity, let 
$q(\nu)\equiv Q_\tau(Y\mid\nu)$, $q'(\nu)\equiv Q'_\tau(Y\mid\nu)$,
$q''(\nu)\equiv Q''_\tau(Y\mid\nu)$,
$\epsilon(\nu)\equiv\tau-I\{Y<q(\nu)\}$, 
and $\epsilon'(\nu)\equiv-\delta\{q(\nu)-Y\}q'(\nu)$.
The efficient influence function turns out to be
\bse
\bphi\eff(y,\x)
=\bb q'(\bb\trans\x)-\bb E\{q'(\bb\trans\X)\} +\M_1\x y+\a(y) - E\{\M_1\x Y+\a(Y)|\x\},
\ese
where $\a(y)$ satisfies 
\bse
E[E\{\a(Y) \mid \X\}\mid y]-\a(y)
=-\bb E\{r(y,\bb\trans\X)|y\} + \M_1 E[\X\{y-E(Y|\X)\}|y],
\ese
in which
\bse
r(Y,\nu)\equiv\frac{\epsilon(\nu) Y+\epsilon'(\nu)-\epsilon(\nu)[q(\nu)+q'(\nu)\nu+q'(\nu) c'\{q(\nu)\}]}{f\{q(\nu),\nu\}},
\ese
and
\bse
\M_1
&=&(E\{q'(\bb\trans\X)\}\I+\bb E\{\X\trans q''(\bb\trans\X)\}-E[\a(Y)\X\trans\{Y-E(Y|\X)\}])\\
&&\times[E\{\X\X\trans v(\bb\trans\X)\}]^{-1}.
\ese
Similar to the case for
estimating $\bxi$, the variance of the efficient influence function is
the efficiency bound for estimating $\bta_\tau$.

\subsection{Estimation Procedure}\label{sec:est}

We now propose an estimation procedure under the model \eqref{eq:sglm}.
We first consider the case where the response $Y$ is distributed
continuously.
In such case, we approximate $c(\cdot)$
of the conditional density $f_{Y\mid\X}(y,\x,\bb,c)$ by a B-spline
curve $\B(\cdot)\trans\bg$, 
where $\B(\cdot)\equiv\{B_1(\cdot),\dots,B_m(\cdot)\}\trans$ is a
B-spline basis vector 
and $\bg\equiv(\gamma_1,\dots,\gamma_m)\trans$ is an unknown
coefficient of the bases. 
Since we assume $c(0)=0$ for identifiability,
we ignore the first B-spline basis $B_1(\cdot)$ which corresponds to
the intercept of the curve, 
by letting $\gamma_1=0$.
Therefore, we replace $c(y)$ in \eqref{eq:sglm} by $\B(y)\trans\bg$
with $\gamma_1=0$, 
and estimate $\bb$ and $\bg$ through maximizing the approximate
loglikelihood 
\be\label{eq:logl}
l(\bb,\bg) \equiv \sumi \left[ y_i\bb\trans\x_i+\B(y_i)\trans\bg
-\log\int \exp \left\{ y\bb\trans\x_i+\B(y)\trans\bg \right\} d\mu(y) \right].
\ee
Note that the loglikelihood $l(\bb,\bg)$ is a concave
  function, hence the optimizer is unique and can be readily obtained
  using off-the-shelf convex optimizers. In our implementation, we
  used the built-in R function {\bf optim} for optimization,
  yet found the computation fast with
  satisfactory performance. For example, 
  in our real data analysis where the dataset contains $871$ observations
  and $7$ predictors,
  it takes less than 10 seconds
  to obtain the optimizer within the relative tolerance of $10^{-6}$.

Once we obtain $\wh\bb$ and $\wh c(\cdot)\equiv\B(\cdot)\trans\wh\bg$,
we can easily estimate the marginal effect $\bxi$ using \eqref{eq:bxidef} through 
\bse
\wh\bxi
&\equiv& \wh\bb\wh E\left\{\wh\var(Y\mid\bb\trans\X)\right\} \\
&=& \wh\bb \, 
n^{-1}\sumi\left[\int y^2 f_{Y\mid\X}(y,\x_i,\wh\bb,\wh c)d\mu(y)
-\left\{\int y f_{Y\mid\X}(y,\x_i,\wh\bb,\wh c)d\mu(y)\right\}^2\right].
\ese

Similarly, we estimate the quantile effect $\bta_\tau$ using \eqref{eq:btadef} through
\bse 
\wh\bta_\tau
&\equiv& \wh\bb\wh E\left\{\wh Q'_\tau(Y\mid\bb\trans\X)\right\} \\
&=& \wh\bb \,
n^{-1}\sumi
\frac{\tau\int_0^1 y f_{Y\mid\X}(y,\x_i,\wh\bb,\wh c)d\mu(y)
- \int_0^{q_i} y f_{Y\mid\X}(y,\x_i,\wh\bb,\wh c)d\mu(y)}
{f_{Y\mid\X}(q_i,\x_i,\wh\bb,\wh c)}
\bigg|_{q_i=\wh Q_\tau(Y\mid\bb\trans\x_i)},
\ese
where $\wh Q_\tau(Y\mid \bb\trans\x_i),i=1,\dots,n$,
the estimated conditional $\tau$th quantiles of $Y$ given $\x_i$,
are obtained by solving for $q_i$ from
\bse
\int_0^{q_i} f_{Y\mid\X}(y,\x_i,\wh\bb,\wh c)d\mu(y) = \tau.
\ese

When the response $Y$ is a categorical variable taking values  $\{0,...,m\}$,
then the goal is to estimate $\bb$ and $c(0),\dots,c(m)$. In this
case, (\ref{eq:sglm}) is a purely parametric model and we can proceed
with maximum likelihood estimation (MLE). An alternative way of
viewing this is that we replace $c(\cdot)$ by $\B(\cdot)\trans\bg$
where $\B(\cdot)\equiv\{I(\cdot=0),\dots,I(\cdot=m)\}\trans$ with $\gamma_1=0$,
and maximize the approximate loglikelihood given in \eqref{eq:logl}
with respect to $\bb$ and $\bg$.
Having obtained $\wh\bb$ and $\wh c(\cdot)$, we use 
\bse
\pr(Y=y|\x;\bb,c)
=\frac{\exp\{y\bb\trans\x+c(y)\}}{\sum_{y=0}^{m}\exp\{y\bb\trans\x+c(y)\}},
\ese
$y=0,\dots,m$ to estimate the marginal effect $\bxi$ by
\bse
\wh\bxi \equiv \wh\bb \,
n^{-1}\sumi\left[\sum_{y=0}^{m} y^2\, \pr(Y=y|\x_i;\wh\bb,\wh c)
-\left\{\sum_{y=0}^{m} y\, \pr(Y=y|\x_i;\wh\bb,\wh c)\right\}^2\right].
\ese
Because there is no generally accepted unique way of defining quantiles
for discrete data, we do not further study the quantile effect
estimation in this case.

\section{Theoretical Properties}\label{sec:asym}

We now establish the theoretical properties of our proposed estimators
for both the marginal effect and the quantile effect in
(\ref{eq:sglm}). 

\subsection{Continuous Response}\label{sec:asymcont}

First we analyze the properties of our estimators
under the continuous response case. We simplify
$d\mu(y)$ as $dy$ below.
To set notation, let
\bse
\fyx^*(y,\x,\bb,\bg)
&\equiv&
\frac{ \exp\{ y\bb\trans\x + \B(y)\trans\bg \} }
{ \int \exp\{ y\bb\trans\x + \B(y)\trans\bg \}dy }, \\
E^*\left\{ \g(Y) | \x,\bb,\bg \right\}
&\equiv&
\int \g(y) \fyx^*(y,\x,\bb,\bg) dy, \\
\var^*\left( Y | \x,\bb,\bg \right)
&\equiv&
E^*( Y^2 | \x,\bb,\bg ) - \left\{ E^*( Y | \x,\bb,\bg ) \right\}^2, \\
\cov^*\left\{ \g(Y), \h(Y) | \x,\bb,\bg \right\}
&\equiv&
E^*\left\{ \g(Y)\h(Y)\trans | \x,\bb,\bg \right\} \\
&&- E^*\left\{ \g(Y) | \x,\bb,\bg \right\}
E^*\left\{\h(Y)\trans | \x,\bb,\bg \right\},
\ese
where we use $E^*$ instead of $E$ to emphasize that the expectation is
computed under the approximate model, which has the same form as
(\ref{eq:sglm}), but with $c(\cdot)$ replaced by $\B\trans(\cdot)\bg$.

For real numbers $a_n$ and $b_n$,
$a_n\asymp b_n$ denotes $a_n=O(b_n)$ and $b_n=O(a_n)$ simultaneously. 
Similarly, for random variables $A_n$ and $B_n$,
$A_n\asymp_p B_n$ denotes $A_n=O_p(B_n)$ and $B_n=O_p(A_n)$ simultaneously.
For a vector $\a=(a_1,\dots,a_d)\trans\in\mR^d$,
we denote the $l_p$-norm of $\a$ as
$\|\a\|_p\equiv(|a_1|^p+\dots+|a_d|^p)^{1/p}$, $1\leq p\leq\infty$.
For a matrix $\A\in\mR^{r\times c}$,
we denote the induced $l_p$-norm of $\A$ as
$\|\A\|_p\equiv\sup_{\u\in\mR^c,\|\u\|_p=1}\|\A\u\|_p$, $1\leq p\leq\infty$.
For a function $g(\cdot)$ in the $L^2$ space, 
we denote its $L_p$-norm as
$\lVert g(\cdot) \rVert_p \equiv \{ \int_0^1 |g(y)|^p dy \}^{1/p}$.
We denote the set of the $q$th order smooth functions as
$C^q([0,1])\equiv\{g:g^{(q)}\in C([0,1])\}$.

To facilitate the theoretical derivation, 
we view the estimation procedure described in Section
\ref{sec:est} alternatively as a profile procedure. 
Specifically, treating $\bb$ as a fixed parameter,
estimate $c(\cdot)$ by the spline curve $\wh c(\cdot,\bb)\equiv\B(\cdot)\trans\wh\bg(\bb)$
via maximizing the approximate loglikelihood
\bse
l(\bb,\bg) =
\sumi \left( y_i\bb\trans\x_i+\B(y_i)\trans\bg
-\log \left[ \int_0^1 \exp \{ y\bb\trans\x_i+\B(y)\trans\bg \} dy \right] \right)
\ese
with respect to $\bg$. Then estimate $\bb$ by maximizing
\bse
\wh l(\bb) \equiv
\sumi \left( y_i\bb\trans\x_i+\wh c(y_i,\bb)
-\log \left[ \int_0^1 \exp \{ y\bb\trans\x_i+\wh c(y,\bb) \} dy \right] \right).
\ese
We point out that profiling and
performing maximization jointly with respect to $\bb$ and $\bg$ yield
the same result. 

In the following, we first aim at proving the convergence property of
$\wh c(y,\bb)$ when $\bb=\bb_0$, 
where we let $\bb_0$ denote the regression coefficient of the true
conditional density.  

We assume the following regularity conditions.

\begin{enumerate}[label=(C\arabic*),ref=(C\arabic*),start=1]
\item\label{con:bdd}
The function $c(\cdot) \in C^q([0,1])$ where $q\geq 1$ and $c(0)=0$.
The true conditional density of $Y$ given $\X$, $f_{Y|\X}(y|\x)$,
has a compact support $[0,1]$, is positive and bounded on its support.
The marginal density of $\X$, $f_{\X}(\x)$, 
has compact support $\mx$
and is bounded on its support.
\item\label{con:order}
The spline order $r\geq q$.
\item\label{con:knots}
Define the knots
$t_{-r+1}=\dots=t_0=0 < t_1 < \dots < t_N < 1 = t_{N+1} = \dots = t_{N+r}$
where N is the number of interior knots
and [0,1] is divided into $(N+1)$ subintervals.
Let $m=N+r$.
N satisfies $N \xrightarrow{}\infty$,
$N^{-q}n^{1/2} \xrightarrow{} 0$,
and $N^{-1}n(\log n)^{-1} \xrightarrow{} \infty$ as $n \xrightarrow{} \infty$.
\item\label{con:space}
Let $h_p$ be the distance between the $p$th and $(p+1)$th interior knots,
$h = \max_{r\leq p\leq N+r}h_p$ and $h' = \min_{r\leq p\leq N+r}h_p$.
There exists a constant $C_h$ such that $0<C_h<\infty$ and $h/h'<C_h$.
Therefore, $h\asymp N^{-1}$ and $h'\asymp N^{-1}$.
\item\label{con:deboor}
\citep{db1978} Under Conditions (C1)-(C4),
there exists a spline coefficient $\bg_0$
with the first component $\gamma_{01}=0$ such that
\bse
\sup_{y\in [0,1]}\left| \B(y)\trans\bg_0 - c(y) \right| = O(h^q).
\ese
\item\label{con:bb}
$\bSig_{22}\equiv E[\var\{\B(Y)|\X\}]$ is invertible
and $\|\bSig_{22}\|_2^{-1}\bSig_{22}$ has all eigenvalues bounded above $C_l$, where $C_l>0$ is a constant.
\end{enumerate}

Condition \ref{con:bdd} imposes the smoothness of the functional
component $c(\cdot)$, 
and the boundedness of the densities involved in the model, which are
standard requirements. 
The compact support requirement of the densities is also a standard
requirement in the B-spline literature. 
Condition \ref{con:order} requires that the order of the B-spline basis is
sufficiently large 
for the B-spline curve to converge to the true function fast enough.
We further assume that there are an appropriate number of interior
knots by Condition \ref{con:knots},
and that the knots are uniformly distributed in the asymptotic sense by
Condition \ref{con:space}. 
Lastly, we point out that Condition \ref{con:deboor} does not further
impose any additional requirement. It is a direct
 result given Conditions \ref{con:bdd}-\ref{con:space}, 
and is only stated as a condition for convenience.

\begin{Pro} \label{pro:bg}
Under Conditions \ref{con:bdd}-\ref{con:bb},
$\|\wh\bg(\bb_0)-\bg_0\|_2=O_p\{(nh)^{-1/2}\}$ 
and
\bse
\wh\bg(\bb_0)-\bg_0
&=& \bSig_{22}^{-1}n^{-1}\sumi[\B(y_i)-E\{\B(Y)|\x_i\}] + \r_1,
\ese
where $\|\r_1\|_2=o_p\{(nh)^{-1/2}\}$. Furthermore,
\bse
\sup_{y\in[0,1]}|\wh c(y,\bb_0)-c(y)| = O_p\{(nh)^{-1/2}+h^q\}.
\ese
\end{Pro}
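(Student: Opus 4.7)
My plan is to treat $\wh\bg(\bb_0)$ as the unique zero of the score
\bsq
S(\bg)\equiv n^{-1}\sumi\bigl[\B(y_i)-E^*\{\B(Y)\mid\x_i,\bb_0,\bg\}\bigr],
\esq
whose existence and uniqueness follow from concavity of $l(\bb_0,\cdot)$, and to carry out a Taylor expansion about the de Boor coefficient $\bg_0$ of Condition \ref{con:deboor}. The corresponding negative Hessian is $H(\bg)\equiv n^{-1}\sumi\var^*\{\B(Y)\mid\x_i,\bb_0,\bg\}$, whose population counterpart at $\bg_0$ is close to $\bSig_{22}$.

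First, I would control the score at $\bg_0$ by splitting
\bsq
S(\bg_0)=\underbrace{n^{-1}\sumi[\B(y_i)-E\{\B(Y)\mid\x_i\}]}_{\tilde S}+\underbrace{n^{-1}\sumi[E\{\B(Y)\mid\x_i\}-E^*\{\B(Y)\mid\x_i,\bb_0,\bg_0\}]}_{\Delta}.
\esq
The deterministic bias $\Delta$ is $O(h^q)$ entrywise because $\B(y)\trans\bg_0-c(y)=O(h^q)$ uniformly and $E^*\{\B(Y)\mid\x,\bb_0,\bg_0\}$ is a smooth functional of the approximating density; the resulting contribution to $\wh\bg(\bb_0)-\bg_0$ is negligible against $(nh)^{-1/2}$ thanks to the under-smoothing condition $N^{-q}n^{1/2}\to 0$ in \ref{con:knots}. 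The stochastic part $\tilde S$ has mean zero with $E\|\tilde S\|_2^2=n^{-1}\tr(\bSig_{22})$, producing the leading term $\bSig_{22}^{-1}\tilde S$ in the final linearization.

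Next, a matrix empirical-process argument, exploiting that $\B(y)\B(y)\trans$ is banded with only $O(m)$ non-zero entries and each non-zero entry a bounded i.i.d.\ average, gives $\|H(\bg_0)-\bSig_{22}\|=o_p(1)$ in the operator norm rescaled by $\|\bSig_{22}\|_2$. Combined with the eigenvalue lower bound of Condition \ref{con:bb} and a local Lipschitz bound on $\bg\mapsto H(\bg)$, this yields the asymptotic linearization
\bsq
\wh\bg(\bb_0)-\bg_0=\bSig_{22}^{-1}\tilde S+\r_1,\qquad\|\r_1\|_2=o_p\{(nh)^{-1/2}\},
\esq
and hence $\|\wh\bg(\bb_0)-\bg_0\|_2=O_p\{(nh)^{-1/2}\}$. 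For the uniform statement I would decompose $\wh c(y,\bb_0)-c(y)=\B(y)\trans\{\wh\bg(\bb_0)-\bg_0\}+\{\B(y)\trans\bg_0-c(y)\}$, bound the second term by $O(h^q)$ via Condition \ref{con:deboor}, and bound the first by $\sup_y\|\B(y)\|_2\cdot\|\wh\bg(\bb_0)-\bg_0\|_2=O_p\{(nh)^{-1/2}\}$, since at any $y$ only $r+1$ basis functions are nonzero with values in $[0,1]$.

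The main obstacle will be the Hessian step, where the growing dimension $m\asymp h^{-1}$, the ill-conditioning of $\bSig_{22}$, and the rate conditions in \ref{con:knots} must interact delicately. The quadratic Taylor remainder scales like $\|H(\wt\bg)-H(\bg_0)\|\cdot\|\wh\bg(\bb_0)-\bg_0\|_2$; its negligibility at the $(nh)^{-1/2}$ level is exactly what the condition $N^{-1}n(\log n)^{-1}\to\infty$ buys, while $N^{-q}n^{1/2}\to 0$ dominates the deterministic bias from $\Delta$.
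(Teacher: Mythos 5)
Your overall route---splitting the score at $\bg_0$ into a mean-zero part $\tilde S$ and a de Boor bias $\Delta$, concentrating the Hessian around $\bSig_{22}$, inverting, and then decomposing $\wh c(y,\bb_0)-c(y)$ into $\B(y)\trans\{\wh\bg(\bb_0)-\bg_0\}$ plus the spline approximation error---is the same as the paper's. But your bound on the bias term is too loose to close the argument. An entrywise bound $\Delta=O(h^q)$ gives only $\|\Delta\|_2=O(m^{1/2}h^q)=O(h^{q-1/2})$, and after multiplying by $\bSig_{22}^{-1}$, whose operator norm is of order $h^{-1}$ (all eigenvalues of $\bSig_{22}$ are of order $h$ under \ref{con:bb}), the bias contribution to $\wh\bg(\bb_0)-\bg_0$ is only $O(h^{q-3/2})$; requiring this to be $o\{(nh)^{-1/2}\}$ amounts to $n^{1/2}h^{q-1}\to0$, which Condition \ref{con:knots} does not supply (take $q=1$). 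The fix is the one the paper uses: each entry of $\Delta$ is $\int B_k(y)\{\fyx-\fyx^*\}dy$, bounded by $\|\B(\cdot)\trans\e_k\|_1\sup|\fyx-\fyx^*|=O(h)\cdot O(h^q)=O(h^{q+1})$ because each basis function has $L_1$-norm of order $h$ (the paper's Lemma \ref{lem:bnorm}); equivalently $|\Delta\trans\u|\le O(h^q)\|\B(\cdot)\trans\u\|_1=O(h^{q+1/2})$ for unit $\u$. That extra factor of $h$ is exactly what makes $N^{-q}n^{1/2}\to0$ sufficient.

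The second place you must be careful is the rate of the leading term itself. Your identity $E\|\tilde S\|_2^2=n^{-1}\trace(\bSig_{22})$ is correct, but since $\trace(\bSig_{22})\asymp mh\asymp 1$ it gives $\|\tilde S\|_2\asymp_p n^{-1/2}$, and the crude bound $\|\bSig_{22}^{-1}\|_2\,\|\tilde S\|_2=O_p(n^{-1/2}h^{-1})$ overshoots the target $(nh)^{-1/2}$ by a factor $h^{-1/2}$; so the linearization alone does not deliver the stated $\ell_2$ rate by operator-norm arithmetic. The paper handles this by working with $\sup_{\|\u\|_2=1}|\tilde S\trans\u|$ and the per-direction variance $E[\var\{\B(Y)\trans\u\mid\X\}]\asymp h$, and for the sup-norm statement on $\wh c$ the local support of the B-splines (only $r$ bases active at each $y$, so $\B(y)\trans\bSig_{22}^{-1}\B(y)=O(h^{-1})$) is what keeps the pointwise variance at $O\{(nh)^{-1}\}$. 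You should make explicit how you pass from the fixed-direction rate to the norm bound, rather than leaving it to ``a matrix empirical-process argument''; note also that $\var\{\B(Y)\mid\x\}$ is not banded---the subtracted term $E\{\B(Y)\mid\x\}E\{\B(Y)\trans\mid\x\}$ is a full rank-one matrix whose operator norm is of the same order $h$ as $\bSig_{22}$ itself, so it cannot be discarded in your bandedness argument.
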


Proposition \ref{pro:bg} states that given the true regression
coefficient $\bb_0$, 
$\wh\bg$ converges to the B-spline basis coefficient $\bg_0$
at the nonparametric convergence rate $(nh)^{-1/2}$,
hence the B-spline curve approximates $c(\cdot)$.
We further obtain that
the estimator $\wh c(\cdot,\bb_0)$ converges to the true function
$c(\cdot)$ at the $(nh)^{-1/2}$ rate as well. 
Based on Proposition \ref{pro:bg},
below we further establish the asymptotic property 
of the estimator $\wh\bb$. Note $\bb$ is not our
research interest, hence this property is stated as a by-product.
We first impose one additional regularity condition, which is a
standard requirement.

\begin{enumerate}[label=(C\arabic*),ref=(C\arabic*),start=7]
\item\label{con:bSig}
The expectation of the conditional covariance of $\{\X\trans Y,
\B\trans(Y)\}\trans$ given $\X$, i.e. 
\bse
\bSig
\equiv \begin{bmatrix}
\bSig_{11} & \bSig_{12} \\ \bSig_{21} & \bSig_{22}
\end{bmatrix}
\equiv E\begin{bmatrix}
\X\X\trans \var(Y|\X) & \X\cov\{Y,\B(Y)|\X\} \\
\cov\{\B(Y),Y|\X\}\X\trans & \var\{\B(Y)|\X\}
\end{bmatrix},
\ese
is invertible.
\end{enumerate}

It is easy to see that
the conditional covariance of $\{\X\trans Y, \B\trans(Y)\}\trans$
given $\X$ is positive semidefinite. 
Thus, $\bSig$ is also positive semidefinite.
Condition \ref{con:bSig} further requires that $\bSig$ is positive
definite, which is very mild. 
Note that this guarantees
$\bSig_{11}$ and the Schur complement of $\bSig_{22}$,
$\bSig^*\equiv \bSig_{11} - \bSig_{12}\bSig_{22}^{-1}\bSig_{21}$,
are both positive definite.

\begin{Pro} \label{pro:bb}
Under Conditions \ref{con:bdd}-\ref{con:bSig},
$\|\wh\bb-\bb_0\|_2=O_p(n^{-1/2})$ and
\bse
\wh\bb-\bb_0
&=& \bSig^{*-1}n^{-1}\sumi\x_i\{y_i-E(Y|\x_i)\} \\
&&-
\bSig^{*-1}\bSig_{12}\bSig_{22}^{-1}n^{-1}\sumi[\B(y_i)-E\{\B(Y)|\x_i\}]
+ \r_2,
\ese
where $\|\r_2\|_2=o_p(n^{-1/2})$.
Furthermore,
\bse
\bSig^{*1/2}\sqrt{n}(\wh\bb-\bb_0)\to N(\0_p,\I_p)
\ese
in distribution as $n\to\infty$.
\end{Pro}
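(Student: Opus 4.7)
The plan is to exploit the equivalence between the profile estimator and joint maximization mentioned in the text, so that $\wh\bb$ and $\wh\bg\equiv\wh\bg(\wh\bb)$ jointly satisfy the score equations
\bse
\0 = n^{-1}\sumi \x_i\left\{ y_i - E^*(Y\mid\x_i,\wh\bb,\wh\bg)\right\},\qquad
\0 = n^{-1}\sumi \left\{ \B(y_i) - E^*\!\left(\B(Y)\mid\x_i,\wh\bb,\wh\bg\right)\right\}.
\ese
First I would do a Taylor expansion of both score equations around $(\bb_0,\bg_0)$. The first-order terms are the sample scores $n^{-1}\sumi \x_i\{y_i-E(Y\mid\x_i)\}$ and $n^{-1}\sumi\{\B(y_i)-E\{\B(Y)\mid\x_i\}\}$, up to a negligible bias induced by the spline approximation $\sup_y|\B(y)\trans\bg_0-c(y)|=O(h^q)$ given by Condition \ref{con:deboor}; this bias is of order $h^q = o(n^{-1/2})$ by Condition \ref{con:knots}. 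The Hessian blocks, by the usual exponential-family identities, are negatives of $n^{-1}\sumi\x_i\x_i\trans\var^*(Y\mid\x_i)$, $n^{-1}\sumi\x_i\cov^*\{Y,\B(Y)\mid\x_i\}$, and $n^{-1}\sumi\var^*\{\B(Y)\mid\x_i\}$, which converge in probability to $\bSig_{11}$, $\bSig_{12}$, and $\bSig_{22}$ respectively under Condition \ref{con:bSig}.

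Next I would solve the linearized system via Schur complement. From the second equation, using Proposition \ref{pro:bg} together with a sub-expansion at $(\wh\bb,\bg_0)$, I obtain
\bse
\wh\bg - \bg_0 = \bSig_{22}^{-1}n^{-1}\sumi\bigl[\B(y_i)-E\{\B(Y)\mid\x_i\}\bigr] - \bSig_{22}^{-1}\bSig_{21}(\wh\bb-\bb_0) + \r,
\ese
where $\|\r\|_2=o_p(n^{-1/2})$. Substituting into the first equation and grouping the $(\wh\bb-\bb_0)$ terms produces
\bse
\bSig^*(\wh\bb-\bb_0) = n^{-1}\sumi\x_i\{y_i-E(Y\mid\x_i)\} - \bSig_{12}\bSig_{22}^{-1}n^{-1}\sumi\bigl[\B(y_i)-E\{\B(Y)\mid\x_i\}\bigr] + \r_2,
\ese
with $\bSig^* = \bSig_{11}-\bSig_{12}\bSig_{22}^{-1}\bSig_{21}$ positive definite by Condition \ref{con:bSig}. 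Inverting $\bSig^*$ gives the stated Bahadur-type representation.

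The $\sqrt{n}$-rate then follows from the central limit theorem applied to the i.i.d.\ sum $\x_i\{y_i-E(Y\mid\x_i)\}-\bSig_{12}\bSig_{22}^{-1}\{\B(y_i)-E\{\B(Y)\mid\x_i\}\}$. A direct variance calculation shows that the variance of this influence function equals $\bSig^*$ itself, because the second summand is the $L^2$-projection of the first onto $\mathrm{span}\{\B(Y)-E(\B(Y)\mid\X)\}$; hence the sandwich matrix collapses to $\bSig^{*-1}\bSig^*\bSig^{*-1}=\bSig^{*-1}$, yielding $\sqrt{n}\bSig^{*1/2}(\wh\bb-\bb_0)\to N(\0_p,\I_p)$.

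The principal technical obstacle is controlling the remainder uniformly in the growing dimension $m = N+r \to\infty$ of the nuisance coefficient $\bg$. Specifically, the second-order Taylor remainder in each score equation is a quadratic form in $(\wh\bb-\bb_0,\wh\bg-\bg_0)$ whose coefficients involve sums of products of $\x$ and $\B(Y)$; to control it I would use Proposition \ref{pro:bg} which gives $\|\wh\bg(\bb_0)-\bg_0\|_2=O_p\{(nh)^{-1/2}\}$ together with the spacing Conditions \ref{con:knots}--\ref{con:space} so that $\|\B(\cdot)\|_\infty$ and the operator norms of $\bSig_{22}^{-1}$ remain controlled (e.g., $\|\bSig_{22}^{-1}\|_2=O(h^{-1})$). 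Combining with $h^q\sqrt n\to 0$ shows the quadratic remainder is $o_p(n^{-1/2})$, justifying the linearization.
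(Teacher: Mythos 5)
Your route is, up to algebra, the paper's route: the paper profiles out $\bg$ and Taylor-expands the profile score $dl\{\bb,\wh\bg(\bb)\}/d\bb$, whose Hessian is exactly the Schur complement you obtain by block elimination of the joint score system (the paper itself uses your joint-score version verbatim in the discrete-response case, Proposition \ref{pro:catbbbg}). Your identification of the spline bias as $O(h^q)=o(n^{-1/2})$, the elimination yielding $\bSig^*=\bSig_{11}-\bSig_{12}\bSig_{22}^{-1}\bSig_{21}$, and the observation that the influence function is the residual of $\X\{Y-E(Y\mid\X)\}$ after projection onto $\mathrm{span}[\B(Y)-E\{\B(Y)\mid\X\}]$, so that the sandwich collapses to $\bSig^{*-1}$, are all correct and match the paper's computations (the projection remark is in fact cleaner than what the paper writes down).

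The one genuine gap is that you never establish consistency, let alone the $n^{-1/2}$ rate, of $\wh\bb$ before linearizing, and your argument is circular at that point: the Taylor expansion of the scores around $(\bb_0,\bg_0)$ evaluated at $(\wh\bb,\wh\bg)$ has intermediate-point Hessians and quadratic remainders that you can only control after you know $\|\wh\bb-\bb_0\|_2$ and $\|\wh\bg(\wh\bb)-\bg_0\|_2$ are small, yet you propose to read the rate of $\wh\bb$ off the CLT applied to the linearized representation. Proposition \ref{pro:bg} only gives you the rate of $\wh\bg(\bb_0)$, not of $\wh\bg(\wh\bb)$. The paper fills this hole with a separate localization step: it shows that $l\{\bb_0,\wh\bg(\bb_0)\}$ exceeds the profile likelihood on the sphere $\|\bb-\bb_0\|_2=Cn^{-1/2}$ with high probability (so a local maximizer exists in the shrinking ball), and then shows the profile Hessian $d^2l\{\bb,\wh\bg(\bb)\}/d\bb d\bb\trans$ is negative definite for \emph{all} $\bb$ (as the Schur complement of the globally negative definite joint Hessian), so that this local maximizer is the global one, i.e.\ $\wh\bb$ itself; only then does it linearize. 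You should add this step (or an equivalent concavity-based argument) before invoking your expansion; the rest of your plan, including the remainder bookkeeping via $\|\bSig_{22}^{-1}\|_2\asymp h^{-1}$, $\|\bSig_{12}\|_2=O(h^{1/2})$ and $h^q n^{1/2}\to0$, then goes through as in the paper.
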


Proposition \ref{pro:bb} establishes how the regression coefficient estimator
$\wh\bb$ is asymptotically distributed, 
and allows to perform inference
since the asymptotic variance of $\wh\bb$ is estimable based on
$\wh\bb$ and $f_{Y\mid\X}(y,\x,\wh\bb,\wh c)$. 
Below, we study how the estimator of the marginal effect $\bxi$
is asymptotically distributed in Theorem \ref{th:xi}.
We point out that the asymptotic variance of $\wh\bxi$ is estimable as well
so that we can perform inference on the marginal effect.

\begin{Th}\label{th:xi}
Let $\bSig_\bxi\equiv
\A\bSig^{-1}\A\trans+\bb_0\bb_0\trans\var\{\var(Y|\X)\}$, 
where $\A\equiv[\A_1, \A_2]$ and
\bse
\A_1 &\equiv& E\{\var(Y|\X)\}\I + \bb_0E[\{Y-E(Y|\X)\}^3\X\trans], \\
\A_2 &\equiv& \bb_0 E(\{Y-E(Y|\X)\}^2[\B(Y)-E\{\B(Y)|\X\}])\trans.
\ese
Under Conditions \ref{con:bdd}-\ref{con:bSig},
\bse
\bSig_\bxi^{-1/2}\sqrt{n}(\wh\bxi-\bxi_0)\to N(\0_p, \I_p)
\ese
in distribution as $n\to\infty$.
\end{Th}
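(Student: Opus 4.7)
The plan is to expand $\wh\bxi-\bxi_0$ linearly in three estimation errors: $\wh\bb-\bb_0$, $\wh\bg-\bg_0$, and the empirical-minus-population expectation of the conditional variance. Writing $\wh\bxi=\wh\bb\cdot\wh m$ with $\wh m\equiv n^{-1}\sumi\var^*(Y|\x_i,\wh\bb,\wh\bg)$ and $\bxi_0=\bb_0 m_0$ with $m_0\equiv E\{\var(Y|\X)\}$, the leading decomposition is
\bse
\wh\bxi-\bxi_0=m_0(\wh\bb-\bb_0)+\bb_0(\wh m-m_0)+(\wh\bb-\bb_0)(\wh m-m_0),
\ese
and the last term is $o_p(n^{-1/2})$ because both factors are $O_p(n^{-1/2})$ by Proposition~\ref{pro:bb} and the analysis of $\wh m$ below.

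For $\wh m-m_0$, I split as $\wh m-m_0=(\wh m-\wt m)+(\wt m-m_0)$, where $\wt m\equiv n^{-1}\sumi\var^*(Y|\x_i,\bb_0,\bg_0)$. The second piece is an i.i.d.\ sample mean of $\var(Y|\x_i)-m_0$ with asymptotic variance $\var\{\var(Y|\X)\}$, which will eventually supply the $\bb_0\bb_0\trans\var\{\var(Y|\X)\}$ part of $\bSig_\bxi$. For the first piece I perform a joint Taylor expansion of $\var^*(Y|\x;\bb,\bg)$ about $(\bb_0,\bg_0)$. Direct score differentiation under \eqref{eq:sglm} yields
\bse
\partial_\bb\var^*(Y|\x)&=&\x E^*\{(Y-E^*(Y|\x))^3|\x\},\\
\partial_\bg\var^*(Y|\x)&=&\cov^*\{(Y-E^*(Y|\x))^2,\B(Y)|\x\},
\ese
obtained via the identity $\cov(g(Y),Y^2)-2E(Y)\cov(g(Y),Y)=\cov(g(Y),(Y-E(Y))^2)$. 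Averaging these derivatives over $i$ and invoking a uniform law of large numbers, the coefficients attached to $\wh\bb-\bb_0$ and $\wh\bg-\bg_0$ converge to the population expectations defining $\A_1$ (its second summand) and $\A_2$. The bias from replacing $c$ by the B-spline projection $\B\trans\bg_0$ is $O(h^q)=o(n^{-1/2})$ under Conditions~\ref{con:deboor} and \ref{con:knots}.

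Plugging the linear representations of Propositions~\ref{pro:bb} and \ref{pro:bg} into the Taylor expansion, and collecting $m_0(\wh\bb-\bb_0)$ (which produces the $m_0\I_p$ part of $\A_1$) together with $\bb_0(\wt m-m_0)$, I arrive at
\bse
\sqrt{n}(\wh\bxi-\bxi_0)=\A_1\sqrt{n}(\wh\bb-\bb_0)+\A_2\sqrt{n}(\wh\bg-\bg_0)+\bb_0\sqrt{n}(\wt m-m_0)+o_p(1).
\ese
The joint influence function of $(\wh\bb,\wh\bg)$ has covariance $\bSig^{-1}$ by the block-matrix identity underlying Propositions~\ref{pro:bb}-\ref{pro:bg}, so the first two terms contribute $\A\bSig^{-1}\A\trans$. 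A short conditioning argument shows the cross-covariance with $\bb_0(\wt m-m_0)$ vanishes: each component of the joint score has conditional mean zero given $\X$, while $\wt m-m_0$ depends on $\X$ only. Hence the limiting covariance equals $\A\bSig^{-1}\A\trans+\bb_0\bb_0\trans\var\{\var(Y|\X)\}=\bSig_\bxi$, and the multivariate CLT delivers $\sqrt n(\wh\bxi-\bxi_0)\to N(\0_p,\bSig_\bxi)$ in distribution as $n\to\infty$.

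The main obstacle is controlling the Taylor remainder in $\wh m-\wt m$ despite the slow nonparametric rate $\|\wh\bg-\bg_0\|_2=O_p\{(nh)^{-1/2}\}$. The quadratic remainder contribution requires $(nh)^{-1}=o(n^{-1/2})$, which is precisely what Condition~\ref{con:knots} provides; one also needs a uniform-in-$\x$ bound for the second derivatives of $\var^*(Y|\x;\bb,\bg)$ on shrinking neighborhoods of $(\bb_0,\bg_0)$, which follows from the compact support and density bounds in Condition~\ref{con:bdd} combined with the standard B-spline norm equivalence between $\|\B(\cdot)\trans\bg\|_\infty$ and $\|\bg\|_\infty$.
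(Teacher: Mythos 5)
Your overall architecture matches the paper's: the same three-term decomposition of $\wh\bxi-\bxi_0$, the same derivative formulas for $\var^*$ with respect to $\bb$ and $\bg$, the same identification of the three contributions to $\bSig_\bxi$, and the same conditioning argument for the vanishing cross-covariances. However, your treatment of the remainder coming from the nonparametric part has a genuine gap. You claim the quadratic Taylor remainder in $\bg$ is controlled because ``$(nh)^{-1}=o(n^{-1/2})$, which is precisely what Condition \ref{con:knots} provides.'' Condition \ref{con:knots} does not provide this: it only requires $N\gg n^{1/(2q)}$ and $N\ll n/\log n$, so with $h\asymp N^{-1}$ the quantity $(nh)^{-1}\asymp N/n$ can be as large as nearly $(\log n)^{-1}$, far bigger than $n^{-1/2}$ (take $N=n^{2/3}$, say). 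A uniform $O(1)$ bound on the second derivatives of $\var^*$, which is what your closing paragraph offers, therefore does not make the remainder $o_p(n^{-1/2})$.

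What actually rescues the argument --- and what the paper's proof is built around --- is that every derivative of $\var^*$ in the $\bg$-direction pairs with a B-spline vector whose relevant norm is small: Lemma \ref{lem:bnorm} and \eqref{eq:evb} give $\sup_{\x}\|\partial\var^*(Y|\x,\bb,\bg)/\partial\bg\|_2=O(h^{1/2})$, hence $\|\A_2\|_2=O(h^{1/2})$. The paper then uses only a first-order mean-value expansion (no quadratic remainder at all): the error in the $\bg$-gradient at the intermediate point is $o_p(h^{1/2})$, and multiplied by $\|\wh\bg(\wh\bb)-\bg_0\|_2=O_p\{(nh)^{-1/2}\}$ this gives $o_p(n^{-1/2})$; likewise the residual term $\A_2\r_3=O(h^{1/2})\,o_p\{(nh)^{-1/2}\}=o_p(n^{-1/2})$. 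Without the $O(h^{1/2})$ bound on $\A_2$ you also cannot justify that $\A\bSig^{-1}\A\trans$ stays bounded, since $\|\bSig_{22}^{-1}\|_2\asymp h^{-1}$. You need to establish this locality bound (or an equivalent) for your argument to close; the rest of your outline is sound.
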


To facilitate the analysis of the properties of the quantile effect
estimator $\wh\bta_\tau$, 
we assume two additional regularity conditions.

\begin{enumerate}[label=(C\arabic*),ref=(C\arabic*),start=8]
\item\label{con:q}
$c''(\cdot)$ is bounded on $[0,1]$.
\item\label{con:order2}
The spline order $r\geq2$.
\end{enumerate}

Note that for an arbitrary twice differentiable function $g(\cdot)$,
$\|g'(\cdot)\|_\infty\le2(\|g(\cdot)\|_\infty\|g''(\cdot)\|_\infty)^{1/2}$
by the Landau-Kolmogorov inequality.
Setting $g(\cdot)=\B(\cdot)\trans\bg_0-c(\cdot)$,
Conditions \ref{con:deboor}, \ref{con:q}, and \ref{con:order2} guarantee
$\|\B'(\cdot)\trans\bg_0-c'(\cdot)\|_\infty=O(h^{q/2})$, i.e.
$\B'(\cdot)\trans\bg_0$ converges to $c'(\cdot)$ uniformly at the
rate $O(h^{q/2})$. 
Theorem \ref{th:bta} below provides
the asymptotic normality of the quantile effect estimator $\wh\bta_\tau$.

\begin{Th}\label{th:bta}
Let $\bSig_{\bta_\tau}\equiv\C\bSig^{-1}\C\trans+\bb_0\bb_0\trans\var\{q'(\X\trans\bb_0)\}$,
where $\C\equiv[\C_1,\C_2]$ and
\bse
\C_1&\equiv&E\{q'(\X\trans\bb_0)\}\I+\bb_0E\Bigg[\X\trans\Bigg\{\frac{ E([\tau-I\{Y\leq q(\X\trans\bb_0)\}]Y^2|\X)}{\fyx\{q(\X\trans\bb_0)|\X\}}\n\\
&&-2q'(\X\trans\bb_0)q(\X\trans\bb_0)-\{q'(\X\trans\bb_0)\}^2[\X\trans\bb_0+c'\{q(\X\trans\bb_0)\}]\}],\\
\C_2&\equiv&\bb_0E\Bigg\{\frac{E([\tau-I\{Y\leq q(\X\trans\bb_0)\}]Y\B\trans(Y)|\X)}{\fyx\{q(\X\trans\bb_0)|\X\}}\\
&&-q'(\X\trans\bb_0)\B\trans\{q(\X\trans\bb_0)\}
-\frac{E([\tau-I\{Y\leq q(\X\trans\bb_0)\}]\B\trans(Y)|\X)}{\fyx\{q(\X\trans\bb_0)|\X\}}\\
&&\times[q(\X\trans\bb_0)+q'(\X\trans\bb_0)\X\trans\bb_0+q'(\X\trans\bb_0)c'\{q(\X\trans\bb_0)\}]\}.
\ese
Under Conditions \ref{con:bdd}-\ref{con:order2},
\bse
\bSig_{\bta_\tau}^{-1/2}\sqrt{n}(\wh\bta_\tau-\bta_{\tau0})\to N(\0_p, \I_p)
\ese
in distribution as $n\to\infty$.
\end{Th}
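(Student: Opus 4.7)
The plan is to derive an asymptotic linear representation
$\wh\bta_\tau-\bta_{\tau 0}=n^{-1}\sumi\bphi(Y_i,\X_i)+o_p(n^{-1/2})$
and then apply the Cram\'er--Wold device with the multivariate Lindeberg--L\'evy CLT, following the same blueprint as the proof of Theorem \ref{th:xi} but with the mean functional replaced by the quantile functional. Since $\wh\bta_\tau=\wh\bb\,\wh E\{\wh Q'_\tau(Y\mid\bb\trans\X)\}$ and $\bta_{\tau 0}=\bb_0 E\{q'(\bb_0\trans\X)\}$, I first decompose
\bse
\wh\bta_\tau-\bta_{\tau 0}
=(\wh\bb-\bb_0)E\{q'(\bb_0\trans\X)\}
+\bb_0(\wh E-E)\{q'(\bb_0\trans\X)\}
+\bb_0\wh E\bigl\{\wh Q'_\tau(Y\mid\bb\trans\X)-q'(\bb_0\trans\X)\bigr\}
+o_p(n^{-1/2}),
\ese
where the cross term $(\wh\bb-\bb_0)[\wh E\{\wh Q'_\tau\}-E\{q'\}]$ has been absorbed into the remainder by consistency of the quantile-derivative empirical mean. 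The first summand is handled by Proposition \ref{pro:bb}; the second is already an i.i.d.\ sample mean. The core task is the linearization of the third bracket.

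The second step is to differentiate the implicit quantile equation
$\int_0^{\wh q_i}f_{Y\mid\X}(y,\x_i,\wh\bb,\wh c)\,dy=\tau$
at $(\bb_0,c)$ by the implicit function theorem. This yields a directional expansion of $\wh q_i-q(\bb_0\trans\x_i)$ that is linear in $\wh\bb-\bb_0$ and $\wh c-c$, with denominator $f_{Y\mid\X}\{q(\bb_0\trans\x_i),\x_i,\bb_0,c\}$. Writing the estimator for $\wh Q'_\tau$ as $\{\tau\int y f-\int_0^{\wh q_i} y f\}/f(\wh q_i,\cdot)$ and differentiating each factor produces a parallel expansion involving $q$, $q'$, and $c'$ evaluated at the quantile. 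When the $\wh c-c$ direction is traced through, it produces exactly the kernel $r(Y,\nu)$ appearing in Section \ref{sec:btabound}, which is reassuring: the efficient influence function is being matched by construction.

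The third step substitutes the linear expansions for $\wh\bg-\bg_0$ and $\wh\bb-\bb_0$ from Propositions \ref{pro:bg} and \ref{pro:bb}, yielding a single sum of i.i.d.\ summands indexed by $i$. Collecting the coefficients in front of $\X_i\{Y_i-E(Y\mid\X_i)\}$ and $\B(Y_i)-E\{\B(Y)\mid\X_i\}$ should produce precisely the rows $\C_1$ and $\C_2$ multiplied by the corresponding blocks of $\bSig^{-1}$, while the empirical-mean piece $(\wh E-E)\{q'(\bb_0\trans\X)\}$ contributes $\bb_0\bb_0\trans\var\{q'(\bb_0\trans\X)\}$. Since the first two pieces are orthogonal to the third (the inner scores are mean-zero conditional on $\X$), the three contributions assemble into the covariance $\C\bSig^{-1}\C\trans+\bb_0\bb_0\trans\var\{q'(\bb_0\trans\X)\}=\bSig_{\bta_\tau}$, and the CLT delivers the stated convergence.

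The main obstacle will be controlling the remainder. Quadratic Taylor contributions of the quantile map produce $\|\wh\bb-\bb_0\|^2$, $\|\wh c-c\|_\infty^2$, and their cross product, all of which are $o_p(n^{-1/2})$ by Propositions \ref{pro:bg}--\ref{pro:bb} under the rate conditions of \ref{con:knots} (in particular $N^{-q}n^{1/2}\to 0$ and $N^{-1}n(\log n)^{-1}\to\infty$). Empirical-process fluctuations from replacing $c$ by $\wh c$ inside integrals such as $\wh E\{\wh Q'_\tau(\cdot)\}$ can be handled by uniform convergence on a Donsker class of B-spline-indexed functions. The most delicate pieces are those involving $\wh c'(\cdot)=\B'(\cdot)\trans\wh\bg$, whose deterministic bias is only $O(h^{q/2})$ by the Landau--Kolmogorov inequality invoked after \ref{con:order2} and whose stochastic fluctuation is of order $(nh^3)^{-1/2}$. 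One must verify that $\wh c'$ enters the remainder only through products where this slower rate, combined with $h^qn^{1/2}\to0$, still yields a $o_p(n^{-1/2})$ bound; this is precisely what \ref{con:knots} together with \ref{con:order2} was designed to ensure.
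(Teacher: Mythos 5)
Your proposal follows essentially the same route as the paper's proof: the same three-part decomposition (a $(\wh\bb-\bb_0)$ term, the linearization of the estimated quantile derivative in $(\wh\bb,\wh\bg)$, and the i.i.d.\ empirical mean of $q'(\bb_0\trans\X)$), with the implicit-function-theorem differentiation of the quantile equation being exactly the explicit computation of $\partial q^{*\prime}/\partial\bb$ and $\partial q^{*\prime}/\partial\bg$ carried out in the paper, followed by substitution of the joint expansion of $(\wh\bb,\wh\bg(\wh\bb))$, the conditional-orthogonality argument for assembling $\bSig_{\bta_\tau}$, and the CLT. Your remainder analysis, including the $O(h^{q/2})$ bias of $\B'(\cdot)\trans\bg_0$ via Landau--Kolmogorov under \ref{con:q}--\ref{con:order2} and the observation that $\wh c'$ enters only where $o_p(1)$ control suffices, matches the paper's treatment.
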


The asymptotic properties we established, 
especially the estimation variances of $\wh\bxi$ and $\wh\bta_\tau$,
have very different forms from the efficiency bounds we derived in
Sections \ref{sec:xibound} and \ref{sec:btabound}. Nevertheless, 
closer inspection, together with some basic but less frequently
adopted linear algebra tools reveal that these two sets of results
have much closer connections, and  
the efficiency bounds in estimating both $\bxi$ and $\bta_\tau$ are
actually reached by  our B-spline based approximate maximum likelihood
estimators. As a by-product, 
we also state the efficiency property of $\wh\bb$ as a proposition,
even though our interest is not in $\bb$.

\begin{Pro}\label{pro:effbb}
Under Conditions \ref{con:bdd}-\ref{con:bSig}, 
the approximate maximum likelihood estimator $\wh\bb$ is efficient.  
\end{Pro}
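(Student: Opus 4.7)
The plan is to establish efficiency of $\wh\bb$ by showing that the influence function extracted from Proposition \ref{pro:bb} agrees asymptotically with the semiparametric efficient influence function for $\bb$ in model (\ref{eq:sglm}). I would first characterize the efficient score via the standard recipe of projecting the $\bb$-score onto the orthogonal complement of the nuisance tangent space generated by $c(\cdot)$, then recognize the B-spline-based influence function in Proposition \ref{pro:bb} as a finite-dimensional approximation of this efficient score, and finally pass to the limit as $N\to\infty$.

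\textbf{Step 1 (Efficient score).} Differentiating the log-likelihood of (\ref{eq:sglm}) with respect to $\bb$ gives the parametric score $S_\bb(Y,\X)=\X\{Y-E(Y\mid\X)\}$. For a perturbation $c_\epsilon=c+\epsilon a$ with $a(0)=0$, the score is $a(Y)-E\{a(Y)\mid\X\}$, so the nuisance tangent space is $\Lambda=\{a(Y)-E\{a(Y)\mid\X\}:a\in L^2\}$. Let $\a^*(Y)-E\{\a^*(Y)\mid\X\}$ denote the $L^2$-projection of $S_\bb$ onto $\Lambda$, where the vector-valued $\a^*$ satisfies the normal equation $E\{\X\,\cov(Y,b(Y)\mid\X)\}=E\{\cov(\a^*(Y),b(Y)\mid\X)\}$ for every scalar $b\in L^2$. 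The efficient score is $S_\bb^{\mathrm{eff}}=S_\bb-\a^*(Y)+E\{\a^*(Y)\mid\X\}$ with efficient information $I^{\mathrm{eff}}=E\{S_\bb^{\mathrm{eff}}(S_\bb^{\mathrm{eff}})\trans\}$ and efficiency bound $(I^{\mathrm{eff}})^{-1}$.

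\textbf{Step 2 (B-spline matching).} Restricting the projection to $\mathrm{span}\{\B(y)\}$ and writing the candidate as $\A\B(y)$, the normal equation collapses to $E[\X\,\cov\{Y,\B(Y)\trans\mid\X\}]=\A\,E[\var\{\B(Y)\mid\X\}]$, i.e.\ $\bSig_{12}=\A\bSig_{22}$, so $\A=\bSig_{12}\bSig_{22}^{-1}$. Direct computation then shows the variance of the resulting ``B-spline efficient score'' $\X\{Y-E(Y\mid\X)\}-\bSig_{12}\bSig_{22}^{-1}\{\B(Y)-E\{\B(Y)\mid\X\}\}$ equals $\bSig^{*}=\bSig_{11}-\bSig_{12}\bSig_{22}^{-1}\bSig_{21}$. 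Comparing this with the expansion in Proposition \ref{pro:bb}, the influence function of $\wh\bb$ is precisely $\bSig^{*-1}$ times this B-spline efficient score, and the asymptotic variance of $\sqrt{n}(\wh\bb-\bb_0)$ is $\bSig^{*-1}$, the inverse of the information obtained by restricting the nuisance projection to $\mathrm{span}\{\B\}$.

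\textbf{Step 3 (Passing to the limit and conclusion).} It remains to show $\bSig^{*-1}=(I^{\mathrm{eff}})^{-1}$ asymptotically. By Condition \ref{con:knots} the number of interior knots $N\to\infty$ with mesh $h\to0$, and by Condition \ref{con:deboor} together with Condition \ref{con:bdd} the spline spaces are dense in the relevant smoothness class and hence in $L^2([0,1])$ under the marginal law of $Y$. Standard Hilbert-space continuity of the projection operator then yields $\bSig_{12}\bSig_{22}^{-1}\B(Y)\to\a^*(Y)$ in $L^2$, so the B-spline efficient score converges to $S_\bb^{\mathrm{eff}}$ and $\bSig^{*}\to I^{\mathrm{eff}}$. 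Combined with Proposition \ref{pro:bb}, the asymptotic variance of $\sqrt{n}(\wh\bb-\bb_0)$ attains the efficiency bound, establishing Proposition \ref{pro:effbb}. The main obstacle is the last convergence: one must control uniformly in $N$ how well the B-spline-restricted projection approximates the true $L^2$-projection, which uses the de Boor approximation rate from Condition \ref{con:deboor} together with the boundedness of the densities in Condition \ref{con:bdd}; the bookkeeping between finite-$N$ B-spline quantities and their infinite-dimensional limits is the delicate part of the argument.
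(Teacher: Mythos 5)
Your proposal is correct and follows essentially the same route as the paper: both characterize the efficient score by projecting $\X\{Y-E(Y\mid\X)\}$ onto the nuisance tangent space generated by perturbations of $c(\cdot)$, identify the B-spline-restricted projection coefficient as $\bSig_{12}\bSig_{22}^{-1}$ so that the influence function in Proposition \ref{pro:bb} is $\bSig^{*-1}$ times a finite-dimensional efficient score, and then let the spline space grow. The only cosmetic difference is in the last step: the paper writes the true solution as $\a_0(y)=\Lambda\B(y)+O(h^q)$ and tracks an explicit $O(h^{q-1/2})$ discrepancy to get $E(\S\eff^{\otimes2})-\bSig^*=o(1)$, whereas you invoke density of the spline spaces and continuity of $L^2$ projections, which is the same calculation packaged as a sieve argument.
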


\begin{Th}\label{th:effxi}
Under Conditions \ref{con:bdd}-\ref{con:bSig}, 
the estimator $\wh\bxi$ based on 
the approximate maximum likelihood estimator $\wh\bb$ and $\wh
c(\cdot)$ is efficient.   
\end{Th}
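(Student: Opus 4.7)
The plan is to prove efficiency by matching the asymptotic variance $\bSig_\bxi$ from Theorem \ref{th:xi} with $\var\{\bphi\eff(Y,\X)\}$, where $\bphi\eff$ is the efficient influence function derived in Section \ref{sec:xibound}. Rather than compare the two variance expressions term-by-term, I would write out the asymptotically linear representation of $\wh\bxi-\bxi_0$ in the form $n^{-1/2}\sumi\boldsymbol{\psi}^{(m)}(y_i,\x_i)+o_p(1)$, and then argue that $\boldsymbol{\psi}^{(m)}\to\bphi\eff$ in $L^2$ as the spline dimension $m$ grows, which yields the identification of variances.

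First, I would extract this representation by decomposing
\bsq
\wh\bxi-\bxi_0 = (\wh\bb-\bb_0)E\{v(\bb_0\trans\X)\} + \bb_0\{\bar v-E v\} + \bb_0 n^{-1}\sumi\{\wh\var(Y\mid\x_i,\wh\bb,\wh c)-v(\bb_0\trans\x_i)\} + o_p(n^{-1/2}),
\esq
where $\bar v\equiv n^{-1}\sumi v(\bb_0\trans\x_i)$, then Taylor-expanding the third term around $(\bb_0,c_0)$ and substituting the linearizations of $\wh\bb-\bb_0$ and $\wh\bg(\bb_0)-\bg_0$ from Propositions \ref{pro:bb} and \ref{pro:bg}. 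Using that $\partial v/\partial\bb$ and $\partial v/\partial\bg$ evaluated at truth are $\X E[\{Y-E(Y\mid\X)\}^3\mid\X]$ and $E[\{Y-E(Y\mid\X)\}^2\{\B(Y)-E(\B(Y)\mid\X)\}\mid\X]$, respectively, the influence function becomes
\bsq
\boldsymbol{\psi}^{(m)}(y,\x) = \bb_0\{v(\bb_0\trans\x)-E v\} + \A\bSig^{-1}\begin{pmatrix}\x\{y-E(Y\mid\x)\}\\ \B(y)-E\{\B(Y)\mid\x\}\end{pmatrix},
\esq
with $\A=[\A_1,\A_2]$ as defined in Theorem \ref{th:xi}, whose variance is exactly $\bSig_\bxi$.

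Next, I would match $\boldsymbol{\psi}^{(m)}$ to $\bphi\eff$. The term $\bb_0\{v(\bb_0\trans\x)-E v\}$ appears in both. For the remaining piece, I would apply the Schur complement identity to $\bSig^{-1}$ to separate the contributions of $\x\{y-E(Y\mid\x)\}$ and $\B(y)-E\{\B(Y)\mid\x\}$, yielding effective coefficients $\A_1-\A_2\bSig_{22}^{-1}\bSig_{21}$ and $\A_2\bSig_{22}^{-1}$, respectively. Comparing with the formula for $\bphi\eff$ in Section \ref{sec:xibound}, it suffices to show that the former equals $\M$, and that the latter generates $\bb_0 y^2+\a(y)-E\{\bb_0 Y^2+\a(Y)\mid\x\}$ in the limit $m\to\infty$. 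The first identity follows by direct algebra from the closed-form expression for $\M$. The second is more subtle: I would reinterpret the integral equation defining $\a(y)$ as the $L^2$ normal equation for projecting $\M\X Y+\bb_0 Y^2$ onto the subspace of conditionally-mean-zero functions of $Y$, and then identify $\A_2\bSig_{22}^{-1}\{\B(y)-E(\B(Y)\mid\x)\}$ as the Galerkin approximation of this projection in the B-spline subspace.

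The main obstacle will be controlling the Galerkin approximation error at a rate compatible with $n^{-1/2}$ under Condition \ref{con:knots}. This requires proving invertibility with a uniformly bounded inverse of the integral operator $T\a(y)\equiv\a(y)-E[E\{\a(Y)\mid\X\}\mid y]$ restricted to the relevant subspace of functions of $Y$, and establishing standard Galerkin convergence using the spline approximation bound in Condition \ref{con:deboor}. Once these ingredients are in place, $\boldsymbol{\psi}^{(m)}\to\bphi\eff$ in $L^2$, so $\bSig_\bxi=\var\{\bphi\eff(Y,\X)\}$, which matches the efficiency bound and establishes the efficiency of $\wh\bxi$.
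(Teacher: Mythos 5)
Your high-level strategy---identify $\bSig_\bxi$ with $\var\{\bphi\eff(Y,\X)\}$ by representing the relevant piece of the efficient influence function in the spline basis---is the same one the paper uses, and your ``Galerkin'' reading of the step (test the integral equation for $\a$ against $\B\trans(Y)$ and take expectations) is an accurate description of what the paper's proof actually does. But the execution has concrete gaps. First, the block decomposition is miscomputed: with $\u=\x\{y-E(Y\mid\x)\}$ and $\v=\B(y)-E\{\B(Y)\mid\x\}$, the block-inverse formula gives
\[
\A\bSig^{-1}\begin{pmatrix}\u\\ \v\end{pmatrix}
=(\A_1-\A_2\bSig_{22}^{-1}\bSig_{21})\bSig^{*-1}\u
+\bigl[\A_2\bSig_{22}^{-1}-(\A_1-\A_2\bSig_{22}^{-1}\bSig_{21})\bSig^{*-1}\bSig_{12}\bSig_{22}^{-1}\bigr]\v,
\]
so the coefficient of $\u$ carries an extra factor $\bSig^{*-1}$ and the coefficient of $\v$ has a cross term you have dropped; neither of your ``effective coefficients'' is correct as stated. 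Second, and more importantly, the claim that the $\u$-coefficient equals $\M$ ``by direct algebra from the closed-form expression for $\M$'' cannot be carried out as an isolated step: the formula for $\M$ in Section \ref{sec:xibound} contains $E[\a(Y)\{Y-E(Y\mid\bb\trans\X)\}\X\trans]$, so $\M$ is only known once $\a$ is, while $\a$ is defined through an integral equation that itself involves $\M$. The two identifications are coupled and must be solved jointly. This joint solution is the heart of the paper's proof: it posits $\bb y^2+\a(y)=\bLam\B(y)+O(h^q)$, substitutes into the defining equations, multiplies by $\B\trans(y)$ and takes expectations to obtain a linear system whose solution is $\bLam=\U\equiv(\A_1\bSig_{11}^{-1}\bSig_{12}-\A_2)(\bSig_{21}\bSig_{11}^{-1}\bSig_{12}-\bSig_{22})^{-1}$ and hence $\M=(\A_1-\U\bSig_{21})\bSig_{11}^{-1}+O(h^{q-1/2})$; note $\U$ is not $\A_2\bSig_{22}^{-1}$ and $\M$ is not $\A_1-\A_2\bSig_{22}^{-1}\bSig_{21}$. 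That step is absent from your proposal.

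Third, you aim at the strictly stronger statement $\boldsymbol{\psi}^{(m)}\to\bphi\eff$ in $L^2$, which is what forces you to prove invertibility of the operator $T\a(y)=\a(y)-E[E\{\a(Y)\mid\X\}\mid y]$ with uniformly bounded inverse plus a Galerkin rate---the obstacle you flag but do not resolve. The paper avoids this entirely by matching only the quadratic forms: it shows $[\M,\U]\bSig[\M,\U]\trans=\A\bSig^{-1}\A\trans+O(h^{q-1/2})$ via the block-inverse identity, and the only invertibility needed is that of $\bSig_{22}-\bSig_{21}\bSig_{11}^{-1}\bSig_{12}$ with $\|(\bSig_{22}-\bSig_{21}\bSig_{11}^{-1}\bSig_{12})^{-1}\|_2=O(h^{-1})$, which follows from Conditions \ref{con:bb} and \ref{con:bSig}. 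Either switch to matching variances as the paper does, or supply the missing operator-theoretic ingredients; as written, the proof is not complete.
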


\begin{Th}\label{th:effbta}
Under Conditions \ref{con:bdd}-\ref{con:order2}, 
the estimator $\wh\bta_\tau$ based on 
the approximate maximum likelihood estimator $\wh\bb$ and $\wh
c(\cdot)$ is efficient.
\end{Th}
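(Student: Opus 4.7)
The plan is to show that the asymptotic variance $\bSig_{\bta_\tau}$ in Theorem \ref{th:bta} equals the efficiency bound $\var\{\bphi\eff(Y,\X)\}$ derived in Section \ref{sec:btabound}, parallel to the strategy for Theorem \ref{th:effxi}. Rather than manipulating the two variance expressions directly, I would show that the asymptotic linear expansion of $\wh\bta_\tau$ yields an influence function that, after reorganization, coincides term by term with $\bphi\eff$.

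First, I would extract the influence function of $\wh\bta_\tau$ from the proof of Theorem \ref{th:bta}. A delta-method decomposition of $\wh\bta_\tau = \wh\bb \, \wh E\{\wh Q_\tau'(Y|\wh\bb\trans\X)\}$ has three sources: the empirical average of $\bb_0\{q'(\bb_0\trans\x)-E[q'(\bb_0\trans\X)]\}$; the contribution of $\wh\bb-\bb_0$ propagated through $q'$ via Proposition \ref{pro:bb}; and the contribution of $\wh c - c$ propagated through $Q_\tau$ and its derivative, via the implicit-differentiation identity for quantiles, using Proposition \ref{pro:bg}. Collecting these pieces yields an influence function of the form
\begin{equation*}
\bb_0 q'(\bb_0\trans\x) - \bb_0 E\{q'(\bb_0\trans\X)\} + \widetilde\M\x y + \widetilde\a(y) - E\{\widetilde\M\x Y + \widetilde\a(Y)|\x\},
\end{equation*}
with $\widetilde\M$ a matrix built from $\C_1$, $\C_2$, and $\bSig$, and $\widetilde\a(\cdot)$ a linear combination of the B-spline basis with coefficients obtained by applying $\bSig_{22}^{-1}$ to a conditional-moment vector drawn from $\C_2$. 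This already has the structural form of $\bphi\eff$.

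Second, I would verify $\widetilde\M = \M_1$ and $\widetilde\a = \a$. Rewriting $\widetilde\a$ in functional form reveals that it solves the B-spline Galerkin discretization of the integral equation
\begin{equation*}
E[E\{\a(Y)|\X\}|y] - \a(y) = -\bb_0 E\{r(y,\bb_0\trans\X)|y\} + \M_1 E[\X\{y - E(Y|\X)\}|y]
\end{equation*}
characterizing $\a$. Under Conditions \ref{con:bdd}-\ref{con:order2}, the operator $\a \mapsto E[E\{\a(Y)|\X\}|\cdot] - \a(\cdot)$ is a compact perturbation of $-I$ on $L^2([0,1])$; existence and uniqueness of $\a$ then follow, and standard Galerkin convergence, together with the spline approximation rate $O(h^q)$ from Condition \ref{con:deboor}, ensures the B-spline solution converges to $\a$ in $L^2$. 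The identity $\widetilde\M = \M_1$ then follows from the first-order condition defining $\wh\bb$, once the Schur-complement structure implicit in $\C_1$ and $\C_2$ is rewritten using $\widetilde\a = \a$.

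The main obstacle is the second step: establishing $L^2$ convergence of the B-spline solution to $\a$ with error negligible at the $n^{-1/2}$ scale, so that the propagated error in the influence function of $\wh\bta_\tau$ is $o_p(n^{-1/2})$. This needs careful tracking of the $O(h^q)$ deterministic bias and $O_p\{(nh)^{-1/2}\}$ stochastic error from Proposition \ref{pro:bg}, together with boundedness of the inverse operator, which is secured by the invertibility of $\bSig_{22}$ in Condition \ref{con:bb}. Once the matching is complete, the influence function of $\wh\bta_\tau$ agrees with $\bphi\eff$, giving $\bSig_{\bta_\tau} = \var\{\bphi\eff(Y,\X)\}$ and establishing efficiency.
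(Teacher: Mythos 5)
Your plan is in substance the paper's own argument: the crux in both is that the spline quantities $(\C_1,\C_2,\bSig)$ realize a Galerkin discretization of the integral equation \eqref{eq:atau}--\eqref{eq:Mtau} defining $(\a,\M_1)$ (the paper obtains exactly your matrix system by multiplying \eqref{eq:atau} by $\B\trans(y)$ and taking expectations, yielding $\a(y)=\U\B(y)+O(h^{q-1/2})$ with $\U=(-\C_1\bSig_{11}^{-1}\bSig_{12}+\C_2)(\bSig_{22}-\bSig_{21}\bSig_{11}^{-1}\bSig_{12})^{-1}$), after which the Schur-complement form of $\bSig^{-1}$ gives $E([\M_1\X Y+\a(Y)-E\{\M_1\X Y+\a(Y)\mid\X\}]^{\otimes2})=\C\bSig^{-1}\C\trans+o(1)$; the paper simply runs the computation in the opposite direction, discretizing $\bphi\eff$ and matching its variance to $\bSig_{\bta_\tau}$ rather than ``continuizing'' the estimator's influence function. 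One calibration note on what you flag as the main obstacle: you do not need the influence-function match to hold to $o_p(n^{-1/2})$ --- Theorem \ref{th:bta} already supplies the CLT with variance $\bSig_{\bta_\tau}$, so efficiency only requires $\bSig_{\bta_\tau}-\var\{\bphi\eff(Y,\X)\}=o(1)$, which is exactly what the paper's $O(h^{q-1/2})$ bookkeeping delivers, whereas an $o_p(n^{-1/2})$ remainder is more than the available spline approximation rates provide under Conditions \ref{con:knots}--\ref{con:deboor}.
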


\subsection{Discrete Response}\label{sec:asymcat}

We now analyze the properties of our estimators
under the discrete response case.
For notational simplicity, we denote
$\btheta\equiv(\bb\trans,\bg\trans)\trans$, 
$\p(\x,\btheta)\equiv\{\pr(Y=1\mid\x;\bb,\c),\dots,\pr(Y=m\mid\x;\bb,\c)\}\trans$, 
and $\p_k(\x,\btheta)\equiv\{1^k\pr(Y=1\mid\x;\bb,\c),\dots,m^k
\pr(Y=m\mid\x;\bb,\c)\}\trans$. 
Also in this section, we use $\B(y)$ to denote a vector of indicator functions,
i.e., $\B(y)\equiv\{I(y=1),\dots,I(y=m)\}\trans$. Note that we allow
$m$ to grow with the sample size $n$. 
Here we present theoretical results when $m$ grows to infinity.
Note that in the finite $m$ case, the analysis can be done easily
through incorporating the classical maximum likelihood approach.
We first list a set of regularity conditions.

\begin{enumerate}[label=(D\arabic*),ref=(D\arabic*),start=1]
\item\label{con:catbdd}
The true conditional mass function of $Y$ given $\X$, $\pr(Y=y\mid\x;\btheta_0)$,
has a support set $\{0,\dots,m\}$.
$E\{\pr(Y=0\mid\X;\btheta_0)\}\leq1-\delta$ for some constant $0<\delta<1$
and $E(Y^4\mid\x)$ is bounded.
The marginal density of $\X$, $f_{\X}(\x)$, 
has compact support $\mx$ and is bounded on its support.
\item\label{con:catlipschitz}
There exist constants $L_k$ such that
$|E(Y^k\mid\x,\btheta^*)-E(Y^k\mid\x,\btheta_0)|\leq L_k\|\btheta^*-\btheta_0\|_2$
for $k=1,2,3$.
\item\label{con:catm}
$m\to\infty$, $n^{-1}m^3\to0$, and $E\{\|\p(\X,\btheta_0)\|_2^2\}\to0$ as $n\to\infty$.
\item\label{con:catvar22}
$\bSig_{22}\equiv E[\var\{\B(Y)\mid\X\}]$ is invertible.
$\|\bSig_{22}\|_2^{-1}\bSig_{22}$ has all eigenvalues bounded above a
constant $C_l>0$.
\item\label{con:catvar}
The expectation of the conditional covariance of $\{\X\trans Y,
\B\trans(Y)\}\trans$ given $\X$, i.e.
\bse
\bSig
\equiv \begin{bmatrix} \bSig_{11} & \bSig_{12} \\ \bSig_{21} & \bSig_{22} \end{bmatrix}
\equiv E\begin{bmatrix}
\X\X\trans \var(Y\mid\X) & \X\cov\{Y,\B(Y)\mid\X\} \\
\cov\{\B(Y),Y\mid\X\}\X\trans & \var\{\B(Y)\mid\X\} 
\end{bmatrix},
\ese
is invertible.
\end{enumerate}

Conditions \ref{con:catbdd} and \ref{con:catlipschitz} require
 boundedness and  Lipschitz continuity on the conditional moments of $Y$,
which are standard requirements.
Condition \ref{con:catm} requires $m$ to tend to infinity
at the rate slower than $n^{1/3}$,
and the mass function not to concentrate on a finite subset of the support.
Further, since $\bSig_{22}$ and $\bSig$ are positive semidefinite by their definitions,
the invertibility imposed by Conditions \ref{con:catvar22} and \ref{con:catvar} is very mild.
Lastly, we point out some results on $\bSig_{22}$ in Remark
\ref{rem:var22} below.

\begin{Rem}\label{rem:var22}
Note that the sum of the eigenvalues of $\bSig_{22}$ is of constant order by
Conditions \ref{con:catbdd} and \ref{con:catm}, because
$\trace(\bSig_{22})=E\left\{1-\pr(Y=0\mid\X;\btheta_0)-\|\p(\X,\theta_0)\|_2^2\right\}$.
Thus we get $\|\bSig_{22}\|_2\asymp m^{-1}$ and
$\|\bSig_{22}^{-1}\|_2\asymp m$ 
because the eigenvalues are of the same order by Condition \ref{con:catvar22}.
\end{Rem}

We now state the convergence rate and the asymptotic properties
of the estimators of the model parameters $\bb$ and $\bg$.
Our analysis shows that the estimator of the regression coefficient $\wh\bb$
achieves the parametric convergence rate
under the regularity conditions stated above.
Further, we establish the asymptotic distribution of $\wh\bb$
as a by-product, 
by which one can perform inference on the regression coefficient.
We formally state the results in Proposition \ref{pro:catbbbg} below.

\begin{Pro} \label{pro:catbbbg}
Under Conditions \ref{con:catbdd}-\ref{con:catvar},
$\|\wh\bb-\bb_0\|_2=O_p(n^{-1/2})$,
$\|\wh\bg-\bg_0\|_2=O_p(n^{-1/2}m^{1/2})$, and 
\bse
\begin{bmatrix} \wh\bb-\bb_0 \\ \wh\bg-\bg_0 \end{bmatrix}
=
\bSig^{-1} n^{-1}\sumi
\begin{bmatrix} \x_i\{y_i-E(Y\mid\x_i)\} \\ \B(y_i)-E\{\B(Y)\mid\x_i\} \end{bmatrix}
+
\begin{bmatrix} \r_1 \\ \r_2 \end{bmatrix},
\ese
where $\|\r_1\|_2=o_p(n^{-1/2})$ and $\|\r_2\|_2=o_p(n^{-1/2}m^{1/2})$.
Furthermore, let $\bSig_\bb\equiv(\bSig_{11} - \bSig_{12}\bSig_{22}^{-1}\bSig_{21})^{-1}$, then
\bse
\bSig_\bb^{-1/2}\sqrt{n}(\wh\bb-\bb_0)\to N(\0_p,\I_p)
\ese
in distribution as $n\to\infty$.
\end{Pro}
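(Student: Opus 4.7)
The plan is to mirror the argument used for Proposition \ref{pro:bb} in the continuous case, treating $\wh\btheta=(\wh\bb\trans,\wh\bg\trans)\trans$ as the joint maximizer of the concave loglikelihood (\ref{eq:logl}), but now with the indicator basis $\B(y)=\{I(y=1),\dots,I(y=m)\}\trans$ and with $m=o(n^{1/3})$ growing. Write the score at $\btheta_0$ as
\bse
\mathbf{S}_n(\btheta_0)=\sumi\begin{pmatrix}\x_i\{y_i-E(Y\mid\x_i)\}\\ \B(y_i)-E\{\B(Y)\mid\x_i\}\end{pmatrix},
\ese
and the negative Hessian as
\bse
\mathbf{H}_n(\btheta)=\sumi\begin{pmatrix}\x_i\x_i\trans\var^*(Y\mid\x_i,\btheta)&\x_i\cov^*\{Y,\B\trans(Y)\mid\x_i,\btheta\}\\ \cov^*\{\B(Y),Y\mid\x_i,\btheta\}\x_i\trans&\var^*\{\B(Y)\mid\x_i,\btheta\}\end{pmatrix}.
\ese
By Condition \ref{con:catvar}, $n^{-1}\mathbf{H}_n(\btheta_0)\to\bSig$ in probability in an appropriate sense, and by the Lipschitz Condition \ref{con:catlipschitz}, $\mathbf{H}_n(\btheta)-\mathbf{H}_n(\btheta_0)$ is negligible for $\btheta$ in a shrinking neighborhood of $\btheta_0$.

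To establish the rate, I would exploit the concavity of $l$ together with the block structure implied by Remark \ref{rem:var22}, which gives $\|\bSig_{22}\|_2\asymp m^{-1}$ and $\|\bSig_{22}^{-1}\|_2\asymp m$. Setting $\bdel=\btheta-\btheta_0$ and using the quadratic expansion $l(\btheta_0+\bdel)-l(\btheta_0)=\mathbf{S}_n(\btheta_0)\trans\bdel-\tfrac12\bdel\trans\mathbf{H}_n(\bar\btheta)\bdel$, a standard argument forces $\bdel\trans\bSig\bdel=O_p(n^{-1}(p+m))$. Translating this $\bSig$-norm bound through the block inversion formula yields $\|\wh\bb-\bb_0\|_2=O_p(n^{-1/2})$ for the Schur-complement block and $\|\wh\bg-\bg_0\|_2=O_p(n^{-1/2}m^{1/2})$ for the high-dimensional block, matching the stated rates.

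With the rate in hand, the first-order condition $\mathbf{S}_n(\wh\btheta)=\mathbf{0}$ together with a Taylor expansion of the score around $\btheta_0$ gives
\bse
\wh\btheta-\btheta_0=n^{-1}\bSig^{-1}\mathbf{S}_n(\btheta_0)+\text{remainder},
\ese
and applying the standard block-inversion identity
\bse
\bSig^{-1}=\begin{pmatrix}\bSig_\bb&-\bSig_\bb\bSig_{12}\bSig_{22}^{-1}\\ -\bSig_{22}^{-1}\bSig_{21}\bSig_\bb&\bSig_{22}^{-1}+\bSig_{22}^{-1}\bSig_{21}\bSig_\bb\bSig_{12}\bSig_{22}^{-1}\end{pmatrix}
\ese
with $\bSig_\bb=(\bSig_{11}-\bSig_{12}\bSig_{22}^{-1}\bSig_{21})^{-1}$ produces the displayed linear representation. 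Asymptotic normality of $\wh\bb-\bb_0$ then follows from the Lindeberg-Feller CLT applied to the finite-$p$-dimensional i.i.d. sum $\sumi\x_i\{y_i-E(Y\mid\x_i)\}-\bSig_{12}\bSig_{22}^{-1}\sumi[\B(y_i)-E\{\B(Y)\mid\x_i\}]$, whose variance converges to $\bSig_\bb^{-1}$; the required moment bounds are supplied by Conditions \ref{con:catbdd} and \ref{con:catm}.

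The main obstacle will be controlling the remainder as $m$ grows. A naive bound on the third-order term in the Taylor expansion of the score scales like $\|\wh\btheta-\btheta_0\|_2^2$ times the operator norm of the third derivative, which is roughly $O_p(n^{-1}m)\cdot O(m)=O_p(n^{-1}m^2)$. For its contribution to the $\wh\bb$ component to be $o_p(n^{-1/2})$ one needs $n^{-1/2}m^{3/2}\to 0$, i.e. $n^{-1}m^3\to 0$, which is exactly Condition \ref{con:catm}. Making this bookkeeping precise, and showing that Condition \ref{con:catvar} gives uniform invertibility of the Hessian in the right norm despite $m\to\infty$, will be the technically delicate portion of the proof.
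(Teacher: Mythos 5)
Your overall architecture (score and Hessian at $\btheta_0$, block inversion of $\bSig$, a triangular-array CLT for the $\bb$ component) matches the paper's, and your linear-representation and normality steps are sound. The genuine gap is in how you establish the preliminary rates. The isotropic bound $\bdel\trans\bSig\bdel=O_p\{n^{-1}(p+m)\}$ does not deliver the stated rates: writing $\bdel\trans\bSig\bdel=\bdel_\bb\trans(\bSig_{11}-\bSig_{12}\bSig_{22}^{-1}\bSig_{21})\bdel_\bb+(\bdel_\bg+\bSig_{22}^{-1}\bSig_{21}\bdel_\bb)\trans\bSig_{22}(\bdel_\bg+\bSig_{22}^{-1}\bSig_{21}\bdel_\bb)$, with the Schur complement having constant-order eigenvalues and $\lambda_{\min}(\bSig_{22})\asymp m^{-1}$ by Remark \ref{rem:var22} and Condition \ref{con:catvar22}, the most you can extract is $\|\bdel_\bb\|_2=O_p(n^{-1/2}m^{1/2})$ and $\|\bdel_\bg\|_2=O_p(n^{-1/2}m)$ --- each a factor $m^{1/2}$ worse than claimed. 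Nor does bootstrapping the linear expansion from these weaker preliminary rates close the gap under $m=o(n^{1/3})$: the remainder contribution to the $\bb$ block becomes $O_p(n^{-1}m^{5/2})$, which is $o_p(n^{-1/2})$ only if $m=o(n^{1/5})$.

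The paper avoids this by an anisotropic perturbation: it compares $l(\bb_0,\bg_0)$ with $l(\bb_0+n^{-1/2}\v_\bb,\bg_0+n^{-1/2}m^{1/2}\v_\bg)$, scaling the two blocks at their own rates. The ingredients that keep every term in the expansion $O(1)$ at these scales are (i) $\|\partial l(\bb_0,\bg_0)/\partial\bg\|_2=O_p(n^{1/2}m^{-1/2})$, obtained from the operator norm (not the trace) of the score covariance, (ii) $\|\bSig_{12}\|_2=O(m^{-1/2})$, proved from positive definiteness of $\bSig$ via $\u\trans\bSig_{21}\bSig_{11}^{-1}\bSig_{12}\u<\u\trans\bSig_{22}\u$, which tames the cross term, and (iii) convergence of the three Hessian blocks to $\bSig_{11}$, $\bSig_{12}$, $\bSig_{22}$ at rates $o_p(1)$, $o_p(m^{-1/2})$, $o_p(m^{-1})$, which is exactly where $n^{-1}m^3\to0$ enters --- not through a third-order Taylor term, which the paper never uses (it expands the score only to first order with the Hessian at an intermediate point, controlled by Condition \ref{con:catlipschitz} and Lemma \ref{lem:lipschitz}). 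You would need to replace your $\bSig$-ball step with this two-scale argument, or an equivalent profiling of $\bg$ given $\bb$, to obtain the parametric rate for $\wh\bb$ and the rate $n^{-1/2}m^{1/2}$ for $\wh\bg$.
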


Based on Proposition \ref{pro:catbbbg},
we further establish a theoretical result on the marginal effect estimator,
which states the convergence rate and the asymptotic distribution of $\wh\bxi$.
Theorem \ref{th:catxi} shows that $\wh\bxi$,
the functional of both the parametric and the nonparametric components,
achieves the parametric convergence rate.
We also provide the closed form of the asymptotic variance of $\wh\bxi$,
which can be used to infer the marginal effect of a population.

\begin{Th}\label{th:catxi}
Let $\bSig_\bxi\equiv
\A\bSig^{-1}\A\trans+\bb_0\bb_0\trans\var\{\var(Y\mid\X)\}$, 
where $\A\equiv[\A_1, \A_2]$ and
\bse
\A_1 &\equiv& E\{\var(Y\mid\X)\}\I + \bb_0E[\{Y-E(Y\mid\X)\}^3\X\trans], \\
\A_2 &\equiv& \bb_0 E(\{Y-E(Y\mid\X)\}^2[\B(Y)-E\{\B(Y)\mid\X\}])\trans.
\ese
Under Conditions \ref{con:catbdd}-\ref{con:catvar},
\bse
\bSig_\bxi^{-1/2}\sqrt{n}(\wh\bxi-\bxi_0)\to N(\0_p, \I_p)
\ese
in distribution as $n\to\infty$.
\end{Th}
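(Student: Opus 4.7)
The plan is to combine a Taylor expansion of the functional $\bxi(\btheta)\equiv\bb\,E\{\var(Y\mid\X,\btheta)\}$ around the truth with the linear representation of $\wh\btheta-\btheta_0=(\wh\bb\trans-\bb_0\trans,\wh\bg\trans-\bg_0\trans)\trans$ supplied by Proposition \ref{pro:catbbbg}. First, decompose
\bse
\wh\bxi-\bxi_0 &=& (\wh\bb-\bb_0)E\{\var(Y\mid\X)\}+\bb_0\big[n^{-1}\sumi\var(Y\mid\x_i,\btheta_0)-E\{\var(Y\mid\X)\}\big]\\
&&+\bb_0\cdot n^{-1}\sumi\{\var(Y\mid\x_i,\wh\btheta)-\var(Y\mid\x_i,\btheta_0)\}+R_{\text{cross}},
\ese
separating the parametric pieces (first and third terms) from the empirical-process fluctuation at the true $\btheta_0$ (second term); the leftover $R_{\text{cross}}=(\wh\bb-\bb_0)\cdot[n^{-1}\sumi\var(Y\mid\x_i,\btheta_0)-E\{\var(Y\mid\X)\}]$ is $O_p(n^{-1})$.

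Second, I Taylor-expand the third term in $\btheta$ around $\btheta_0$. The gradient computations exploit the exponential-family score identities $\partial\log\pr(Y=y\mid\x,\btheta)/\partial\bb=\{y-E(Y\mid\x)\}\x$ and $\partial\log\pr(Y=y\mid\x,\btheta)/\partial\bg=\B(y)-E\{\B(Y)\mid\x\}$, which yield
\bse
\frac{\partial\var(Y\mid\x,\btheta)}{\partial\bb}\bigg|_{\btheta_0}&=&\x\,E[\{Y-E(Y\mid\x)\}^3\mid\x],\\
\frac{\partial\var(Y\mid\x,\btheta)}{\partial\bg}\bigg|_{\btheta_0}&=&E[\{Y-E(Y\mid\x)\}^2\{\B(Y)-E\{\B(Y)\mid\x\}\}\mid\x].
\ese
Averaging over $i$ and pre-multiplying by $\bb_0$ reproduces exactly the matrices appearing in $\A_1$ and $\A_2$. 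Substituting the linear representation of Proposition \ref{pro:catbbbg} then delivers
\bse
\sqrt{n}(\wh\bxi-\bxi_0)&=&n^{-1/2}\sumi\Big\{\A\bSig^{-1}\bm{\psi}_i+\bb_0[\var(Y\mid\x_i,\btheta_0)-E\{\var(Y\mid\X)\}]\Big\}+o_p(1),
\ese
with $\bm{\psi}_i\equiv(\x_i\trans\{y_i-E(Y\mid\x_i)\},\B\trans(y_i)-E\{\B\trans(Y)\mid\x_i\})\trans$. Since $E(\bm{\psi}_i\mid\x_i)=\0$ while the second summand inside the brace is $\sigma(\x_i)$-measurable, the two iid contributions are uncorrelated, and the asymptotic variance decomposes as $\A\bSig^{-1}\A\trans+\bb_0\bb_0\trans\var\{\var(Y\mid\X)\}$, matching $\bSig_\bxi$. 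The multivariate Lindeberg CLT then produces asymptotic normality.

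The main technical obstacle is verifying that all remainders are $o_p(n^{-1/2})$ despite the slower rate $\|\wh\bg-\bg_0\|_2=O_p(n^{-1/2}m^{1/2})$. Quadratic Taylor remainders are of order $\|\wh\btheta-\btheta_0\|_2^2\cdot\|\text{Hessian}\|_2=O_p(n^{-1}m)$, which is $o(n^{-1/2})$ under Condition \ref{con:catm} ($m^3=o(n)$), once one shows that the Hessian entries of $\var(Y\mid\x,\btheta)$ are uniformly bounded; the indicator structure of $\B(y)$ and the bounded-moment Condition \ref{con:catbdd} make this verification routine. More delicate is the propagation of the residual $\r=(\r_1,\r_2)$ of Proposition \ref{pro:catbbbg} through $\A$: while $\A_1\r_1=o_p(n^{-1/2})$ is immediate because $\|\A_1\|_2=O(1)$, the term $\A_2\r_2$ cannot be controlled by the naive bound $\|\A_2\|_2\|\r_2\|_2=o_p(n^{-1/2}m^{1/2})$. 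I would exploit the rank-one factorization $\A_2=\bb_0\v\trans$, invoke the vanishing-mass condition $E\{\|\p(\X,\btheta_0)\|_2^2\}\to 0$ of Condition \ref{con:catm} to shrink the components of $\v$ corresponding to small conditional probabilities, and combine with the Schur-complement form of $\r_2$ inherited from the proof of Proposition \ref{pro:catbbbg} so that $\v\trans\r_2$ enjoys the required cancellation and satisfies $\|\A_2\r_2\|_2=o_p(n^{-1/2})$.
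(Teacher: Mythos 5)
Your overall architecture matches the paper's: the same three-term decomposition of $\wh\bxi-\bxi_0$, the same gradient identities for $\partial\var(Y\mid\x,\btheta)/\partial\bb$ and $\partial\var(Y\mid\x,\btheta)/\partial\bg$ reproducing $\A_1$ and $\A_2$, substitution of the linear representation from Proposition \ref{pro:catbbbg}, the orthogonality of the two influence-function pieces, and the CLT. The quadratic-remainder accounting via $m^3=o(n)$ is also in the spirit of the paper.

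However, there is a genuine gap at exactly the step you flag as delicate, and your proposed fix is not a proof. You are right that with only $\|\A_2\|_2=O(1)$ the bound $\|\A_2\|_2\|\r_2\|_2=o_p(n^{-1/2}m^{1/2})$ is insufficient, but the resolution is not a ``cancellation'' between $\A_2$ and the Schur-complement structure of $\r_2$, nor does it come from selectively shrinking components of $\v$ where conditional probabilities are small --- no such cancellation is available or needed. The correct observation is that $\|\A_2\|_2$ is itself $O(m^{-1/2})$, after which the naive product bound \emph{does} suffice: writing $\A_2=\bb_0 E(\W_1 W_2)\trans$ with $\W_1\equiv\B(Y)-E\{\B(Y)\mid\X\}$ and $W_2\equiv\{Y-E(Y\mid\X)\}^2$, one has for any unit vector $\u$
\bse
\{E(\W_1\trans\u\, W_2)\}^2
=\left[E\{\cov(\W_1\trans\u,W_2\mid\X)\}\right]^2
\le E\{\var(\W_1\trans\u\mid\X)\var(W_2\mid\X)\}
\le C\,\u\trans\bSig_{22}\u=O(m^{-1}),
\ese
using the uniform boundedness of $\var(W_2\mid\X)$ from Condition \ref{con:catbdd} and $\|\bSig_{22}\|_2\asymp m^{-1}$ from Remark \ref{rem:var22}. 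Hence $\|\A_2\|_2=O(m^{-1/2})$ and $\|\A_2\r_2\|_2=O(m^{-1/2})\,o_p(n^{-1/2}m^{1/2})=o_p(n^{-1/2})$. The same conditioning-plus-Cauchy--Schwarz device is also what controls the empirical fluctuation $n^{-1}\sumi E(\W_1W_2\mid\x_i)-E(\W_1W_2)$ at the rate $o_p(m^{-1/2})$, which your sketch does not address but which is needed so that the $\bg$-gradient average can replace its population counterpart without spoiling the $n^{-1/2}$ rate. Without the $O(m^{-1/2})$ bound on $\|\A_2\|_2$, your argument does not close.
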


We now show that our proposed estimators
$\wh\bb$ and $\wh\bxi$ for the discrete case also
achieve the efficiency bounds.
Although the asymptotic variances of $\wh\bb$ and $\wh\bxi$
established in Proposition \ref{pro:catbbbg} and 
in Theorem \ref{th:catxi} 
appear very different from the efficiency bounds derived
in Appendix \ref{proof:effbb} and Section \ref{sec:derive1},
our analysis shows that these two
seemingly different variance structures are actually identical.
Below, we formally state the efficiency of $\wh\bxi$ as Theorem
\ref{th:cateffxi}, 
and that of $\wh\bb$ as Proposition \ref{pro:cateffbb}.

\begin{Pro}\label{pro:cateffbb}
Under Conditions \ref{con:catbdd}-\ref{con:catvar}, 
the approximate maximum likelihood estimator $\wh\bb$ is efficient.  
\end{Pro}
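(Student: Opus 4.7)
The plan is to identify the inverse of the asymptotic variance $\bSig_\bb^{-1}=\bSig_{11}-\bSig_{12}\bSig_{22}^{-1}\bSig_{21}$ obtained in Proposition \ref{pro:catbbbg} with the efficient information bound for $\bb$ under the discrete-response semiparametric model (\ref{eq:sglm}). Because in the discrete case $Y\in\{0,\dots,m\}$ and the constraint $c(0)=0$ is imposed, the free components of the nuisance parameter are the values $c(1),\dots,c(m)$; equivalently, one may write $c(y)=\B(y)\trans\bg$ with $\B(y)=\{I(y=1),\dots,I(y=m)\}\trans$ and $\gamma_1$ free, etc. This makes the nuisance tangent space finite-dimensional (with dimension growing in $n$), so projection onto it can be computed in closed form rather than by solving an integral equation as in the continuous case.

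First I would compute the score for $\bb$ and identify the nuisance tangent space. Differentiating $\log f_{Y\mid\X}(y,\x,\bb,c)$ in $\bb$ gives the ordinary score $S_\bb(\X,Y)=\X\{Y-E(Y\mid\X)\}$. Varying $c$ along any submodel perturbation $h(\cdot)$ with $h(0)=0$ yields score contributions of the form $h(Y)-E\{h(Y)\mid\X\}$, and since every such $h$ is a linear combination of the indicators $I(\cdot=k)$ for $k=1,\dots,m$, the nuisance tangent space $\Lambda$ is the linear span of the components of $\B(Y)-E\{\B(Y)\mid\X\}$.

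Next I would compute the projection $\Pi(S_\bb\mid\Lambda)$. Because $\Lambda$ is spanned by the finitely many mean-zero random vectors $\B(Y)-E\{\B(Y)\mid\X\}$, and because $E[\{\B(Y)-E\{\B(Y)\mid\X\}\}\{\B(Y)-E\{\B(Y)\mid\X\}\}\trans]=\bSig_{22}$ is invertible by Condition \ref{con:catvar22}, the $L^2$-projection is
\begin{equation*}
\Pi(S_\bb\mid\Lambda)=\M[\B(Y)-E\{\B(Y)\mid\X\}],\qquad \M=\bSig_{12}\bSig_{22}^{-1}.
\end{equation*}
The efficient score is therefore $S\eff(\X,Y)=\X\{Y-E(Y\mid\X)\}-\bSig_{12}\bSig_{22}^{-1}[\B(Y)-E\{\B(Y)\mid\X\}]$, and a direct expansion using the definitions of the four blocks of $\bSig$ yields the efficient information
\begin{equation*}
E\{S\eff S\eff\trans\}=\bSig_{11}-\bSig_{12}\bSig_{22}^{-1}\bSig_{21}=\bSig_\bb^{-1}.
\end{equation*}
Combined with the influence-function expansion in Proposition \ref{pro:catbbbg}, which can be rewritten as $\wh\bb-\bb_0=\bSig_\bb\,n^{-1}\sumi S\eff(\x_i,y_i)+o_p(n^{-1/2})$, this establishes that $\wh\bb$ attains the semiparametric efficiency bound.

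The main obstacle will be handling the fact that $m\to\infty$ with $n$, so that the projection is onto a tangent space of diverging dimension. Specifically, I will need to justify that the efficient information for the sieve (with finite-but-growing $m$) coincides with the semiparametric efficiency bound for the full model under Conditions \ref{con:catbdd}--\ref{con:catvar}; this is the analogue in the discrete setting of the B-spline-based verification used for Proposition \ref{pro:effbb}, and hinges on the eigenvalue control of $\bSig_{22}$ noted in Remark \ref{rem:var22} together with Condition \ref{con:catm} guaranteeing $E\{\|\p(\X,\btheta_0)\|_2^2\}\to0$. Once this sieve approximation is in place, the remaining matrix identification is elementary block-algebra and the efficiency claim follows.
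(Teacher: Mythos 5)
Your proposal is correct and matches the paper's own argument: the paper characterizes the efficient score via the integral equation for $\a_0(y)$ inherited from Appendix \ref{proof:effbb}, observes that in the discrete setting $\a_0(y)$ is exactly of the form $\Lambda\B(y)$ with $\Lambda=\bSig_{12}\bSig_{22}^{-1}$, and verifies $E(\S\eff^{\otimes2})=\bSig_\bb^{-1}$ --- which is precisely your projection computation written as its normal equation. Your closing worry about a sieve-versus-full-model gap is moot here, since for each $m$ the indicator basis spans all admissible $c(\cdot)$ exactly, so the identity is exact rather than asymptotic, and the paper accordingly makes no such approximation argument.
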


\begin{Th}\label{th:cateffxi}
Under Conditions \ref{con:catbdd}-\ref{con:catvar}, 
the estimator $\wh\bxi$ based on the approximate maximum likelihood estimator
$\wh\bb$ and $\wh c(\cdot)$ is efficient.
\end{Th}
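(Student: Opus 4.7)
The plan is to show that the asymptotic variance $\bSig_\bxi$ given in Theorem \ref{th:catxi} equals $\var\{\bphi\eff(Y,\X)\}$, the variance of the efficient influence function constructed in Section \ref{sec:xibound}. Rather than compare variance formulas directly, I would show the stronger pointwise identity that the influence function of $\wh\bxi$ coincides with $\bphi\eff$.

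First, I would read off the influence function of $\wh\bxi$ from Proposition \ref{pro:catbbbg} combined with a delta-method expansion of the map $(\bb,\bg)\mapsto \bb\cdot n^{-1}\sum_i \var^*(Y\mid\x_i,\bb,\bg)$, yielding
\bse
\psi(y,\x) = \bb_0\{v(\bb_0\trans\x)-E[v(\bb_0\trans\X)]\} + \A\bSig^{-1}\begin{pmatrix}\x\{y-E(Y\mid\x)\}\\ \B(y)-E\{\B(Y)\mid\x\}\end{pmatrix},
\ese
after verifying that the $\bb_0\bb_0\trans\var\{v(\bb_0\trans\X)\}$ piece of $\bSig_\bxi$ arises from the first term and the remainder arises from the second term (this splits cleanly because the two components are uncorrelated with each other after conditioning on $\X$). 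Then I would match $\psi$ with $\bphi\eff$.

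The central algebraic idea is to exploit that in the discrete case $\B(y)=\{I(y=1),\dots,I(y=m)\}\trans$ is a complete basis for mean-zero functions on the support with the $c(0)=0$ constraint. Hence the unknown function $\a(y)$ appearing in the efficient influence function can be written \emph{exactly} as $\a(y)=\balpha\trans\B(y)+\text{const}$ for a unique $\balpha\in\mR^m$. Substituting this representation into the integral equation that characterizes $\a(y)$ and projecting the resulting identity onto $\B(y)-E\{\B(Y)\mid\X\}$ produces a linear system in $\balpha$ whose coefficient matrix is precisely $\bSig_{22}$; this gives an explicit formula $\balpha=\bSig_{22}^{-1}\bv$ for a known vector $\bv$ involving $\M$ and moments of $Y,\X$. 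Plugging this back into the defining equation for $\M$ reduces that equation to a linear system whose solution is $\M=\A_1\bSig_{11}^{*-1}$ after Schur-complement manipulations with the block matrix $\bSig$. Identifying the resulting combinations of $\A_1,\A_2,\bSig_{11},\bSig_{12},\bSig_{22}$ with the block-inverse formula for $\bSig^{-1}$ yields $\psi=\bphi\eff$.

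The main obstacle is carrying out this identification cleanly: the efficient-influence-function derivation in Section \ref{sec:xibound} is phrased through functional operators, while the expansion from Proposition \ref{pro:catbbbg} is phrased through finite-dimensional block matrices, and one must verify that the functional equation for $\a(y)$ reduces to the right block-inverse identity. A minor complication is that Condition \ref{con:catm} lets $m\to\infty$, so the block-matrix manipulations must be done with care that the $O_p$ remainders $\r_1,\r_2$ from Proposition \ref{pro:catbbbg} do not inflate under the map to $\wh\bxi$; this is handled by noting that $\wh\bxi$ only depends on $\wh\bg$ through bounded linear functionals (second moments of $Y$ under the estimated mass function), so that the $o_p(n^{-1/2}m^{1/2})$ rate for $\r_2$ contracts to $o_p(n^{-1/2})$ after projection. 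Once this projection argument is in place, the algebraic matching above completes the proof.
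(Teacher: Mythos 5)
Your proposal follows essentially the same route as the paper: in the discrete case the indicator basis represents $\bb y^2+\a(y)$ exactly as $\bLam\B(y)$, the defining equations for $\a$ and $\M$ are projected onto $\B\trans(y)$ to yield a linear system solved via the Schur complement $\bSig_{22}-\bSig_{21}\bSig_{11}^{-1}\bSig_{12}$, and the block-inverse formula for $\bSig^{-1}$ identifies the resulting variance with $\A\bSig^{-1}\A\trans$. The only cosmetic differences are that you phrase the conclusion as a pointwise match of influence functions rather than of variances (the same computation, since both are the same linear combinations of $\X\{Y-E(Y\mid\X)\}$ and $\B(Y)-E\{\B(Y)\mid\X\}$), and your intermediate formula for $\M$ should read $(\A_1-\U\bSig_{21})\bSig_{11}^{-1}$ rather than $\A_1\bSig_{11}^{*-1}$; the remainder-contraction point you raise is already handled in the proof of Theorem \ref{th:catxi} via $\|\A_2\|_2=O(m^{-1/2})$.
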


\section{Simulation Experiments}\label{sec:sim}

We conduct simulation studies to investigate the finite sample
performance of the proposed methods. We consider the case where the
response follows a GLM or a truncated GLM.
All results are based on 1000 replicates with sample size $n=1000$.
For comparison, we implemented three different estimators,
our proposed approximate maximum likelihood estimator (aMLE),
the pairwise marginal likelihood estimator (pMLE) by \cite{lin2021},
and the maximum likelihood estimator (MLE) under the non-truncated
regression model.
Note that MLE is the most efficient estimator if the response
$Y$ is not truncated, but wrongly specifies the distribution when the
response $Y$ is actually truncated. 

For our proposed aMLE for the continuous response,
we used the cubic B-spline basis
with the number of  interior knots equal to
the smallest integer larger than $0.7n^{1/5}$, i.e., $N=\lceil 0.7n^{1/5}\rceil$,
where the knots are the quantiles of $\{y_i:i=1,\dots,n\}$ of length $(N+2)$
whose levels are evenly spaced in $[0,1]$.
On the other hand, because \cite{lin2021} only studied the estimation
of $\bb$ and $c(\cdot)$, 
we implemented the estimators $\wh\bxi$ and $\wh\bta_\tau$ of pMLE
in the same manner as for aMLE,  based on the pMLE
estimated $\bb$ and  $c(\cdot)$.

We report the average of the absolute bias,
the sample standard error $\sigma_{\text{sim}}$,
the average of the asymptotic standard error $\wh\sigma_{\text{est}}$,
and the empirical coverage  of the estimated confidence interval at
95\% confidence level (CI).
The  $\wh\sigma_{\text{est}}$ and CI of pMLE are omitted
because \cite{lin2021} did not provide them.

\subsection{Normal Distribution}

We first examine the normal regression model. 
A three dimensional covariate vector $\X_i$ was
independently drawn from a multivariate normal distribution with mean $\0$
and covariance $\Sigma=(\sigma_{kl})$ where
$\sigma_{kl}=0.1^{|k-l|},k=1,\dots,3,l=1,\dots,3$. 
Then for the truncated case, we further generated a response $Y_i$
independently 
from a truncated normal distribution on $[a,b]$, which has the density
\bse
f_{Y}(y;\theta_i,\sigma,a,b) = \frac{\frac{1}{\sigma}\phi(\frac{y-\theta_i}{\sigma})}
{\Phi(\frac{b-\theta_i}{\sigma})-\Phi(\frac{a-\theta_i}{\sigma})},
\ese
where $\theta_i=\bb\trans\x_i, \sigma=1, a=-5, b=5$,
$\phi(\cdot)$ and $\Phi(\cdot)$ are the probability density function
(pdf) and the cumulative distribution function (cdf) of the 
standard normal distribution. 
We set $\bb=(1,2,3)\trans$ in the simulation procedure.
For the non-truncated case, we generated replicates with the same
parameters but without truncation. 
The non-truncated simulation design is identical to that of \cite{lin2021}.

To evaluate the performance under the truncated case, we illustrate
in Figure \ref{fig:normalcy} 
how the estimated $c(\cdot)$ performed by our method while fixing $c(-5)=0$.
As can be seen from the plot, $\wh c(\cdot)$ approximated 
$c(\cdot)$ with satisfactory bias and variance at sample size
1000. This reflects the theoretical properties described in Lemma
\ref{lem:whbg} of the supplementary material. 
In addition, in Table \ref{tab:normal}, we illustrate the estimation
properties for  $\bb$ and $\bxi$ via aMLE, pMLE and MLE.
The aMLE and pMLE methods were numerically almost identical in terms of
both bias and standard error, reflecting the theoretical properties
established in Proposition \ref{pro:effbb}  as well as those in Theorem 1
of \cite{lin2021}. In terms of inference of aMLE, the estimated
standard error was very close to the sample standard error, and
the coverage rate of the confidence intervals was
close to the nominal level $95\%$. These indicate that our asymptotic
properties are already useful at sample size $n=1000$ in this
model. In contrast, the inference results of MLE were very bad, with
the coverage rate almost 0. This is a direct consequence of the
estimation bias caused by model misspecification.
Table \ref{tab:normaleta} in the supplementary material
provides the estimation results of $\bta_\tau$
at the quantile levels $\tau=0.05, 0.25, 0.5, 0.75$ and  0.95
respectively. Again, the estimation and  inference properties of aMLE
were satisfactory, suggesting that the properties described in 
Theorem \ref{th:bta} is reflected in this model for $n=1000$.
Also, MLE performed poorly in estimating $\bta_\tau$ especially at the
low or high quantile level.

For the non-truncated case, the corresponding results are shown in
Figure \ref{fig:normal2cy} and Tables \ref{tab:normal2} and
\ref{tab:normal2eta} in the supplementary material.
We can see from Figure \ref{fig:normal2cy}
that $\wh c(\cdot)$ estimated the curve $c(\cdot)$ sufficiently well, 
even though the response is infinitely supported hence our compact
support assumption is violated. We also note the small biases of
the estimators $\wh\bb$, $\wh\bxi$, and $\wh\bta_\tau$ from 
Tables \ref{tab:normal2} and \ref{tab:normal2eta}.
Moreover, interestingly,
our estimators of $\bxi$ and $\bta_\tau$ performed as well as MLE in
this simulation setting, even when $\tau$ is near 0 or 1. In terms of
inference, the empirical coverage of the estimated confidence
interval by our method was close to the nominal level. These seem to
suggest an empirical robustness property of our aMLE against the
compact support assumption.
Lastly, we point out that our method aMLE was more efficient than pMLE
in this example in terms of estimating $\bb, \bxi$ as well as the quantile
effect $\bta_\tau$ at various quantile levels $\tau$.

\subsection{Gamma Distribution}

Next, we consider the situation when $Y$ given $\x$ has a gamma
distribution.
We first generated a covariate vector $\X_i=(X_{1i},X_{2i})\trans$,
where $X_{ki},k=1,2$ are independently and identically distributed
(iid) as a uniform random variable on $[0.5,1]$.
Then for the truncated case, a response $Y_i$ was generated independently
from a truncated gamma distribution with the pdf
\bse
f_{Y|\X}(y;\alpha,\theta_i,b) =
\frac{y^{\alpha-1}e^{-y/\theta_i}}{\Gamma(\alpha)\theta_i^\alpha}\left\{\int_0^b\frac{y^{\alpha-1}e^{-y/\theta_i}}{\Gamma(\alpha)\theta_i^\alpha}\right\}^{-1}, 
\ese
where $\alpha=5$, $\alpha\theta_i=1/\bb\trans\x_i$, $b=2$, and
$\bb=(0.5,1)\trans$.
For the non-truncated case, we simply carried out the same data
generation mechanism while applying the usual gamma distribution
with the same parameters.

For the truncated case,
Figure \ref{fig:gammacy} illustrates the performance of the estimator $\wh
c(\cdot)$ with $c(2)=0$. The estimation had very small bias on
most part of the support  except when $y$ is
close to 0 due to the boundary effect.
Indeed, the true curve $c(y)=(\alpha-1)\log(y)$ is unbounded near 0
and very few observations are available since the density converges to
0. Nevertheless, we can see from Tables \ref{tab:gamma} and
\ref{tab:gammaeta} 
that aMLE estimated $\bb$, $\bxi$, and $\bta_\tau$ well with small
bias, and its inference was also sufficiently precise with a good match
between the sample variance and the estimated variance, and the 95\%
confidence intervals had coverage close to the nominal level. In
contrast, MLE showed bias and did not perform well in general, and
deteriorated 
further when $\tau$ is near 1.
In our application, we also find that
pMLE was computationally unstable in  this setting and did not lead
to reasonable results.

For the case when the response is not truncated, 
Figure \ref{fig:gamma2cy} shows that  similar performance was observed
as in the truncated gamma regression case. In addition, 
Tables \ref{tab:gamma2} and \ref{tab:gamma2eta} in the supplementary
material suggest that 
aMLE  still provided good results in estimating the
corresponding parameters and can be used for reliable inference.
In this situation, aMLE still outperformed pMLE numerically
even though both should be inconsistent in theory. Because MLE assumes a
fully parametric model which happens to be correct, it had the best
performance among all three methods as we expected.

\subsection{Bernoulli Distribution}

Our first simulation for the discrete response was carried out
under a conditional Bernoulli
distribution with the intention to investigate the performance of aMLE
in the
discrete response case.
A three dimensional covariate vector $\X_i$ was generated
independently from a multivariate normal distribution with mean $\0$
and covariance $\Sigma=(\sigma_{kl})$ where $\sigma_{kl}=0.1^{|k-l|},k=1,\dots,3,l=1,\dots,3$.
Then a binary response $Y_i$ was drawn independently
from a Bernoulli distribution with a success rate $1/\{1+\exp(-\bb\trans\x_i)\}$.
We set $\bb=(-0.5,0.5,1)\trans$. 

The performances of the estimators $\wh\bb$ and $\wh\bxi$ by the three methods
are given in Table \ref{tab:bernoulli}.    
All methods performed similarly to each other,
because the problem is fully parametric as explained in Section \ref{sec:est}.

\subsection{Poisson and Negative Binomial Distributions}\label{sec:poisson}
We now conduct simulation studies for the case where
$Y$ is discrete with infinite support. Specifically, $Y$ has
Poisson and negative binomial distributions respectively.
We first generated a covariate vector $\X_i=(X_{1i},X_{2i})\trans$,
where $X_{ki},k=1,2$ are independently and identically distributed
(iid)  uniform random variables on $[0.5,1]$.
Then for the Poisson regression model, a response $Y_i$ was generated
from a Poisson distribution with the rate
$\theta_i=\exp(\bb\trans\x_i)$ where $\bb=(0,1)\trans$.
For the negative binomial case,
we generated $Y_i$ independently from a negative binomial distribution with the mass function
\bse
\pr(Y=y;r,\theta_i) = 
{y+r-1 \choose r-1} \theta_i^y (1-\theta_i)^r,
\ese
where $r=2$, $\theta_i=\exp(\bb\trans\x_i)$, and $\bb=(0,-1)\trans$.
In estimation, 
MLE assumes $\theta_i=\exp(\bb\trans\x_i)$ and $c(y)=-\log(y!)$
hence results in the most efficient estimator under the Poission regression
among the three methods,
but wrongly specifies the structure
when the conditional distribution is the negative binomial.

Tables \ref{tab:poisson} and \ref{tab:nbinomial} show the $\bb$ and $\bxi$ estimation results
under the Poisson and the negative binomial regression models respectively.
In both cases, we found that aMLE estimated $\bxi$ more efficiently than pMLE
even though our method aMLE showed similar performance to pMLE in estimating $\bb$.
Especially for the estimation of $\xi_2$ in Table \ref{tab:nbinomial},
we can see that aMLE was about twice as efficient as pMLE
in terms of both bias and standard error.
This supports that our estimation procedure is more suitable for estimating $\bxi$
in finite sample situations than the method proposed by \cite{lin2021}.
Also, considering that the empirical coverage of our approximate confidence interval
reached the nominal level of 95\%,
the results confirm the asymptotic properties of $\wh\bb$ and $\wh\bxi$
proposed in Proposition \ref{pro:catbbbg} and Theorem \ref{th:catxi}.
On the other hand, MLE in Table \ref{tab:poisson} indeed had the best
performance in terms of both bias and variance, benefiting from the
perfectly correct parametric model setting. 
However,  we can observe from Table \ref{tab:nbinomial} that
the estimators from MLE were severely biased when $c(y)$ was misspecified.

\section{Swiss Non-labor Income Data Analysis}\label{sec:real}

We now analyze a data set concerning non-labor income situation
in Switzerland.
The data set is publicly available from the R-package \texttt{AER}
\citep{kleiber2008applied}, 
and consists of information collected from 871 married
women randomly 
drawn from the representative health survey for Switzerland (SOMIPOPS)
in 1981 with income not abnormally low.

Our goal is to estimate the marginal effect and the quantile effect of
covariates on
non-labor income (such as husband's income). Thus, the non-labor 
income in the log scale is used as response $Y$. 
All the other variables in the dataset are considered as explanatory
variables, and they include
Participation (taking the value 1 if the individual participated
in the labor market, 0 otherwise), 
Age (age in years),
Age2 (squared age in years then divided by 10),
Education (years of formal education),
Foreign (taking the value 1 if the individual is a permanent foreign
resident, 0 otherwise),
Youngkids (number of young children),
Oldkids (number of older children).
The age category for children was decided by whether the age is under 7.
For  more detailed description, see \cite{gerfin1996parametric}.

We implemented our method aMLE. For comparison, we also implemented
 the normal and the gamma regression.
We fitted the models without any transformation on or
interaction between the covariates, while
the number of knots of the B-spline basis is chosen in the same way as
explained in Section \ref{sec:sim}. 

In Figure \ref{fig:cyreal}, we can see that
the fitted distribution from aMLE does not resemble that of the normal
regression, whose $c(y)=-y^2/2\sigma^2$ with $\sigma>0$. It also does
not resemble that of the gamma distribution whose $c(y)=(\alpha-1)\log
(y)$ with $\alpha>0$. In Table \ref{tab:criteriareal}, we compared
the information criteria AIC and  BIC of the three models, and found
that our aMLE on (\ref{eq:sglm}) has the lowest AIC and BIC compared
to the normal and 
the gamma regression model, hence is the most suitable modeling choice.
This suggests that our model in (\ref{eq:sglm}) is the most suitable
to use in analyzing the swiss non-labor income data.

Table \ref{tab:xireal} shows the inference result in estimating the marginal effect.
All except one explanatory variables have similar levels of
significance based on all three methods. The exception is ``Foreign'', 
which was selected as a significant variable at 5\%
level in the analysis based on aMLE, but was considered non-significant
in both the normal and the gamma regression.
In other words, only by using (\ref{eq:sglm}) in combination with aMLE,
we can conclude that being a 
permanent foreign resident has a negative effect on the non-labor income
in Switzerland.

In Figure \ref{fig:etareal}, we further illustrate the estimated $\wh\bta_\tau$
at the quantile level $\tau= 0.05$, 0.25, 0.5, 0.75 and 0.95.
We fixed the significance level at 5\%.
The p-values of $\wh\bta_\tau$ were largely similar to those of $\wh\bxi$.
However, as we can see from the plot,
the magnitude of the quantile effect increases when $\tau$ is near 0 or 1.
That is, our model suggests that
when the non-labor income level is relatively low or high,
the effect of the covariates on the non-labor income is more extreme.

\section{Discussion}\label{sec:dis}

We have proposed a B-spline based approximate maximum likelihood
estimation procedure to estimate both the marginal effect and the quantile
effect in a semiparametric generalized linear model. Compared to the
classical GLM, our model is more flexible hence less susceptible to
model misspecification.
The estimators we proposed are shown to reach the
semiparametric efficiency bounds, hence are optimal.

Following the spirit of GLMs,  we have worked with a linear summary
of covariates $\bb\trans\x$ in our work. It is easy to see that we
can replace $\bb\trans\x$ by a more general form $m(\x,\bb)$, where
$m$ is a known function which can be nonlinear. All our procedures can
be carried through while replacing $\x$ by $\partial m(\x,\bb)/\partial\bb$,
where we only need to ensure identifiability, sufficient smoothness,
and boundedness of the corresponding quantities to facilitate the
almost identical derivation to the linear case. 

An interesting but difficult further extension of our work worth
mentioning is the possibility of allowing non-compactly
supported distribution of $Y$. 
 Although truncation is
routinely done in practice to bypass the complexity, we found it very difficult
to rigorously extend our 
theoretical analysis to encompass this scenario. We suspect that
more
fundamental work is needed,  possibly in a model simpler than the
semiparametric GLM proposed in (\ref{eq:sglm}).

\bibliographystyle{agsm}
\bibliography{marg}

\vfill
\begin{figure}[hp]
    \centering
    \caption{$c(\cdot)$ estimation results.
    Red: the true curve $c(\cdot)$; Black: the median curve of $\wh c(\cdot)$;
    Filled curves: the 2.5\% and 97.5\% quantiles of $\wh c(\cdot)$.}
    \begin{subfigure}{0.45\textwidth}
        \centering
        \caption{truncated normal}
        \includegraphics[width=\textwidth]{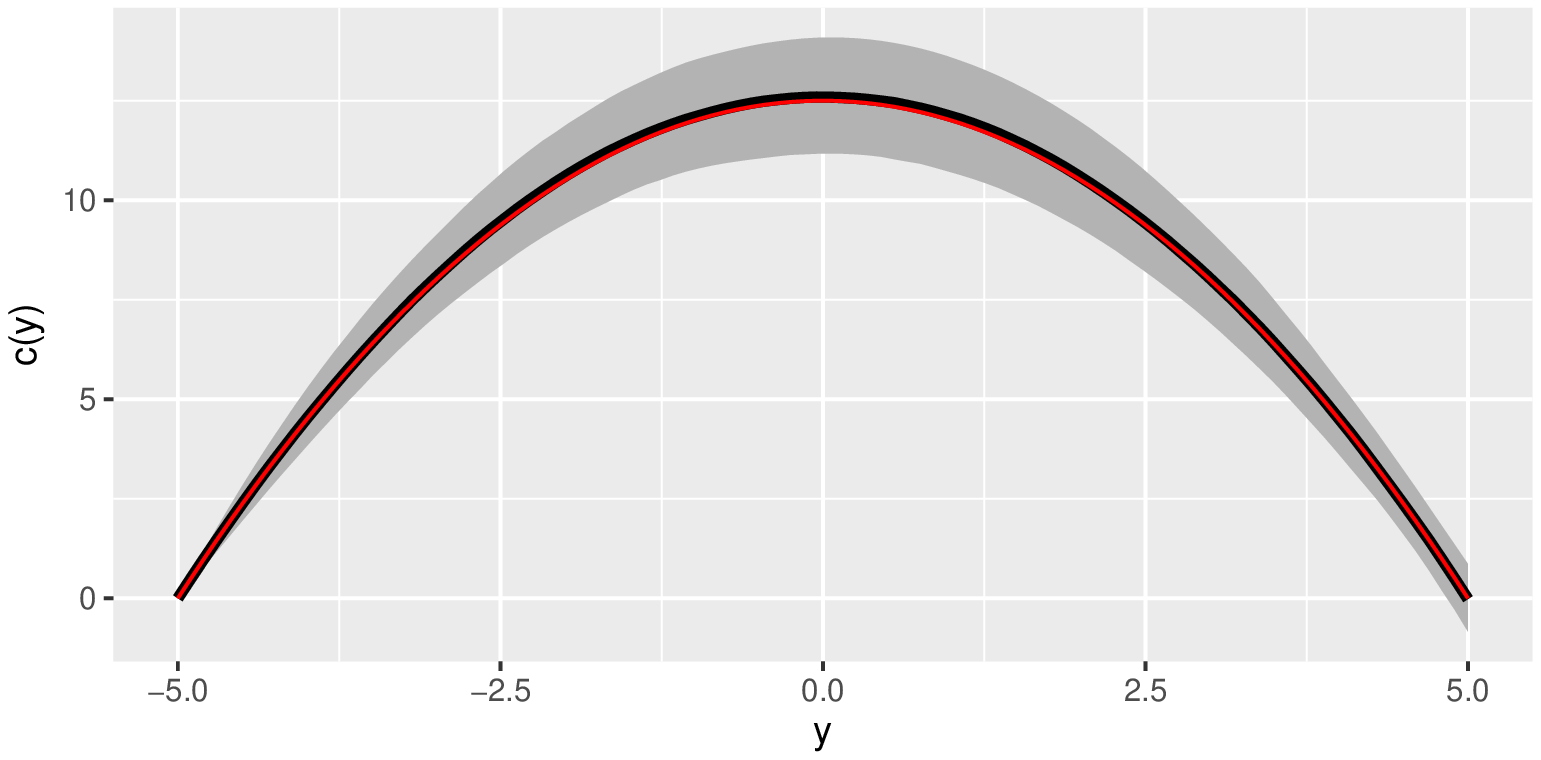}
        \label{fig:normalcy}
    \end{subfigure}
    \hfill
    \begin{subfigure}{0.45\textwidth}
        \centering
        \caption{normal}
        \includegraphics[width=\textwidth]{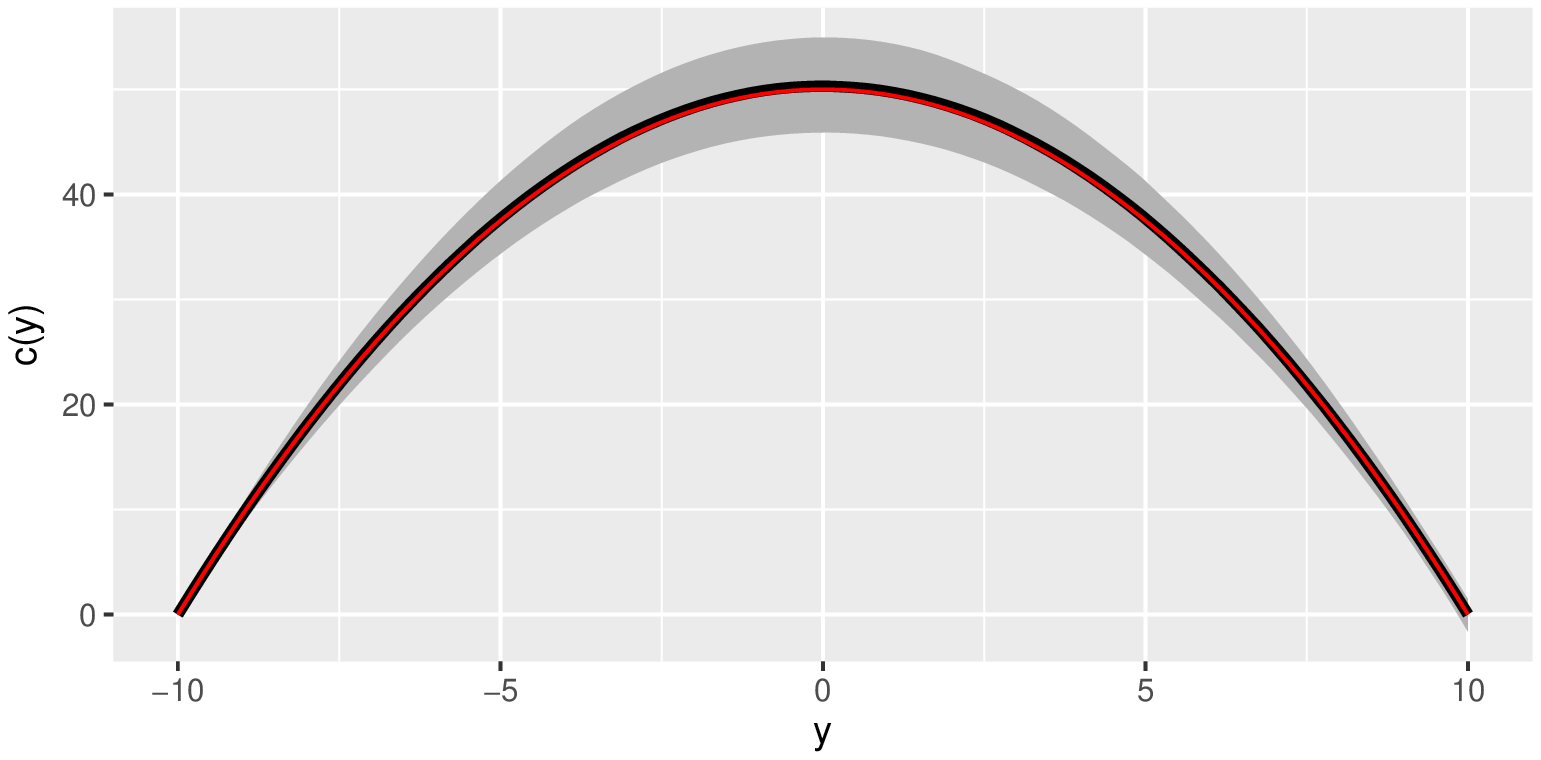}
        \label{fig:normal2cy}
    \end{subfigure}
    \begin{subfigure}{0.45\textwidth}
        \centering
        \caption{truncated gamma}
        \includegraphics[width=\textwidth]{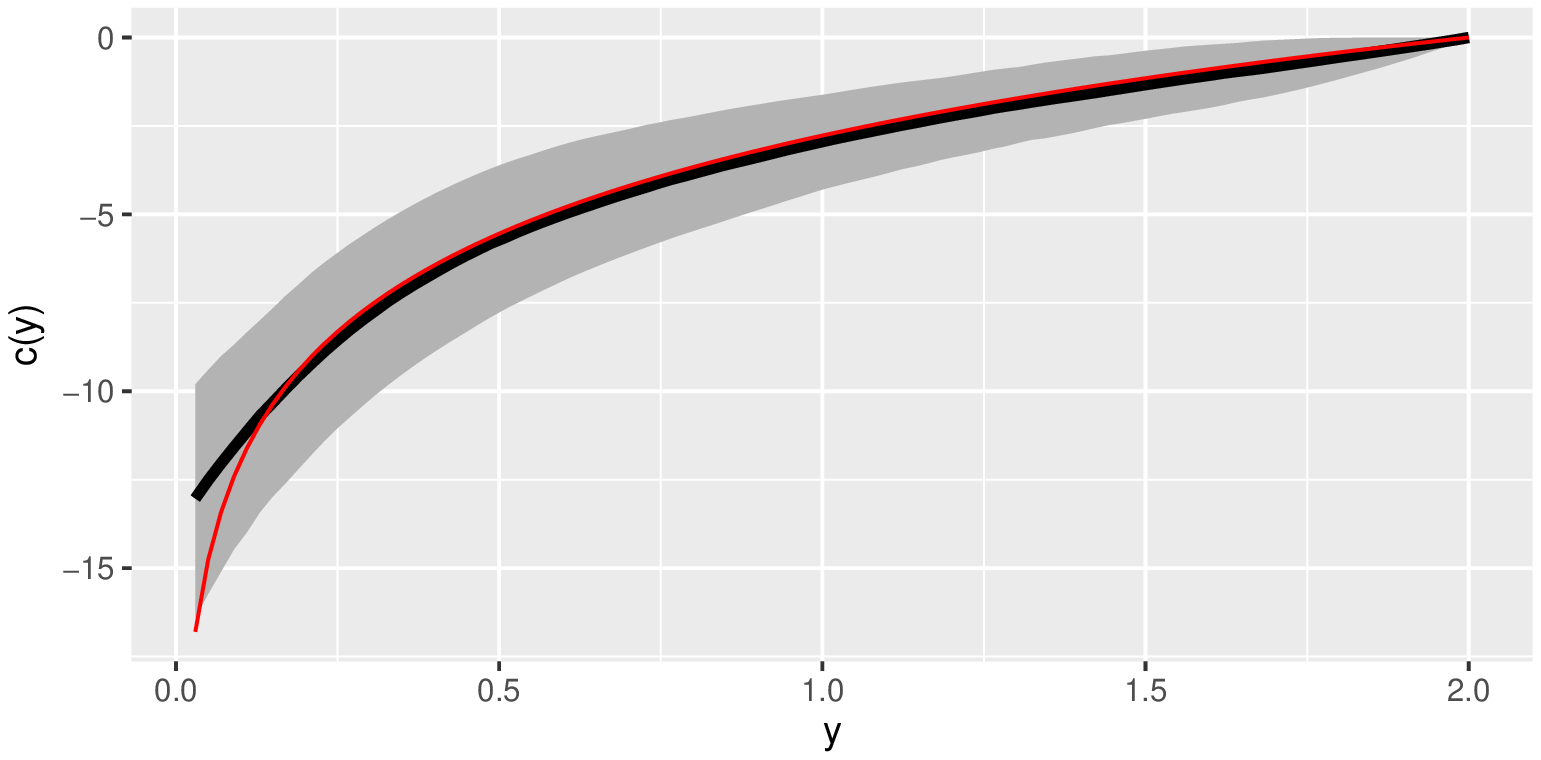}
        \label{fig:gammacy}
    \end{subfigure}
    \hfill
    \begin{subfigure}{0.45\textwidth}
        \centering
        \caption{gamma}
        \includegraphics[width=\textwidth]{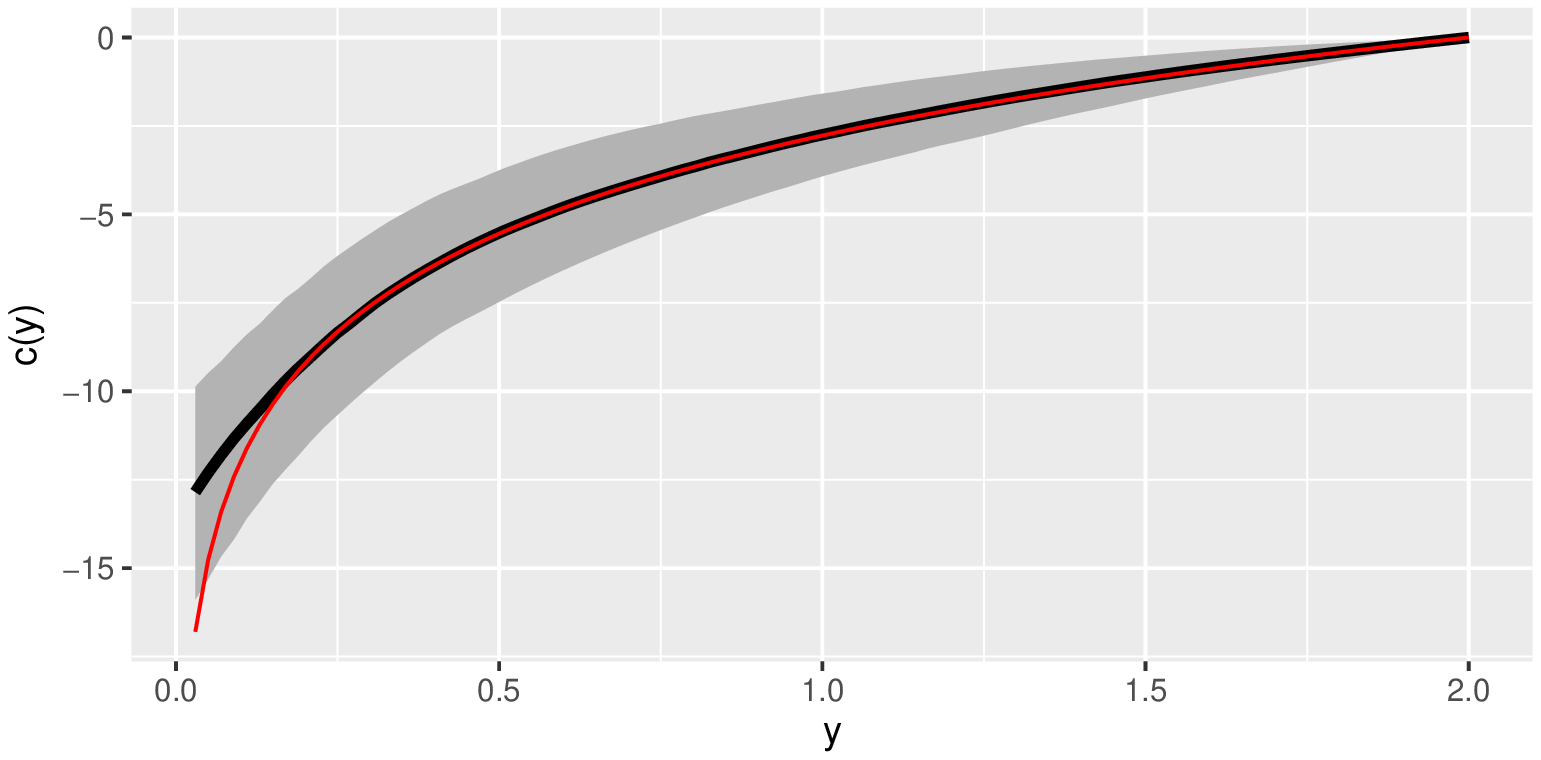}
        \label{fig:gamma2cy}
    \end{subfigure}
\end{figure}

\begin{figure}[hp]
    \centering
    \caption{$c(\cdot)$ estimation in the Swiss non-labor income data.
    Black: the curve $\wh c(\cdot)$;
    Filled curves: the estimated pointwise confidence band of $c(\cdot)$.}
    \label{fig:cyreal}
    \includegraphics[width=0.6\textwidth]{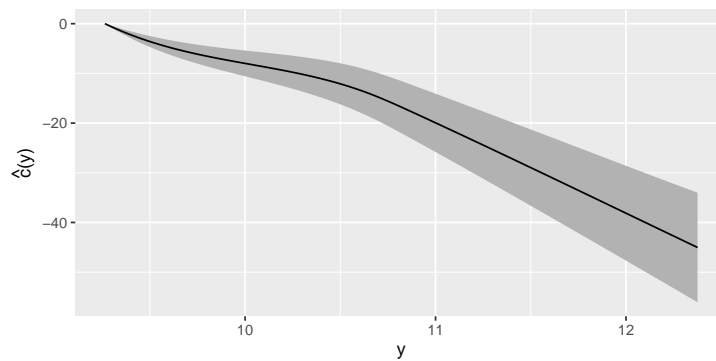}
\end{figure}

\begin{figure}[hp]
    \centering
    \caption{The result of $\bta_\tau$ estimation in the Swiss
      non-labor income data.} 
    \label{fig:etareal}
    \includegraphics[width=0.9\textwidth]{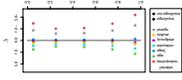}
\end{figure}

\begin{table}[]
    \centering
    \caption{$\bb$ and $\bxi$ estimation results under the truncated normal distribution.}
    \label{tab:normal}
    \small
\begin{tabular}{c|ccc|ccc|ccc|ccc}
          & \multicolumn{3}{c|}{$|$bias$|$} & \multicolumn{3}{c|}{$\sigma_{\text{sim}}$} & \multicolumn{3}{c|}{$\hat\sigma_{\text{est}}$} & \multicolumn{3}{c}{C.I.} \\
          & aMLE      & pMLE     & MLE      & aMLE         & pMLE         & MLE          & aMLE           & pMLE          & MLE           & aMLE   & pMLE   & MLE    \\ \hline
$\beta_1$ & .049      & .049     & .263     & .061         & .061         & .035         & .060           & -             & .035          & .947   & -      & .000   \\
$\beta_2$ & .083      & .084     & .530     & .103         & .103         & .041         & .101           & -             & .036          & .951   & -      & .000   \\
$\beta_3$ & .118      & .118     & .793     & .145         & .145         & .044         & .145           & -             & .035          & .949   & -      & .000   \\ \hline
$\xi_1$   & .022      & .023     & .028     & .028         & .029         & .035         & .029           & -             & .035          & .948   & -      & .951   \\
$\xi_2$   & .028      & .027     & .033     & .035         & .034         & .041         & .035           & -             & .036          & .941   & -      & .912   \\
$\xi_3$   & .032      & .031     & .036     & .039         & .038         & .044         & .043           & -             & .035          & .964   & -      & .881  
\end{tabular}
\end{table}

\begin{table}[]
    \centering
    \caption{$\bb$ and $\bxi$ estimation results under the normal distribution.}
    \label{tab:normal2}
    \small
\begin{tabular}{c|ccc|ccc|ccc|ccc}
          & \multicolumn{3}{c|}{$|$bias$|$} & \multicolumn{3}{c|}{$\sigma_{\text{sim}}$} & \multicolumn{3}{c|}{$\hat\sigma_{\text{est}}$} & \multicolumn{3}{c}{C.I.} \\
          & aMLE      & pMLE     & MLE      & aMLE         & pMLE         & MLE          & aMLE           & pMLE          & MLE           & aMLE   & pMLE   & MLE    \\ \hline
$\beta_1$ & .046      & .049     & .026     & .057         & .064         & .032         & .055           & -             & .032          & .938   & -      & .954   \\
$\beta_2$ & .081      & .088     & .025     & .099         & .114         & .032         & .096           & -             & .032          & .946   & -      & .957   \\
$\beta_3$ & .116      & .127     & .025     & .143         & .166         & .032         & .139           & -             & .032          & .940   & -      & .941   \\ \hline
$\xi_1$   & .026      & .029     & .026     & .032         & .038         & .032         & .032           & -             & .032          & .953   & -      & .954   \\
$\xi_2$   & .025      & .030     & .025     & .032         & .038         & .032         & .035           & -             & .032          & .961   & -      & .957   \\
$\xi_3$   & .026      & .030     & .025     & .032         & .038         & .032         & .038           & -             & .032          & .964   & -      & .941  
\end{tabular}
\end{table}

\begin{table}[]
    \centering
    \caption{$\bb$ and $\bxi$ estimation results under the truncated gamma distribution.}
    \label{tab:gamma}
    \small
\begin{tabular}{c|ccc|ccc|ccc|ccc}
          & \multicolumn{3}{c|}{$|$bias$|$} & \multicolumn{3}{c|}{$\sigma_{\text{sim}}$} & \multicolumn{3}{c|}{$\hat\sigma_{\text{est}}$} & \multicolumn{3}{c}{C.I.} \\
          & aMLE      & pMLE     & MLE      & aMLE         & pMLE         & MLE          & aMLE           & pMLE          & MLE           & aMLE   & pMLE   & MLE    \\ \hline
$\beta_1$ & .097      & .114     & .074     & .124         & .141         & .074         & .122           & -             & .076          & .944   & -      & .889   \\
$\beta_2$ & .101      & .117     & .062     & .128         & .145         & .077         & .128           & -             & .077          & .942   & -      & .945   \\ \hline
$\xi_1$   & .064      & .284     & .108     & .081         & .154         & .059         & .080           & -             & .061          & .942   & -      & .589   \\
$\xi_2$   & .064      & .523     & .117     & .080         & .126         & .061         & .079           & -             & .061          & .947   & -      & .531  
\end{tabular}
\end{table}

\begin{table}[hp]
    \centering
    \caption{$\bta_\tau$ estimation results under the truncated gamma distribution.}
    \label{tab:gammaeta}
    \vspace{0.5em}
    \begin{tabular}{c|cl|cccc}
    $\tau$ & & Method & $|$bias$|$ & $\sigma_{\rm sim}$ & $\wh\sigma_{\rm est}$ & C.I. \\ \hline
    \multirow{6}{*}{0.05} & $\eta_1$ & aMLE & 0.032 & 0.041 & 0.041 & 0.945 \\
                          &          & pMLE & 0.120 & 0.077 & -     & -     \\
                          &          & MLE  & 0.030 & 0.025 & 0.025 & 0.825 \\
                          & $\eta_2$ & aMLE & 0.038 & 0.048 & 0.048 & 0.945 \\
                          &          & pMLE & 0.217 & 0.090 & -     & -     \\
                          &          & MLE  & 0.023 & 0.026 & 0.027 & 0.943 \\ \hline
    \multirow{6}{*}{0.25} & $\eta_1$ & aMLE & 0.052 & 0.066 & 0.065 & 0.942 \\
                          &          & pMLE & 0.395 & 0.163 & -     & -     \\
                          &          & MLE  & 0.048 & 0.041 & 0.041 & 0.826 \\
                          & $\eta_2$ & aMLE & 0.056 & 0.071 & 0.070 & 0.938 \\
                          &          & pMLE & 0.750 & 0.127 & -     & -     \\
                          &          & MLE  & 0.036 & 0.042 & 0.044 & 0.947 \\ \hline
    \multirow{6}{*}{0.50} & $\eta_1$ & aMLE & 0.068 & 0.086 & 0.085 & 0.943 \\
                          &          & pMLE & 0.436 & 0.192 & -     & -     \\
                          &          & MLE  & 0.071 & 0.055 & 0.056 & 0.799 \\
                          & $\eta_2$ & aMLE & 0.071 & 0.089 & 0.089 & 0.939 \\
                          &          & pMLE & 0.824 & 0.136 & -     & -     \\
                          &          & MLE  & 0.055 & 0.058 & 0.059 & 0.919 \\ \hline
    \multirow{6}{*}{0.75} & $\eta_1$ & aMLE & 0.083 & 0.106 & 0.105 & 0.950 \\
                          &          & pMLE & 0.296 & 0.181 & -     & -     \\
                          &          & MLE  & 0.118 & 0.073 & 0.074 & 0.669 \\
                          & $\eta_2$ & aMLE & 0.087 & 0.109 & 0.107 & 0.945 \\
                          &          & pMLE & 0.529 & 0.144 & -     & -     \\
                          &          & MLE  & 0.112 & 0.077 & 0.079 & 0.761 \\ \hline
    \multirow{6}{*}{0.95} & $\eta_1$ & aMLE & 0.089 & 0.113 & 0.111 & 0.936 \\
                          &          & pMLE & 0.120 & 0.109 & -     & -     \\
                          &          & MLE  & 0.336 & 0.106 & 0.107 & 0.113 \\
                          & $\eta_2$ & aMLE & 0.089 & 0.112 & 0.113 & 0.943 \\
                          &          & pMLE & 0.220 & 0.141 & -     & -     \\
                          &          & MLE  & 0.496 & 0.111 & 0.115 & 0.008
    \end{tabular}
\end{table}

\begin{table}[]
    \centering
    \caption{$\bb$ and $\bxi$ estimation results under the gamma distribution.}
    \label{tab:gamma2}
    \small
\begin{tabular}{c|ccc|ccc|ccc|ccc}
          & \multicolumn{3}{c|}{$|$bias$|$} & \multicolumn{3}{c|}{$\sigma_{\text{sim}}$} & \multicolumn{3}{c|}{$\hat\sigma_{\text{est}}$} & \multicolumn{3}{c}{C.I.} \\
          & aMLE     & pMLE      & MLE      & aMLE         & pMLE         & MLE          & aMLE           & pMLE          & MLE           & aMLE   & pMLE   & MLE    \\ \hline
$\beta_1$ & .088     & .372      & .061     & .112         & .578         & .077         & .111           & -             & .076          & .948   & -      & .954   \\
$\beta_2$ & .093     & .372      & .064     & .119         & .541         & .080         & .119           & -             & .078          & .944   & -      & .943   \\ \hline
$\xi_1$   & .073     & 1.040     & .051     & .093         & .986         & .064         & .092           & -             & .065          & .941   & -      & .951   \\
$\xi_2$   & .079     & 1.551     & .055     & .099         & .782         & .068         & .098           & -             & .065          & .944   & -      & .942  
\end{tabular}
\end{table}

\begin{table}[]
    \centering
    \caption{$\bb$ and $\bxi$ estimation results under the Bernoulli distribution.}
    \label{tab:bernoulli}
    \small
\begin{tabular}{c|ccc|ccc|ccc|ccc}
          & \multicolumn{3}{c|}{$|$bias$|$} & \multicolumn{3}{c|}{$\sigma_{\text{sim}}$} & \multicolumn{3}{c|}{$\hat\sigma_{\text{est}}$} & \multicolumn{3}{c}{C.I.} \\
          & aMLE      & pMLE     & MLE      & aMLE         & pMLE         & MLE          & aMLE           & pMLE          & MLE           & aMLE   & pMLE   & MLE    \\ \hline
$\beta_1$ & .089      & .089     & .088     & .111         & .111         & .111         & .109           & -             & .109          & .950   & -      & .949   \\
$\beta_2$ & .088      & .088     & .088     & .112         & .112         & .111         & .110           & -             & .110          & .947   & -      & .946   \\
$\beta_3$ & .100      & .099     & .099     & .123         & .123         & .123         & .124           & -             & .124          & .956   & -      & .954   \\ \hline
$\xi_1$   & .015      & .015     & .015     & .019         & .019         & .019         & .019           & -             & .019          & .950   & -      & .949   \\
$\xi_2$   & .015      & .015     & .015     & .020         & .020         & .020         & .019           & -             & .019          & .945   & -      & .944   \\
$\xi_3$   & .014      & .014     & .014     & .017         & .017         & .017         & .017           & -             & .017          & .947   & -      & .948  
\end{tabular}
\end{table}

\begin{table}[]
    \centering
    \caption{$\bb$ and $\bxi$ estimation results under the Poisson distribution.}
    \label{tab:poisson}
    \small
\begin{tabular}{c|ccc|ccc|ccc|ccc}
          & \multicolumn{3}{c|}{$|$bias$|$} & \multicolumn{3}{c|}{$\sigma_{\text{sim}}$} & \multicolumn{3}{c|}{$\hat\sigma_{\text{est}}$} & \multicolumn{3}{c}{C.I.} \\
          & aMLE      & pMLE     & MLE      & aMLE         & pMLE         & MLE          & aMLE           & pMLE          & MLE           & aMLE   & pMLE   & MLE    \\ \hline
$\beta_1$ & .121      & .116     & .088     & .154         & .146         & .110         & .151           & -             & .109          & .939   & -      & .944   \\
$\beta_2$ & .130      & .130     & .087     & .162         & .157         & .109         & .158           & -             & .106          & .947   & -      & .938   \\ \hline
$\xi_1$   & .258      & .268     & .187     & .327         & .337         & .236         & .320           & -             & .233          & .938   & -      & .944   \\
$\xi_2$   & .267      & .296     & .195     & .334         & .365         & .243         & .326           & -             & .237          & .940   & -      & .934  
\end{tabular}
\end{table}

\begin{table}[]
    \centering
    \caption{$\bb$ and $\bxi$ estimation results under the negative binomial distribution.}
    \label{tab:nbinomial}
    \small
\begin{tabular}{c|ccc|ccc|ccc|ccc}
          & \multicolumn{3}{c|}{$|$bias$|$} & \multicolumn{3}{c|}{$\sigma_{\text{sim}}$} & \multicolumn{3}{c|}{$\hat\sigma_{\text{est}}$} & \multicolumn{3}{c}{C.I.} \\
          & aMLE     & pMLE     & MLE       & aMLE         & pMLE          & MLE         & aMLE           & pMLE          & MLE           & aMLE   & pMLE   & MLE    \\ \hline
$\beta_1$ & .088     & .075     & 1.388     & .112         & .095          & .171        & .113           & -             & .114          & .956   & -      & .000   \\
$\beta_2$ & .103     & .158     & .383      & .130         & .115          & .176        & .128           & -             & .120          & .951   & -      & .202   \\ \hline
$\xi_1$   & .336     & .395     & 2.535     & .426         & .502          & .340        & .430           & -             & .223          & .958   & -      & .000   \\
$\xi_2$   & .424     & .886     & 2.708     & .528         & 1.008         & .322        & .518           & -             & .217          & .947   & -      & .000  
\end{tabular}
\end{table}

\begin{table}[hp]
    \centering
    \caption{AIC and BIC in the Swiss non-labor income data.}
    \label{tab:criteriareal}
    \vspace{0.5em}
    \begin{tabular}{l|cc}
    Method & AIC & BIC \\ \hline
    aMLE & 538.800 & 600.806 \\
    Normal   & 667.201 & 710.128 \\
    Gamma    & 657.090 & 700.016
    \end{tabular}
\end{table}

\begin{table}[hp]
    \centering
    \caption{$\bxi$ estimation results in the Swiss non-labor income data.
    ``*'' indicates the significance of the corresponding predictor at
    $5\%$ significance level.} 
    \label{tab:xireal}
    \vspace{0.5em}
    \begin{tabular}{l|rl|rl|rl}
    Variable &
    $\wh\xi_{\rm{aMLE}}$ & p-value &
    $\wh\xi_{\rm{Normal}}$ & p-value &
    $\wh\xi_{\rm{Gamma}}$ & p-value \\ \hline
    Participation & -0.133 & $<$1e-3* & -0.130 & $<$1e-3* & -0.130 & $<$1e-3*  \\
    Age           &  0.070 & $<$1e-3* &  0.065 & $<$1e-3* &  0.065 & $<$1e-3*  \\
    Age2          & -0.008 & $<$1e-3* & -0.008 & $<$1e-3* & -0.008 & $<$1e-3*  \\
    Education     &  0.042 & $<$1e-3* &  0.042 & $<$1e-3* &  0.042 & $<$1e-3*  \\
    Youngkids     &  0.010 & 0.695    &  0.011 & 0.657    &  0.011 & 0.654     \\
    Oldkids       &  0.019 & 0.147    &  0.023 & 0.091    &  0.022 & 0.095     \\
    Foreign       & -0.079 & 0.017*   & -0.060 & 0.064    & -0.062 & 0.053     
    \end{tabular}
\end{table}

\clearpage
\pagenumbering{arabic}
{\centering
\section*{SUPPLEMENTARY MATERIAL}}
\setcounter{equation}{0}\renewcommand{\theequation}{S.\arabic{equation}}
\setcounter{subsection}{0}\renewcommand{\thesubsection}{S.\arabic{subsection}}
\setcounter{table}{0}\renewcommand{\thetable}{S.\arabic{table}}

\subsection{Derivation of the Efficiency Bound of Marginal Effect Estimation}\label{sec:derive1}

Consider an arbitrary parametric submodel
\be\label{eq:submodel}
f_{\X,Y}(y,\x,\btheta)
=
f_{\X}(\x,\ba)\frac{\exp\left\{y\bb\trans\x+c(y,\bg)\right\}}
{\int\exp\left\{y\bb\trans\x+c(y,\bg)\right\}d\mu(y)},
\ee
where $\btheta = (\ba\trans, \bb\trans, \bg\trans)\trans$. We can
verify that the score functions associated with an arbitrary $\btheta$ are 
$\S_\ba(\x)=\a(\x)$, where $\a(\x)$ can be any function that satisfies
$E\{\a(\X)\}=\0$, 
$\S_\bb(y,\x)=\x\{y-E(Y\mid\bb\trans\x)\}$, 
and $\S_\bg(y,\x)=\a(y)-E\{\a(Y)\mid\bb\trans\x\}$, where $\a(y)$ can
be any function.
We can verify that 
$\bxi=\bb E\{v(\bb\trans\X)\}$ and
\bse
\frac{\partial\bxi}{\partial\ba\trans}&=&\bb E\{v(\bb\trans\X)\a\trans(\X)\}, \\
\frac{\partial\bxi}{\partial\bb\trans}&=&E\{v(\bb\trans\X)\}\I+\bb 
E[\X\trans\{Y-E(Y\mid\bb\trans\X)\}^3],\\
\frac{\partial\bxi}{\partial\bg\trans}&=&\bb E
\left([\a(Y)-E\{\a(Y)\mid\bb\trans\X\}]\trans\{Y-E(Y\mid\bb\trans\X)\}^2\right).
\ese
Thus, following \cite{bickel1998} and \cite{tsiatis2006}, a
possible influence function is 
\bse
&&\bphi(y,\x)\\
&=&\bb v(\bb\trans\x)-\bb E\{v(\bb\trans\X)\}
+\bb \{y-E(Y\mid\bb\trans\x)\}^2\\
&&-\bb
E[\{Y-E(Y\mid\bb\trans\x)\}^2\mid\bb\trans\x]
+\B E\{v(\bb\trans\X)\}[\b(y,\x)-E\{\b(Y,\x)\mid\x\}],
\ese
where 
\be\label{eq:B}
\B=\left(E[\{Y-E(Y\mid\bb\trans\X)\}\b(Y,\X)\X\trans]\right)^{-1},
\ee
and $\b(y,\x)$ is such that 
\be\label{eq:b}
E\{\b(y,\X)\mid
y\}=E[E\{\b(Y,\X)\mid\X\}\mid y].
\ee
We can verify that
$\partial\bxi/\partial\btheta\trans=E(\bphi\S_\btheta\trans)$
where
$\S_\btheta=(\S_\ba\trans, \S_\bb\trans, \S_\bg\trans)\trans$. Now, 
the tangent space is $\cal T=\cal T_\ba\oplus(\cal T_\bb+\cal T_\bg)$
where
\bse
\cal T_\ba&=&[\a(\x): E\{\a(\X)\}=\0],\\
\cal T_\bb&=&[\M\x\{y-E(Y\mid\bb\trans\x)\}: \forall\M],\\
\cal T_\bg&=&[\a(y)-E\{\a(Y)\mid\bb\trans\x\}:\forall \a(y)].
\ese
Let
\be\label{eq:A1}
\M&=&
E\left(
\B E\{v(\bb\trans\X)\}\b(Y,\X)\X\trans\{Y-E(Y\mid\bb\trans\X)\}
-2\bb\X\trans Y v(\bb\trans\X)\right.\n\\
&&\left.-\a(Y)\{Y-E(Y\mid\bb\trans\X)\}\X\trans
\right)
\left[E\{\X\X\trans v(\bb\trans\X)\}\right]^{-1},
\ee
where $\a(y)$ is such that
\be\label{eq:a}
&&E[E\{\a(Y) \mid \X\}\mid y]-\a(y) \n\\
&=& 2\bb E[ yE(Y\mid\X)-
    E\{YE(Y\mid\X) \mid \X\}\mid y]+\M E[\X\{ y - E( Y \mid \X)\}\mid y].
\ee
Then the efficient influence function is
\bse
\bphi\eff(y,\x)&=&\bb v(\bb\trans\x)-\bb E\{v(\bb\trans\X)\}
+\bb \{y^2-E(Y^2\mid\bb\trans\x)\}\n\\
&&+\M\x\{y-E(Y\mid\bb\trans\x)\}
+\a(y)-E\{\a(Y)\mid\bb\trans\x\},
\ese
where $\a(y)$ satisfies (\ref{eq:a}), $\M$ is given in (\ref{eq:A1}), 
$\b(y,\x)$ satisfies (\ref{eq:b}) and $\B$ is given in (\ref{eq:B}).

Note that once we have $\a(y)$, we can let 
$\b(y,\x)=2\bb yE(Y\mid\bb\trans\x)+\M\x y+\a(y)$ and it satisfies
(\ref{eq:b}). Using this specific $\b(y,\x)$ function, 
we obtain that $\M$ and $\B$ need to satisfy
\bse
\B=\left(E[\{Y-E(Y\mid\bb\trans\X)\}\{
2\bb YE(Y\mid\bb\trans\X)+\M\X Y+\a(Y)
\}\X\trans]\right)^{-1}
\ese
and
\bse 
\M E\{\X\X\trans v(\bb\trans\X)\}
=E\{v(\bb\trans\X)\}\I
-E\left[2\bb\X\trans Y v(\bb\trans\X) +\a(Y)
\{Y-E(Y\mid\bb\trans\X)\}\X\trans 
\right].
\ese
This leads to 
\be\label{eq:M}
\M&=&\left(E\{v(\bb\trans\X)\}\I
-E\left[2\bb\X\trans Y v(\bb\trans\X) +\a(Y)
\{Y-E(Y\mid\bb\trans\X)\}\X\trans 
\right]\right) \n\\
&&\times\left[E\{\X\X\trans v(\bb\trans\X)\}\right]^{-1},
\ee
where $\a(y)$ satisfies (\ref{eq:a}).

Gathering the above derivations and results, we obtain 
the summary description of the efficient influence function as
\bse
\bphi\eff(y,\x)
= \bb v(\bb\trans\x)-\bb E\{v(\bb\trans\X)\}
+\bb y^2+\M\x y+\a(y) 
-E\{\bb Y^2+\M\x Y+\a(Y)\mid\x\},
\ese
where $\a(y)$ satisfies (\ref{eq:a}) and $\M$ is given in (\ref{eq:M}).
Obviously, the variance of the efficient influence function,
i.e. $\var\{\bphi\eff(Y,\X)\}$, is the efficiency bound in
estimating $\bxi$.

\subsection{Derivation of the Efficiency Bound of Quantile Effect Estimation}\label{sec:derive2}

Recall that for an arbitrary parametric submodel in
(\ref{eq:submodel}), we have already derived the corresponding score
functions
$\S_\ba(\x)$, 
$\S_\bb(y,\x)$ and $\S_\bg(y,\x)$.
Now for notational brevity, let 
$\nu\equiv\bb\trans\x$,
$q(\nu)\equiv Q_\tau(Y\mid\nu)$, $\epsilon\equiv\tau-I\{Y<q(\nu)\}$, 
and $\epsilon'_\nu=-\delta\{q(\nu)-Y\}q'(\nu)$.
Write $f(y,\nu)\equiv f_{Y\mid\X}(y,\nu)$.
 Using the quantile definition,
we further have
\bse 
E(\epsilon\mid\nu)&=&0,\\
E( \epsilon Y\mid\nu) 
&=&f\{q(\nu),\nu\}q'(\nu),\\
\frac{E(\epsilon Y^2\mid\nu)}{f\{q(\nu),\nu\}}
&=&2q(\nu)q'(\nu)+q'(\nu)^2[\nu+c'\{q(\nu)\}]+q''(\nu),
\ese 
which can be verified based on
\bse
\tau&=&E[I\{Y\le q(\nu)\}\mid\nu],\\
q'(\nu)
&=&\frac{E\left( [\tau-I\{Y<q(\nu)\}]Y\mid\nu\right) }
{f\{q(\nu),\nu\}},\\
q''(\nu)
&=&\frac{E\left([\tau-I\{Y<q(\nu)\}]Y^2\mid\nu\right)}{f\{q(\nu),\nu\}}
-q(\nu)q'(\nu)-q'(\nu)[q(\nu)+q'(\nu)\nu+q'(\nu)c'\{q(\nu)\}].
\ese
We also have
\bse
\frac{\partial q(\nu)}{\partial\bg}
&=&\frac{E\left\{\epsilon\a(Y)\mid\nu\right\}}{f\{q(\nu),\nu\}},\\
\frac{\partial q'(v)}{\partial\bg}
&=&\frac{E\left\{\epsilon Y\a(Y)\mid\nu\right\}}{f\{q(\nu),\nu\}}-\a\{q(\nu)\}q'(\nu)
-\frac{\partial q(\nu)}{\partial\bg}\left[q(\nu)+q'(\nu)\nu+q'(\nu)c'\{q(\nu)\}\right].
\ese
Note that
$\bta_\tau=\bb E\{q'(\nu)\}$.
We can verify that
\bse
\frac{\partial\bta_\tau}{\partial\ba\trans}
&=&\bb E\{q'(\nu)\a\trans(\X)\}, \\
\frac{\partial\bta_\tau}{\partial\bb\trans}
&=&E\{q'(\nu)\}\I+\bb E[r(Y,\nu)\X\trans\{Y-E(Y\mid\nu)\}],\\
\frac{\partial\bta_\tau}{\partial\bg\trans}
&=&\bb E\left(r(Y,\nu)[\a\trans(Y)-E\{\a\trans(Y)\mid\X\}]\right),
\ese
where
\be\label{eq:r}
r(Y,\nu)\equiv\frac{\epsilon Y+\epsilon'_\nu-\epsilon[q(\nu)+q'(\nu)\nu+q'(\nu) c'\{q(\nu)\}]}{f\{q(\nu),\nu\}}.
\ee
Hence, a possible influence function is
\bse
\bphi(y,\x)
&=&\bb q'(\nu)-\bb E\{q'(\nu)\}+\bb [r(Y,\nu)-E\{r(Y,\nu)\mid\nu\}]\\
&&+E\{q'(\nu)\}\B[\b(y,\x)-E\{\b(Y,\x)\mid\x\}],
\ese
where 
\be\label{eq:Btau}
\B\equiv\left(E[\b(Y,\X)\X\trans\{Y-E(Y\mid\nu)\}]\right)^{-1},
\ee
and $\b(y,\x)$ is such that 
\be\label{eq:btau}
E\{\b(y,\X)\mid
y\}=E[E\{\b(Y,\X)\mid\X\}\mid y].
\ee
We can verify that
$\partial\bta_\tau/\partial\btheta\trans=E(\bphi\S_\btheta\trans)$.
Note that we have derived the tangent space $\cal T$ in Section \ref{sec:xibound}.
Let $\brho(y,\x)\equiv E\{q'(\nu)\}\B\b(y,\x)+\bb r(y,\nu)-\a(y)$,
$v(\nu)\equiv E(Y^2\mid\nu)-\{E(Y\mid\nu)\}^2$, and
\be\label{eq:M1}
\M_1\equiv
E\left[
\brho(Y,\X)\X\trans \{Y-E(Y\mid\nu)\}\right]
\left[E\{\X\X\trans v(\nu)\}\right]^{-1},
\ee
where $\a(y)$ is such that
\be\label{eq:atau}
E[E\{\a(Y) \mid \X\}\mid y]-\a(y)
=-\bb E\{r(y,\nu)|y\} + \M_1 E[\X\{y-E(Y|\X)\}|y].
\ee
Then the efficient influence function is
\be\label{eq:efftau}
\bphi\eff(y,\x)
=\bb q'(\nu)-\bb E\{q'(\nu)\} +\M_1\x y+\a(y) - E\{\M_1\x Y+\a(Y)|\x\},
\ee
where $\a(y)$ satisfies (\ref{eq:atau}), $\M_1$ is given in (\ref{eq:M1}), 
$\b(y,\x)$ satisfies (\ref{eq:btau}) and $\B$ is given in (\ref{eq:Btau}).
Note that once we have $\a(y)$, we can let 
$\b(y,\x)=
\bb r(y,\nu)-\M_1\x y-\a(y)$ and it satisfies
(\ref{eq:btau}). Using this specific $\b(y,\x)$ function, 
we obtain that $\M_1$ and $\B$ need to satisfy
\bse
\B=\left(E[
\{\bb r(Y,\nu)-\M_1\X Y-\a(Y)\}\X\trans\{Y-E(Y\mid\nu)\}]\right)^{-1}
\ese
and
\bse
\M_1 E\{\X\X\trans v(\nu)\} 
=E\{q'(\nu)\}\I+\bb E\{\X\trans q''(\nu)\}-E[\a(Y)\X\trans\{Y-E(Y|\X)\}].
\ese
This leads to
\be\label{eq:Mtau}
\M_1
=(E\{q'(\nu)\}\I+\bb E\{\X\trans q''(\nu)\}-E[\a(Y)\X\trans\{Y-E(Y|\X)\}])[E\{\X\X\trans v(\nu)\}]^{-1},
\ee
where $\a(y)$ satisfies (\ref{eq:atau}).

Thus, in summary, the efficient influence function for estimating
$\bta_\tau$ is given in (\ref{eq:efftau}),
where $\a(y)$ satisfies (\ref{eq:atau}), $r(Y,\nu)$ is given in \eqref{eq:r},
and $\M_1$ is given in (\ref{eq:Mtau}). 

\subsection{Lemmas}

We first prove several lemmas that will be useful. 
\begin{Lem} \label{lem:bnorm}
Let $U\equiv\{\u\in\mR^m:\|\u\|_2=1\}$ and $\u\in U$. 
Under Conditions \ref{con:order}-\ref{con:space},
$\|\B(\cdot)\trans\u\|_2\asymp h^{1/2}$, $\|\B(\cdot)\trans\u\|_1= O(h^{1/2})$
and $\|\B(\cdot)\trans\e_k\|_1\asymp h$.
\end{Lem}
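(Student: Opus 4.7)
The plan is to handle the three asymptotic statements separately, starting from the easiest. The core ingredient for all three is the classical B-spline Gram matrix analysis, with the mesh ratio bound from (C4) ensuring the constants behave uniformly.

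First I would dispatch $\|\B(\cdot)\trans\e_k\|_1 \asymp h$. Since $B_k(\cdot) \geq 0$, this $L_1$ norm is simply $\int_0^1 B_k(y)\,dy$. The Marsden--Schoenberg integral identity gives
\[
\int B_k(y)\,dy = \frac{t_k - t_{k-r}}{r},
\]
which, by conditions (C3) and (C4), is of order $h$ since each interior gap $h_p$ satisfies $h' \leq h_p \leq h$ and $h/h' \leq C_h$. The boundary knots (collapsed at $0$ and $1$) do not affect this asymptotic since only finitely many basis functions are involved. Hence $\|\B\trans\e_k\|_1 \asymp h$.

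Next I would tackle the two-sided $L_2$ equivalence $\|\B\trans\u\|_2 \asymp h^{1/2}$ (for $\|\u\|_2=1$), which is the main step. Writing $\|\B\trans\u\|_2^2 = \u\trans \G \u$ with Gram matrix $G_{jk} \equiv \int_0^1 B_j(y)B_k(y)\,dy$, I would use (i) locality: $G_{jk}=0$ whenever $|j-k|\geq r$, so $\G$ has bandwidth at most $r$; (ii) entrywise bounds: $0 \leq G_{jk} \leq \int B_j \asymp h$. These yield $\lambda_{\max}(\G) \leq \|\G\|_\infty \leq r\max_{j,k}|G_{jk}| \lesssim h$, giving the upper half $\|\B\trans\u\|_2 = O(h^{1/2})$. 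For the lower bound, I invoke the classical de Boor stability inequality (de Boor, 1978, Ch.~XI; see also Shadrin's theorem): under a bounded mesh ratio as guaranteed by (C4), there exists a constant $D_r>0$ depending only on the order $r$ such that
\[
\u\trans \G \u \;\geq\; D_r\, h\, \|\u\|_2^2 \qquad \text{for every } \u \in \mathbb{R}^m.
\]
Combining the two bounds gives $\lambda_{\min}(\G) \asymp \lambda_{\max}(\G) \asymp h$, hence $\|\B\trans\u\|_2 \asymp h^{1/2}$ when $\|\u\|_2=1$.

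Finally, $\|\B\trans\u\|_1 = O(h^{1/2})$ follows instantly from Cauchy--Schwarz on $[0,1]$:
\[
\|\B\trans\u\|_1 = \int_0^1 |\B(y)\trans\u|\,dy \;\leq\; \left(\int_0^1 1\,dy\right)^{1/2} \|\B\trans\u\|_2 = \|\B\trans\u\|_2 = O(h^{1/2}).
\]
The main obstacle is the lower bound in the $L_2$ step: this is the nontrivial de Boor stability estimate, and its invocation requires explicitly that the constant depends only on $r$ and the mesh-ratio bound $C_h$, both of which are supplied by (C2) and (C4). The rest is routine manipulation of B-spline properties (nonnegativity, local support, partition of unity).
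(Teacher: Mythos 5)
Your proof is correct, and it is organized differently from the paper's. The paper proves all three claims in one stroke by citing the two\mbox{-}sided $L_p$ stability inequality for B\mbox{-}splines (Theorem 5.4.2 of DeVore and Lorentz), which gives $\|\B(\cdot)\trans\u\|_p\asymp\|\u'\|_p$ with the knot-weighted coefficients $\u'$, then converts $\|\u'\|_p\asymp h^{1/p}\|\u\|_p$ via (C3)--(C4); for the $L_1$ claim it additionally uses $\|\u\|_1\le m^{1/2}\|\u\|_2\asymp h^{-1/2}$. You instead isolate the only genuinely hard ingredient --- the lower $L_2$ stability bound $\u\trans\G\u\gtrsim h\|\u\|_2^2$, which is the same de Boor/Shadrin result the paper leans on --- and derive everything else by elementary means: the $L_2$ upper bound from the banded, entrywise-$O(h)$ Gram matrix (the nonzero band has width $2r-1$ rather than $r$, but that is a harmless constant); the general $L_1$ bound by Cauchy--Schwarz on $[0,1]$ from the $L_2$ bound, which avoids the $p=1$ stability inequality entirely; and the $\e_k$ claim from the exact identity $\int_0^1 B_k=(t_k-t_{k-r})/r\asymp h$, where (C3)--(C4) ensure each such knot difference spans at least one and at most $r$ subintervals. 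Both routes are valid and rest on the same key lemma for the lower bound; yours is more self-contained for the remaining pieces, while the paper's is more uniform across the three norms.
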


\begin{proof}
We recall Lemma 1 in the supplement of \cite{jmc2020},
which is a direct result from Theorem 5.4.2 on page 145 of \cite{dl1993}.
For each spline $\sum_{k=1}^m \gamma_k B_k(y)$ and $1\leq p \leq\infty$,
there exists a constant $C_r>0$ such that
\bse
C_r \lVert \bg' \rVert_p
\leq \left\lVert \sum_{k=1}^m \gamma_k B_k(\cdot) \right\rVert_p
\leq \lVert \bg' \rVert_p,
\ese
where $\bg'\equiv\{\gamma_k\{(t_k - t_{k-r})/r\}^{1/p}, k=1,\dots,m\}\trans$.
This implies $\|\B(\cdot)\trans\u\|_p\asymp\|\u'\|_p$ where
\bse
\|\u'\|_p
= \left[\sum_{k=1}^m \left\{u_k\left(\frac{t_k - t_{k-r}}{r}\right)^{1/p}\right\}^p\right]^{1/p}
\asymp h^{1/p}\left(\sum_{k=1}^m u_k^p\right)^{1/p}
= h^{1/p}\|\u\|_p
\ese
by Conditions \ref{con:knots} and \ref{con:space}.
Hence we have $\|\B(\cdot)\trans\u\|_2\asymp h^{1/2}$.
In addition, note that $\|\u\|_1=O(m^{1/2})\asymp O(h^{-1/2})$ under Conditions \ref{con:knots} and \ref{con:space},
which leads to
\bse
\|\B(\cdot)\trans\u\|_1 \asymp h\times O(h^{-1/2})=O(h^{1/2}).
\ese
Since $\|\e_k\|_1=1$, we get $\|\B(\cdot)\trans\e_k\|_1\asymp h$.
\end{proof}

\begin{Lem} \label{lem:becov}
Let $U\equiv\{\u\in\mR^m:\|\u\|_2=1\}$ and $\u\in U$. 
Under Conditions \ref{con:bdd}-\ref{con:space},
\bse
E[\{\B(Y)\trans\u\}^2|\x] &\asymp& h, \\
E\{\B(Y)\trans\u|\x\} &=& O(h^{1/2}), \\
E\{\B(Y)\trans\e_k|\x\} &\asymp& h, \\
\cov\{Y,\B(Y)\trans\u|\x\} &=& O(h^{1/2})
\ese
uniformly in $\x$. For fixed $\bb$ and $\bg$,
we have the same result under the density $\fyx^*(y|\x,\bb,\bg)$.
\end{Lem}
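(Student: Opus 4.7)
The plan is to reduce every statement about conditional expectations of B-spline quantities to integrals of B-spline norms, and then apply Lemma \ref{lem:bnorm} to read off the order in $h$. The key observation is that Condition \ref{con:bdd} gives constants $0<c_1\le c_2<\infty$ with $c_1\le f_{Y\mid\X}(y\mid\x)\le c_2$ uniformly in $(y,\x)\in[0,1]\times\mx$, which sandwiches every conditional expectation between constant multiples of the corresponding unweighted Lebesgue integral.

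Concretely, I would first handle (1) by writing $E[\{\B(Y)\trans\u\}^2\mid\x]=\int_0^1\{\B(y)\trans\u\}^2 f_{Y\mid\X}(y\mid\x)\,dy$ and sandwiching it between $c_1\|\B(\cdot)\trans\u\|_2^2$ and $c_2\|\B(\cdot)\trans\u\|_2^2$; Lemma \ref{lem:bnorm} then gives the order $h$. For (2), the upper sandwich gives $|E\{\B(Y)\trans\u\mid\x\}|\le c_2\|\B(\cdot)\trans\u\|_1=O(h^{1/2})$ by Lemma \ref{lem:bnorm}. For (3), since $B_k(y)\ge 0$ everywhere, both sides of the sandwich are informative, so $E\{\B(Y)\trans\e_k\mid\x\}\asymp\|B_k(\cdot)\|_1\asymp h$. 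Uniformity in $\x$ is automatic because the sandwich constants $c_1,c_2$ do not depend on $\x$.

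For (4), I would use either Cauchy--Schwarz or a direct bound. By Cauchy--Schwarz, $|\cov\{Y,\B(Y)\trans\u\mid\x\}|\le\sqrt{\var(Y\mid\x)}\sqrt{\var(\B(Y)\trans\u\mid\x)}$; since $Y\in[0,1]$ gives $\var(Y\mid\x)\le 1/4$, and (1) gives $\var(\B(Y)\trans\u\mid\x)\le E[\{\B(Y)\trans\u\}^2\mid\x]=O(h)$, the covariance is $O(h^{1/2})$ uniformly in $\x$.

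The second half, working under $\fyx^*(y\mid\x,\bb,\bg)$, follows the same template once I verify that $\fyx^*$ enjoys the same two-sided bounds. For fixed $\bb$ and $\bg$, the exponent $y\bb\trans\x+\B(y)\trans\bg$ is uniformly bounded on $[0,1]\times\mx$ since the B-spline basis is bounded and $\mx$ is compact; hence both the numerator and the normalizing integral $\int_0^1\exp\{y\bb\trans\x+\B(y)\trans\bg\}\,dy$ are bounded above and below by positive constants uniform in $\x$. Thus $\fyx^*$ admits its own sandwich with constants $c_1^*,c_2^*$, and the same four arguments apply verbatim. The only mild subtlety will be ensuring the sandwich constants are truly uniform in $\x$; this is where I would record explicitly that $\mx$ is compact and $\bg$ is finite-dimensional and fixed, so that the suprema and infima of the exponent are attained. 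Beyond this bookkeeping, the lemma is essentially a corollary of Lemma \ref{lem:bnorm} together with the Condition \ref{con:bdd} density bounds.
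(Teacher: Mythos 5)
Your proposal is correct and follows essentially the same route as the paper: sandwich $\fyx(y\mid\x)$ (and, for the starred version, $\fyx^*$) between uniform positive constants and reduce each claim to the $L_1$/$L_2$ norms of $\B(\cdot)\trans\u$ supplied by Lemma \ref{lem:bnorm}. The only divergence is in the covariance bound, where you invoke Cauchy--Schwarz with $\var(Y\mid\x)\le 1/4$ and part (1), whereas the paper bounds $\int_0^1\{y-E(Y\mid\x)\}\B(y)\trans\u\,\fyx(y\mid\x)\,dy$ directly by $\|\B(\cdot)\trans\u\|_1\cdot\sup_{y,\x}|y-E(Y\mid\x)|\fyx(y\mid\x)$; both yield $O(h^{1/2})$, so this is an equivalent and equally valid step.
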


\begin{proof}
The result for $\fyx^*(y|\x,\bb,\bg)$ can be obtained similarly so we omit the proof.
We write $E[\{\B(Y)\trans\u\}^2|\x]$ as
\bse
E[\{\B(Y)\trans\u\}^2|\x]
&=& \int_0^1\left\{\B(y)\trans\u\right\}^2\fyx(y|\x)dy \\
&=& \left\lVert\B(\cdot)\trans\u\{\fyx(\cdot|\x)\}^{1/2}\right\rVert_2^2.
\ese
Note that under Condition \ref{con:bdd},
there exist constants $0<c_f\leq C_f<\infty$ such that
\bse
c_f\leq\inf_{y\in[0,1],\x\in\mx}\fyx(y|\x)
\leq \sup_{y\in[0,1],\x\in\mx}\fyx(y|\x)\leq C_f,
\ese
which implies
\bse
c_f\left\lVert\B(\cdot)\trans\u\right\rVert_2^2
&\leq& \inf_{\x\in\mx}\left\lVert\B(\cdot)\trans\u\{\fyx(\cdot|\x)\}^{1/2}\right\rVert_2^2 \\
&\leq& \sup_{\x\in\mx}\left\lVert\B(\cdot)\trans\u\{\fyx(\cdot|\x)\}^{1/2}\right\rVert_2^2 \\
&\leq& C_f\left\lVert\B(\cdot)\trans\u\right\rVert_2^2.
\ese
Then by Lemma \ref{lem:bnorm}, we have $E[\{\B(Y)\trans\u\}^2|\x]\asymp h$.
Further, $E\{\B(Y)\trans\u|\x\}$ satisfies
\bse
E\{\B(Y)\trans\u|\x\}
= \int_0^1 \B(y)\trans\u \fyx(y|\x)dy
\leq C_f\left\lVert\B(\cdot)\trans\u\right\rVert_1.
\ese
By Lemma \ref{lem:bnorm}, the above implies $E\{\B(Y)\trans\u|\x\}=O(h^{1/2})$.
Similarly, we get $E\{\B(Y)\trans\e_k|\x\}\asymp h$.
Now, $\cov\{Y,\B(Y)\trans\u|\x\}$ satisfies
\bse
\cov\{Y,\B(Y)\trans\u|\x\}
&=& \int_0^1\{y-E(Y|\x)\}\B(y)\trans\u\fyx(y|\x)dy \\
&\leq& \|\B(\cdot)\trans\u\|_1\sup_{y\in[0,1],\x\in\mx}|y-E(Y|\x)|\fyx(y|\x) \\
&=& O(h^{1/2}).
\ese
The last equality holds
because $\sup_{y\in[0,1],\x\in\mx}|y-E(Y|\x)|\fyx(y|\x)$ is bounded by Condition \ref{con:bdd}
and $\|\B(\cdot)\trans\u\|_1=O(h^{1/2})$ by Lemma \ref{lem:bnorm}.
\end{proof}

\begin{Lem} \label{lem:bvar}
Let $U\equiv\{\u\in\mR^m:\|\u\|_2=1\}$ and $\u\in U$. 
Under Conditions \ref{con:bdd}-\ref{con:space},
\bse
\sup_{\u\in U}\var\{\B(Y)\trans\u|\x\}\asymp h
\ese
uniformly in $\x$. For fixed $\bb$ and $\bg$,
we have $\sup_{\u\in U}\var^*\{\B(Y)\trans\u|\x,\bb,\bg\}\asymp h$ uniformly in $\x$.
\end{Lem}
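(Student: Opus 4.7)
The plan is to reduce the lemma directly to Lemma \ref{lem:becov}, by bounding the supremum above and below separately, both uniformly in $\x$. The proof under the true density $\fyx$ will transfer to the approximate model $\fyx^*$ verbatim, since Lemma \ref{lem:becov} furnishes the same bounds in both cases.

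For the upper bound, I would note the trivial inequality
\bse
\var\{\B(Y)\trans\u\mid\x\} \le E[\{\B(Y)\trans\u\}^2 \mid \x],
\ese
and then invoke the first conclusion of Lemma \ref{lem:becov}, which gives $E[\{\B(Y)\trans\u\}^2\mid\x] \asymp h$ uniformly in $\x \in \mx$ and $\u \in U$. Taking the supremum over $\u \in U$ yields $\sup_{\u\in U}\var\{\B(Y)\trans\u\mid\x\} = O(h)$ uniformly in $\x$.

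For the lower bound, it suffices to exhibit a particular $\u \in U$ achieving order $h$. The natural choice is $\u = \e_k$ for some fixed index $k$, which gives $\B(Y)\trans\e_k = B_k(Y)$. Applying Lemma \ref{lem:becov} with $\u = \e_k$ (noting $\|\e_k\|_2 = 1$) yields $E\{B_k^2(Y)\mid\x\} \asymp h$, while the third bullet of that lemma provides $E\{B_k(Y)\mid\x\} \asymp h$, so $\{E[B_k(Y)\mid\x]\}^2 = O(h^2)$. Hence
\bse
\var\{B_k(Y)\mid\x\} \ge c_1 h - c_2 h^2 \ge \tfrac{c_1}{2} h
\ese
for all sufficiently small $h$, uniformly in $\x \in \mx$ by Condition \ref{con:bdd}. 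Combining the two bounds delivers $\sup_{\u\in U}\var\{\B(Y)\trans\u\mid\x\} \asymp h$ uniformly in $\x$. The argument under $\fyx^*$ is identical, invoking the corresponding parts of Lemma \ref{lem:becov}.

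There is essentially no hard step: the result is a mechanical consequence of Lemma \ref{lem:becov}. The only point requiring a moment of care is verifying that the square of the conditional mean is of smaller order than the conditional second moment, which is immediate since $h^2 / h = h \to 0$ under Condition \ref{con:knots}. Thus the main obstacle, to the extent there is one, is simply bookkeeping the uniformity over $\x$, which follows because the bounds on $\fyx$ in Condition \ref{con:bdd} do not depend on $\x$.
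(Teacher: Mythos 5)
Your proof is correct and follows essentially the same route as the paper: bound the variance above by the second moment $E[\{\B(Y)\trans\u\}^2\mid\x]\asymp h$ from Lemma \ref{lem:becov}, and obtain the matching lower bound by specializing to $\u=\e_k$, where $\var\{\B(Y)\trans\e_k\mid\x\}\asymp h-h^2\asymp h$. No gaps.
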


\begin{proof}
We omit the proof for $\var^*\{\B(Y)\trans\u|\x,\bb,\bg\}$
because we can obtain the result by a similar way.
The variance $\var\{\B(Y)\trans\u|\x\}$ satisfies
\bse \label{eq:varb}
\var\{\B(Y)\trans\u|\x\}
&=& E[\{\B(Y)\trans\u\}^2|\x] - [E\{\B(Y)\trans\u|\x\}]^2 \\
&=& h - O(h) \\
&=& O(h)
\ese
uniformly in $\x$ by Lemma \ref{lem:becov}.
Similarly, we get $\var\{\B(Y)\trans\e_k|\x\} \asymp h-h^2\asymp h$ by Lemma \ref{lem:becov}.
Then since $\e_k\in U$ and $U$ is compact, we get
\bse
\sup_{\u\in U}\var\{\B(Y)\trans\u|\x\}\asymp h
\ese
uniformly in $\x$ under Condition \ref{con:bdd}.
\end{proof}

\begin{Lem} \label{lem:deboor}
Let $g_1(\cdot)$ be a function such that $\|g_1(\cdot)\|_1<\infty$.
Under Conditions \ref{con:bdd}-\ref{con:deboor},
\bse
\sup_{y\in[0,1],\x\in\mx}|\fyx^*(y,\x,\bb_0,\bg_0)-\fyx(y|\x)|&=&O(h^q), \\
\sup_{\x\in\mx}|E^*\{g_1(Y)|\x,\bb_0,\bg_0\}-E\{g_1(Y)|\x\}| &=& O(h^q)\|g_1(\cdot)\|_1.
\ese
In addition, if $g_2(\cdot)$ satisfies $\|g_2(\cdot)\|_1<\infty$ and $\|g_1(\cdot)g_2(\cdot)\|_1<\infty$,
\bse
&& \sup_{\x\in\mx}|\cov^*\{g_1(Y),g_2(Y)|\x,\bb_0,\bg_0\}-\cov\{g_1(Y),g_2(Y)|\x\}| \\
&=& O(h^q)\{\|g_1(\cdot)g_2(\cdot)\|_1+\|g_1(\cdot)\|_1\|g_2(\cdot)\|_1\}.
\ese
\end{Lem}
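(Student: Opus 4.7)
The plan is to reduce all three statements to the uniform approximation of $c(\cdot)$ by $\B(\cdot)\trans\bg_0$ that is given in Condition \ref{con:deboor}. Write $\delta(y) \equiv \B(y)\trans\bg_0 - c(y)$, so that $\|\delta(\cdot)\|_\infty = O(h^q)$ uniformly in $y\in[0,1]$. Then, factoring the common quantity $\exp\{y\bb_0\trans\x + c(y)\}$ out of both numerator and denominator of $\fyx^*(y,\x,\bb_0,\bg_0)$, one obtains the clean ratio representation
\begin{equation*}
\fyx^*(y,\x,\bb_0,\bg_0) \;=\; \fyx(y\mid\x)\,\frac{\exp\{\delta(y)\}}{D(\x)}, \qquad D(\x)\equiv \int_0^1 \fyx(y\mid\x)\exp\{\delta(y)\}\,dy.
\end{equation*}
This identity makes $\delta$ the only source of discrepancy, and it is the backbone of the argument.

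Next, a one-term Taylor expansion of the exponential, combined with $\|\delta\|_\infty = O(h^q)$, gives $\exp\{\delta(y)\} = 1 + O(h^q)$ uniformly in $y$. Since $\fyx(\cdot\mid\x)$ integrates to one and is bounded by Condition \ref{con:bdd}, integrating this expansion yields $D(\x) = 1 + O(h^q)$ uniformly in $\x\in\mx$. Plugging these two estimates into the ratio representation and using boundedness of $\fyx$ one more time delivers
\begin{equation*}
\sup_{y\in[0,1],\x\in\mx}\bigl|\fyx^*(y,\x,\bb_0,\bg_0)-\fyx(y\mid\x)\bigr| = O(h^q),
\end{equation*}
which is the first assertion. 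The key bookkeeping issue here is to make sure that the $O(h^q)$ bounds on both $\exp\{\delta\}-1$ and $D(\x)-1$ are genuinely uniform over $\x$; this follows because $\sup_\x \fyx$ is bounded by Condition \ref{con:bdd} and $\|\delta\|_\infty$ does not depend on $\x$ at all.

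For the second statement I would bound
\begin{equation*}
\bigl|E^*\{g_1(Y)\mid\x,\bb_0,\bg_0\}-E\{g_1(Y)\mid\x\}\bigr|
\le \sup_{y\in[0,1]}\bigl|\fyx^*(y,\x,\bb_0,\bg_0)-\fyx(y\mid\x)\bigr|\cdot\|g_1(\cdot)\|_1,
\end{equation*}
using the first assertion and pulling the uniform bound out of the integral. The third statement is then an algebraic consequence of the decomposition
\begin{equation*}
\cov^* - \cov \;=\; \bigl[E^*\{g_1 g_2\}-E\{g_1 g_2\}\bigr] - \bigl[E^*\{g_1\}-E\{g_1\}\bigr] E^*\{g_2\} - E\{g_1\}\bigl[E^*\{g_2\}-E\{g_2\}\bigr].
\end{equation*}
The first bracket is controlled by the second assertion applied to $g_1 g_2$, contributing the $\|g_1 g_2\|_1$ term. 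For the remaining two brackets, use the second assertion together with the elementary a priori bound $|E\{g_i(Y)\mid\x\}|\le C_f\|g_i(\cdot)\|_1$ (and the same bound for $E^*$, which holds because $E^* = E + O(h^q)\|g_i\|_1$), giving $O(h^q)\|g_1\|_1\|g_2\|_1$. Summing yields the stated bound.

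The main obstacle is really just the first statement; once the ratio representation is in place and uniformity is secured, the second and third follow by routine manipulations. I would expect the only place requiring genuine care is verifying that the implicit constants in the Taylor remainder and in $D(\x)-1$ are independent of $\x$, which is guaranteed precisely by the compact-support and boundedness assumptions in Condition \ref{con:bdd}.
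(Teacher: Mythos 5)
Your proposal is correct and follows essentially the same route as the paper: the second claim is obtained by pulling the uniform density bound out of the integral against $g_1$, and the third by telescoping the product of expectations and using $|E\{g_i(Y)\mid\x\}|\le C_f\|g_i(\cdot)\|_1$, exactly as in the paper's proof. The only difference is that you supply an explicit ratio-representation/Taylor argument for the first display (uniform $O(h^q)$ closeness of $\fyx^*$ to $\fyx$), which the paper simply asserts as a consequence of Conditions (C1)--(C5); your argument for that step is sound.
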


\begin{proof}
Conditions \ref{con:bdd}-\ref{con:deboor} implies there exists a constant $0<C_f^*<\infty$ such that
\bse
\sup_{y\in[0,1],\x\in\mx}|\fyx^*(y,\x,\bb_0,\bg_0)-\fyx(y|\x)| \leq C_f^* h^q.
\ese
This implies
\bse
&& \sup_{\x\in\mx}|E^*\{g_1(Y)|\x,\bb_0,\bg_0\}-E\{g_1(Y)|\x\}| \\
&=& \sup_{\x\in\mx}\left|\int_0^1 g_1(y)\{\fyx^*(y,\x,\bb_0,\bg_0)-\fyx(y|\x)\}dy\right| \\
&\leq& \|g_1(\cdot)\|_1\sup_{y\in[0,1],\x\in\mx}|\fyx^*(y,\x,\bb_0,\bg_0)-\fyx(y|\x)| \\
&\leq& C_f^* h^q\|g_1(\cdot)\|_1 \\
&=& O(h^q)\|g_1(\cdot)\|_1.
\ese
Similarly, we have
\bse
\sup_{\x\in\mx}|E^*\{g_1(Y)g_2(Y)|\x,\bb_0,\bg_0\}-E\{g_1(Y)g_2(Y)|\x\}|
 = O(h^q)\|g_1(\cdot)g_2(\cdot)\|_1.
\ese
Now, note that under Condition \ref{con:bdd}, there exists a constant $0<C_f<\infty$ such that
\bse
\sup_{y\in[0,1],\x\in\mx}\fyx(y|\x) \leq C_f.
\ese
Then we get
\bse
&& |E^*\{g_1(Y)|\x,\bb_0,\bg_0\}E^*\{g_2(Y)|\x,\bb_0,\bg_0\}-E\{g_1(Y)|\x\}E\{g_2(Y)|\x\}| \\
&=& |O(h^q)E\{g_1(Y)|\x\}\|g_2(\cdot)\|_1 + O(h^q)\|g_1(\cdot)\|_1 E\{g_2(Y)|\x\} + O(h^{2q})\|g_1(\cdot)\|_1\|g_2(\cdot)\|_1| \\
&\leq& O(h^q)\{2C_f+O(h^q)\}\|g_1(\cdot)\|_1\|g_2(\cdot)\|_1
\ese
uniformly in $\x$ by Condition \ref{con:bdd}. Therefore,
\bse
&& \sup_{\x\in\mx}|\cov^*\{g_1(Y),g_2(Y)|\x,\bb_0,\bg_0\}-\cov\{g_1(Y),g_2(Y)|\x\}| \\
&\leq& \sup_{\x\in\mx}|E^*\{g_1(Y)g_2(Y)|\x,\bb_0,\bg_0\}-E\{g_1(Y)g_2(Y)|\x\}| \\
&&+ \sup_{\x\in\mx}|E^*\{g_1(Y)|\x,\bb_0,\bg_0\}E^*\{g_2(Y)|\x,\bb_0,\bg_0\}-E\{g_1(Y)|\x\}E\{g_2(Y)|\x\}| \\
&=& O(h^q)\{\|g_1(\cdot)g_2(\cdot)\|_1+\|g_1(\cdot)\|_1\|g_2(\cdot)\|_1\}.
\ese
\end{proof}

\begin{Lem} \label{lem:bbbg}
Let $g_1(\cdot)$ be a function which satisfies $\|g_1(\cdot)\|_1<\infty$.
Let $\bb^*, \bg^*$ satisfy $\|\bb^*-\bb_0\|_2=o_p(1)$ and $\|\bg^*-\bg_0\|_2=o_p(1)$.
Under Conditions \ref{con:bdd}-\ref{con:deboor},
\bse
\sup_{y\in[0,1],\x\in\mx}|\fyx^*(y,\x,\bb^*,\bg^*)-\fyx(y|\x)|&=&o_p(1), \\
\sup_{\x\in\mx}|E^*\{g_1(Y)|\x,\bb^*,\bg^*\}-E\{g_1(Y)|\x\}| &=& o_p(1)\|g_1(\cdot)\|_1.
\ese
In addition, if $g_2(\cdot)$ satisfies $\|g_2(\cdot)\|_1<\infty$ and
$\|g_1(\cdot)g_2(\cdot)\|_1<\infty$, then
\bse
&& \sup_{\x\in\mx}|\cov^*\{g_1(Y),g_2(Y)|\x,\bb^*,\bg^*\}-\cov\{g_1(Y),g_2(Y)|\x\}| \\
&=& o_p(1)\{\|g_1(\cdot)g_2(\cdot)\|_1+\|g_1(\cdot)\|_1\|g_2(\cdot)\|_1\}.
\ese
\end{Lem}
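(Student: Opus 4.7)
The plan is to reduce Lemma \ref{lem:bbbg} to Lemma \ref{lem:deboor} via a triangle inequality together with a uniform continuity argument in $(\bb,\bg)$. For the first claim, I would add and subtract $\fyx^*(y,\x,\bb_0,\bg_0)$:
$$\fyx^*(y,\x,\bb^*,\bg^*)-\fyx(y|\x)=\big\{\fyx^*(y,\x,\bb^*,\bg^*)-\fyx^*(y,\x,\bb_0,\bg_0)\big\}+\big\{\fyx^*(y,\x,\bb_0,\bg_0)-\fyx(y|\x)\big\}.$$
The second brace is $O(h^q)=o(1)$ by Lemma \ref{lem:deboor}, so only the stochastic perturbation needs new work.

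For the first brace, the key observation is that although $\dim\bg=m\to\infty$, at each $y$ only $r$ B-spline bases are nonzero and each is bounded by $1$, so $\|\B(y)\|_2\le\sqrt{r}=O(1)$ uniformly in $y$. Combined with compactness of $\mx$ and $[0,1]$, Cauchy-Schwarz yields
$$\sup_{y\in[0,1],\x\in\mx}\Big|\{y\bb^{*\trans}\x+\B(y)\trans\bg^*\}-\{y\bb_0\trans\x+\B(y)\trans\bg_0\}\Big|=o_p(1).$$
Exponentiating and dividing by the normalizing integral $\int_0^1\exp\{y\bb\trans\x+\B(y)\trans\bg\}dy$ then gives the desired uniform $o_p(1)$ bound on the density difference, once I verify that this integral at $(\bb^*,\bg^*)$ is bounded above and away from zero uniformly in $\x$ with probability tending to one. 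I would obtain this from Lemma \ref{lem:deboor} and Condition \ref{con:bdd} (which together pin down the $(\bb_0,\bg_0)$-value of the integral between two positive constants), and then transfer to $(\bb^*,\bg^*)$ using the $o_p(1)$ exponent bound just derived.

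The second claim follows immediately by writing
$$E^*\{g_1(Y)|\x,\bb^*,\bg^*\}-E\{g_1(Y)|\x\}=\int_0^1 g_1(y)\{\fyx^*(y,\x,\bb^*,\bg^*)-\fyx(y|\x)\}dy$$
and pulling $\|g_1(\cdot)\|_1$ out of the supremum via the uniform density bound from the first claim. The covariance claim follows by expanding $\cov^*=E^*(g_1g_2)-E^*(g_1)E^*(g_2)$ and applying the mean bound to each piece, together with $|E\{g_i(Y)|\x\}|\le C_f\|g_i(\cdot)\|_1$ coming from Condition \ref{con:bdd}. The main obstacle is the uniform control of $\|\B(y)\|_2$ despite $m\to\infty$; this rests entirely on the compact-support property of B-splines. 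With that in hand, the remainder mirrors Lemma \ref{lem:deboor} with its deterministic $O(h^q)$ error replaced by a stochastic $o_p(1)$ error arising from the $(\bb,\bg)$ perturbation.
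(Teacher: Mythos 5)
Your proof is correct and follows essentially the same route as the paper: the paper's own argument is exactly your triangle-inequality decomposition through $\fyx^*(y,\x,\bb_0,\bg_0)$, invoking Lemma \ref{lem:deboor} for the deterministic $O(h^q)$ term and asserting that the perturbation term is $o_p(1)$ before concluding "similarly to Lemma \ref{lem:deboor}." Your explicit control of the exponent via $\sup_{y\in[0,1]}\|\B(y)\|_2=O(1)$ and the uniform positivity of the normalizing integral merely fills in the detail the paper leaves implicit, and is accurate.
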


\begin{proof}
The result can be obtained similarly to Lemma \ref{lem:deboor}
by noting that under Conditions \ref{con:bdd}-\ref{con:deboor}, we have
\bse
&& \sup_{y\in[0,1],\x\in\mx}|\fyx^*(y,\x,\bb^*,\bg^*)-\fyx(y|\x)| \n\\
&\leq& \sup_{y\in[0,1],\x\in\mx}|\fyx^*(y,\x,\bb^*,\bg^*)-\fyx^*(y,\x,\bb_0,\bg_0)| \n\\
&& +\sup_{y\in[0,1],\x\in\mx}|\fyx^*(y,\x,\bb_0,\bg_0)-\fyx(y|\x)| \n\\
&=& o_p(1).
\ese
\end{proof}

\subsection{Proof of Proposition \ref{pro:bg}}

First, we will show that $\|\wh\bg(\bb_0)-\bg_0\|_2=O_p\{(nh)^{-1/2}\}$.
Notice that the Hessian of the loglikelihood with respect to $\bg$
\bse
\frac{\partial l(\bb_0,\bg)}{\partial\bg\partial\bg\trans}
= -\sumi\cov^*\{\B(Y),\B(Y)|\x_i,\bb_0,\bg\}
\ese
is negative definite.
Hence, it suffices to show the existence of a local maximizer inside a ball
centered at $\bg_0$ with radius of order $(nh)^{-1/2}$.
We will show this by proving for any $\epsilon>0$, there exists a constant $C>0$ such that
\be \label{eq:lbgpr}
\pr\left[l(\bb_0,\bg_0)>\sup_{\|\v\|_2=C}l\{\bb_0,\bg_0+(nh)^{-1/2}\v\}\right]\geq 1-2\epsilon
\ee
for sufficiently large $n$. 

Let $\v\in\mR^m$ be an arbitrary vector with $\|\v\|_2=C$ and
$\bg^*\equiv\alpha\{\bg_0+(nh)^{-1/2}\v\}+(1-\alpha)\bg_0$ for some $\alpha\in(0,1)$.
By the Taylor expansion, we have
\be \label{eq:lbg}
l\{\bb_0,\bg_0+(nh)^{-1/2}\v\} - l(\bb_0,\bg_0)
= (nh)^{-1/2}\frac{\partial l(\bb_0,\bg_0)}{\partial\bg\trans}\v
+ \frac{1}{2}(nh)^{-1}\v\trans\frac{\partial^2 l(\bb_0,\bg^*)}{\partial\bg\partial\bg\trans}\v.
\ee
We first analyze $\partial l(\bb_0,\bg_0)/\partial\bg$. 
Let $\u\in U$ where $U\equiv\{\u\in\mR^m:\|\u\|_2=1\}$, then
\be \label{eq:dldg}
n^{-1}\frac{\partial l(\bb_0,\bg_0)}{\partial\bg\trans}\u
&=& n^{-1}\sumi[\B(y_i)\trans\u-E^*\{\B(Y)\trans\u|\x_i,\bb_0,\bg_0\}] \n\\
&=& n^{-1}\sumi[\B(y_i)\trans\u-E\{\B(Y)\trans\u|\x_i\}] \n\\
&&+ n^{-1}\sumi[E\{\B(Y)\trans\u|\x_i\}-E^*\{\B(Y)\trans\u|\x_i,\bb_0,\bg_0\}].
\ee
The first term of \eqref{eq:dldg} satisfies
\bse
&& E\left\{\left(n^{-1}\sumi\left[\B(Y_i)\trans\u - E\{\B(Y)\trans\u|\X_i\}\right]\right)^2\right\} \n\\
&=& n^{-1}E\left(\left[\B(Y)\trans\u-E\{\B(Y)\trans\u|\X\}\right]^2\right) \n\\
&=& n^{-1}E[\var\{\B(Y)\trans\u|\X\}].
\ese
Furthermore, by Lemma \ref{lem:bvar} we have $\sup_{\u\in U} E[\var\{\B(Y)\trans\u|\X\}] \asymp h$,
which implies
\be \label{eq:dldg1}
\left\|n^{-1}\sumi[\B(y_i)-E\{\B(Y)|\x_i\}]\right\|_2
&=& \sup_{\u\in U} \left|n^{-1}\sumi[\B(y_i)\trans\u-E\{\B(Y)\trans\u|\x_i\}]\right| \n\\
&\asymp_p& n^{-1/2}h^{1/2}.
\ee
Now, the second term of \eqref{eq:dldg} satisfies
\bse
&& \left|n^{-1}\sumi[E\{\B(Y)\trans\u|\x_i\}-E^*\{\B(Y)\trans\u|\x_i,\bb_0,\bg_0\}]\right| \\
&\leq& n^{-1}\sumi \sup_{\x\in\mx}|E\{\B(Y)\trans\u|\x\}-E^*\{\B(Y)\trans\u|\x,\bb_0,\bg_0\}| \\
&=& O(h^q)\|\B(\cdot)\trans\u\|_1 \\
&=& O(h^{q+1/2})
\ese
by Lemma \ref{lem:bnorm} and \ref{lem:deboor}. Hence, we get
\be \label{eq:dldg2}
\left\|n^{-1}\sumi[E\{\B(Y)|\x_i\}-E^*\{\B(Y)|\x_i,\bb_0,\bg_0\}]\right\|_2
&=& O(h^{q+1/2}) \n\\
&=& o(n^{-1/2}h^{1/2})
\ee
under Conditions \ref{con:knots} and \ref{con:space}.
Therefore, by \eqref{eq:dldg}, \eqref{eq:dldg1} and \eqref{eq:dldg2}, we get
\bse
\left\|\frac{\partial l(\bb_0,\bg_0)}{\partial\bg}\right\|_2\asymp_p (nh)^{1/2},
\ese
hence for any $\epsilon>0$, there exists a constant $0<C_1<\infty$ such that
\be \label{eq:dldgpr}
\pr\left\{\left\|\frac{\partial l(\bb_0,\bg_0)}{\partial\bg}\right\|_2
\leq C_1(nh)^{1/2}\right\}\geq 1-\epsilon.
\ee
Next, to analyze $\partial l(\bb_0,\bg^*)/\partial\bg\partial\bg\trans$, we have
\be \label{eq:dldggu}
\u\trans\left\{-n^{-1}\frac{\partial^2 l(\bb_0,\bg^*)}{\partial\bg\partial\bg\trans}\right\}\u
&=& n^{-1}\sumi\var^*\{\B(Y)\trans\u|\x_i,\bb_0,\bg^*\} \n\\
&=& n^{-1}\sumi[\var^*\{\B(Y)\trans\u|\x_i,\bb_0,\bg^*\}-\var\{\B(Y)\trans\u|\x_i\}] \n\\
&&+ \left(n^{-1}\sumi\var\{\B(Y)\trans\u|\x_i\}-E[\var\{\B(Y)\trans\u|\X\}]\right) \n\\
&&+ E[\var\{\B(Y)\trans\u|\X\}].
\ee
Note that $\|\bg^*-\bg_0\|_2=O\{(nh)^{-1/2}\}=o(1)$ by construction.
Then by Lemma \ref{lem:bnorm} and \ref{lem:bbbg}, the first term satisfies
\be \label{eq:dldggu1}
&& \left|n^{-1}\sumi[\var^*\{\B(Y)\trans\u|\x_i,\bb_0,\bg^*\}-\var\{\B(Y)\trans\u|\x_i\}]\right| \n\\
&\leq& n^{-1}\sumi\sup_{\x\in\mx}|\var^*\{\B(Y)\trans\u|\x,\bb_0,\bg^*\}-\var\{\B(Y)\trans\u|\x\}| \n\\
&=& o_p(1)\left[\|\{\B(\cdot)\trans\u\}^2\|_1+\|\B(\cdot)\trans\u\|_1^2\right] \n\\
&=& o_p(1)\left\{\|\B(\cdot)\trans\u\|_2^2+\|\B(\cdot)\trans\u\|_1^2\right\} \n\\
&=& o_p(h).
\ee
For the second term of \eqref{eq:dldggu}, first let $w_i\equiv\var\{\B(Y)\trans\u|\x_i\}$.
Note that by Lemma \ref{lem:bvar},
there exists a constant $C$ so that $0\le w_i\le Ch$ and $0\le E(W_i)\le Ch$ for all $i=1,\dots,n$.
Therefore, there exists a constant $C_w$, where $0<C_w<\infty$,
such that $|w_i-E(W_i)|\leq C_w h$ for all $i=1,\dots,n$.
This also directly implies $\var(W_i)\leq C_w^2 h^2$.
Then, by the Bernstein's inequality, we have
\bse
\pr\left\{n^{-1}\sumi w_i-E(W_i)\geq\epsilon\right\}
&\leq& \exp\left\{-\frac{n^2\epsilon^2/2}{\sumi \var(W_i)+C_w h n\epsilon/3}\right\} \\
&\leq& \exp\left\{-\frac{n^2\epsilon^2/2}{n C_w^2 h^2 +C_w h n\epsilon/3}\right\} \\
&\to& 0
\ese
for $\epsilon=h(\log n)^{-1}$ as $n\to\infty$. Note that $\epsilon=o(h)$. Thus, we get
\be \label{eq:dldggu2}
n^{-1}\sumi w_i-E(W_i)
&=& n^{-1}\sumi\var\{\B(Y)\trans\u|\x_i\}-E[\var\{\B(Y)\trans\u|\X\}] \n\\
&=& o_p(h).
\ee
Now for the last term of \eqref{eq:dldggu}, we have
\be \label{eq:v}
\|\bSig_{22}\|_2 
&=& \sup_{\u\in U}\u\trans\bSig_{22}\u \n\\
&=& \sup_{\u\in U}E[\var\{\B(Y)\trans\u|\X\}] \n\\
&\asymp& h
\ee 
by Lemma \ref{lem:bvar}.
Hence Condition \ref{con:bb} implies that $\bSig_{22}$ has all eigenvalues of order $h$.
Moreover, from \eqref{eq:dldggu}, \eqref{eq:dldggu1}, \eqref{eq:dldggu2} and \eqref{eq:v}, we have
\be \label{eq:dldggv}
\u\trans\left\{-n^{-1}\frac{\partial^2 l(\bb_0,\bg^*)}{\partial\bg\partial\bg\trans}\right\}\u = \u\trans\bSig_{22}\u+o_p(h)
\ee
uniformly for all $\u$.
Thus, all eigenvalues of $\partial^2 l(\bb_0,\bg^*)/\partial\bg\partial\bg\trans$ are negative and of order $nh$
with probability approaching 1, which implies there exists a constant $0<C_2<\infty$ such that
\be \label{eq:dldggpr}
\pr\left\{\v\trans\frac{\partial^2 l(\bb_0,\bg^*)}{\partial\bg\partial\bg\trans}\v\leq-C^2C_2nh\right\}\geq 1-\epsilon.
\ee 
Combining \eqref{eq:lbg}, \eqref{eq:dldgpr}, and \eqref{eq:dldggpr}, we get
\bse
l\{\bb_0,\bg_0+(nh)^{-1/2}\v\} - l(\bb_0,\bg_0)
&\leq& C(nh)^{-1/2}C_1(nh)^{1/2} + \frac{1}{2}(nh)^{-1}(-C^2C_2nh) \\
&=& C\left(C_1 - \frac{C_2}{2}C\right) \\
&<& 0
\ese
with probability at least $1-2\epsilon$ when $C>2C_1/C_2$.
This proves \eqref{eq:lbgpr}, hence we have $\|\wh\bg(\bb_0)-\bg_0\|_2=O_p\{(nh)^{-1/2}\}$. 

Now we analyze the asymptotic behavior of $\wh\bg(\bb_0)$.
Note that $\wh\bg(\bb_0)$ maximizes $l(\bb_0,\bg)$,
then by using the Taylor expansion with $\bg^*$ on the line connecting $\wh\bg(\bb_0)$ and $\bg_0$,
we get
\bse
\0
&=& \frac{\partial l\{\bb_0,\wh\bg(\bb_0)\}}{\partial\bg} \\
&=& \frac{\partial l(\bb_0,\bg_0)}{\partial\bg}
+ \frac{\partial^2 l(\bb_0,\bg^*)}{\partial\bg\partial\bg\trans}\{\wh\bg(\bb_0)-\bg_0\}.
\ese
This leads to
\bse
\wh\bg(\bb_0)-\bg_0
&=& \left\{-n^{-1}\frac{\partial^2 l(\bb_0,\bg^*)}{\partial\bg\partial\bg\trans}\right\}^{-1}
n^{-1}\frac{\partial l(\bb_0,\bg_0)}{\partial\bg} \\
&=& \bSig_{22}^{-1}n^{-1}\sumi[\B(y_i)-E\{\B(Y)|\x_i\}] + \r_1,
\ese
where
\bse
\r_1
&\equiv& \left[\left\{-n^{-1}\frac{\partial^2 l(\bb_0,\bg^*)}{\partial\bg\partial\bg\trans}\right\}^{-1} - \bSig_{22}^{-1}\right]
n^{-1}\sumi[\B(y_i)-E\{\B(Y)|\x_i\}] \\
&&+ \left\{-n^{-1}\frac{\partial^2 l(\bb_0,\bg^*)}{\partial\bg\partial\bg\trans}\right\}^{-1}
\left(n^{-1}\frac{\partial l(\bb_0,\bg_0)}{\partial\bg} - n^{-1}\sumi[\B(y_i)-E\{\B(Y)|\x_i\}]\right).
\ese
Note that
\bse
&& \left\|\left\{-n^{-1}\frac{\partial^2 l(\bb_0,\bg^*)}{\partial\bg\partial\bg\trans}\right\}^{-1} - \bSig_{22}^{-1}\right\|_2 \\
&=& \left\|\left\{-n^{-1}\frac{\partial^2 l(\bb_0,\bg^*)}{\partial\bg\partial\bg\trans}\right\}^{-1}
\left\{\bSig_{22} + n^{-1}\frac{\partial^2 l(\bb_0,\bg^*)}{\partial\bg\partial\bg\trans}\right\}\bSig_{22}^{-1}\right\|_2 \\
&\leq& \left\|\left\{-n^{-1}\frac{\partial^2 l(\bb_0,\bg^*)}{\partial\bg\partial\bg\trans}\right\}^{-1}\right\|_2
\left\|\bSig_{22} + n^{-1}\frac{\partial^2 l(\bb_0,\bg^*)}{\partial\bg\partial\bg\trans}\right\|_2\left\|\bSig_{22}^{-1}\right\|_2 \\
&\asymp_p& h^{-1}\times o_p(h)\times h^{-1} \\
&=& o_p(h^{-1}),
\ese
by \eqref{eq:v}, \eqref{eq:dldggv} and Condition \ref{con:bb}.
Furthermore, by \eqref{eq:dldg} and \eqref{eq:dldg2}, we have
\bse
\left\|n^{-1}\frac{\partial l(\bb_0,\bg_0)}{\partial\bg}
- n^{-1}\sumi[\B(y_i)-E\{\B(Y)|\x_i\}]\right\|_2
= o_p(n^{-1/2}h^{1/2}).
\ese
Hence $\r_1$ satisfies
\bse
\|\r_1\|_2
&\leq& o_p(h^{-1})n^{-1/2}h^{1/2} + h^{-1}o_p(n^{-1/2}h^{1/2}) \n\\
&=& o_p\{(nh)^{-1/2}\}
\ese
by \eqref{eq:dldg1} and \eqref{eq:dldggv}.
Then by Condition \ref{con:deboor}, this leads to
\bse
&&\sup_{y\in[0,1]}|\wh c(y,\bb_0)-c(y)| \\
&\le& \sup_{y\in[0,1]}|\B(y)\trans\{\wh\bg(\bb_0)-\bg_0\}|+\sup_{y\in[0,1]}|\B(y)\trans\bg_0-c(y)|\\
&\le& \sup_{y\in[0,1]}\|\B(y)\|_2
\left(\|\bSig_{22}^{-1}\|_2\left\|n^{-1}\sumi[\B(y_i)-E\{\B(Y)|\x_i\}]\right\|_2 + \|\r_1\|_2\right) +O(h^q)\\
&=& O(1)O(h^{-1})O_p(n^{-1/2}h^{1/2}) + o_p\{(nh)^{-1/2}\}+O(h^q)\\
&=& O_p\{(nh)^{-1/2}+h^q\}.
\ese
\qed

\subsection{Proof of Proposition \ref{pro:bb}} \label{proof:bb}

First we will show for any $\epsilon>0$, there exists a constant $C>0$ such that
\be
\pr\left[l\{\bb_0,\wh\bg(\bb_0)\}>
\sup_{\|\v\|_2=C}l\{\bb_0+n^{-1/2}\v,\wh\bg(\bb_0+n^{-1/2}\v)\}\right] \geq 1-3\epsilon
\label{eq:lmax}
\ee
for a sufficiently large $n$.
This implies there exists a local maximizer of $l\{\bb,\wh\bg(\bb)\}$,
say $\wt\bb$, such that $\|\wt\bb-\bb_0\|_2=O_p(n^{-1/2})$.
We will further show the Hessian of $l\{\bb,\wh\bg(\bb)\}$ is negative definite for any $\bb$,
hence $\wt\bb$ is the global maximizer of $l\{\bb,\wh\bg(\bb)\}$,
i.e. $\wh\bb=\wt\bb$, and 
hence $\|\wh\bb-\bb_0\|_2=O_p(n^{-1/2})$.

By the Taylor expansion,
\be
&& l\{\bb_0+n^{-1/2}\v,\wh\bg(\bb_0+n^{-1/2}\v)\} - l\{\bb_0,\wh\bg(\bb_0)\} \n\\
&=& n^{-1/2}\frac{d l\{\bb_0,\wh\bg(\bb_0)\}}{d\bb\trans}\v
+ \frac{1}{2}n^{-1}\v\trans
\frac{d^2 l\{\bb^*,\wh\bg(\bb^*)\}}{d\bb d\bb\trans}\v,
\label{eq:lt2}
\ee
where $\bb^*\equiv\alpha_1(\bb_0+n^{-1/2}\v)+(1-\alpha_1)\bb_0$ for
some $\alpha_1\in(0,1)$. Noting that 
$\partial l\{\bb,\wh\bg(\bb)\}/\partial\bg=\0$ for any $\bb$, we get
\bse
\frac{d l\{\bb_0,\wh\bg(\bb_0)\}}{d\bb}
&=& \frac{\partial l\{\bb_0,\wh\bg(\bb_0)\}}{\partial\bb}
+ \frac{\partial\wh\bg\trans(\bb_0)}{\partial\bb}
\frac{\partial l\{\bb_0,\wh\bg(\bb_0)\}}{\partial\bg} \\
&=& \frac{\partial l\{\bb_0,\wh\bg(\bb_0)\}}{\partial\bb}\\
&=& \frac{\partial l(\bb_0,\bg_0)}{\partial\bb}
+ \frac{\partial^2 l(\bb_0,\bg^*)}{\partial\bb\partial\bg\trans}\{\wh\bg(\bb_0)-\bg_0\},
\ese
where $\bg^*=\alpha_2\wh\bg(\bb_0)+(1-\alpha_2)\bg_0$ for some
$\alpha_2\in(0,1)$. In addition,
\bse
\frac{d^2 l\{\bb^*,\wh\bg(\bb^*)\}}{d\bb d\bb\trans}
&=& \frac{d}{d\bb\trans}
\left[\frac{\partial l\{\bb,\wh\bg(\bb)\}}{\partial\bb}\right]\biggr|_{\bb=\bb^*} \\
&=& \frac{\partial^2 l\{\bb^*,\wh\bg(\bb^*)\}}{\partial\bb\partial\bb\trans}
+ \frac{\partial^2 l\{\bb^*,\wh\bg(\bb^*)\}}{\partial\bb\partial\bg\trans}
\frac{\partial\wh\bg(\bb^*)}{\partial\bb\trans}.
\ese
Now since $\partial l\{\bb,\wh\bg(\bb)\}/\partial\bg=\0$ for any $\bb$,
\bse
\0
&=& \frac{d}{d\bb\trans}\left[\frac{\partial l\{\bb,\wh\bg(\bb)\}}{\partial\bg}\right] \\
&=& \frac{\partial^2 l\{\bb,\wh\bg(\bb)\}}{\partial\bg\partial\bb\trans}
+ \frac{\partial^2 l\{\bb,\wh\bg(\bb)\}}{\partial\bg\partial\bg\trans}
\frac{\partial\wh\bg(\bb)}{\partial\bb\trans}.
\ese
Note that
\bse
\frac{\partial^2 l(\bb,\bg)}{\partial\bg\partial\bg\trans}
= -\sumi\cov^*\{\B(Y),\B(Y)|\x_i,\bb,\bg\}
\ese
is negative definite. Then,
\be
\frac{\partial \wh\bg(\bb)}{\partial \bb\trans}
= -\left[\frac{\partial^2 l\{\bb,\wh\bg(\bb)\}}{\partial\bg\partial\bg\trans}\right]^{-1}
\frac{\partial^2 l\{\bb,\wh\bg(\bb)\}}{\partial\bg\partial\bb\trans}. \label{eq:dgdb}
\ee
Hence, we can write (\ref{eq:lt2}) as
\be
&& l\{\bb_0+n^{-1/2}\v,\wh\bg(\bb_0+n^{-1/2}\v)\} - l\{\bb_0,\wh\bg(\bb_0)\} \n\\
&=& n^{-1/2}\left[\frac{\partial l(\bb_0,\bg_0)}{\partial\bb}
+ \frac{\partial^2 l(\bb_0,\bg^*)}{\partial\bb\partial\bg\trans}
\{\wh\bg(\bb_0)-\bg_0\}\right]\trans\v \label{eq:lt} \\
&&+ \frac{1}{2}\v\trans n^{-1}\Bigg(
\frac{\partial^2 l\{\bb^*,\wh\bg(\bb^*)\}}{\partial\bb\partial\bb\trans} 
- \frac{\partial^2 l\{\bb^*,\wh\bg(\bb^*)\}}{\partial\bb\partial\bg\trans}
\left[\frac{\partial^2 l\{\bb^*,\wh\bg(\bb^*)\}}{\partial\bg\partial\bg\trans}\right]^{-1}
\frac{\partial^2 l\{\bb^*,\wh\bg(\bb^*)\}}{\partial\bg\partial\bb\trans} \Bigg)\v.\n
\ee
In (\ref{eq:lt}), we first analyze $\partial l(\bb_0,\bg_0)/\partial\bb$. We have
\bse
n^{-1}\frac{\partial l(\bb_0,\bg_0)}{\partial\bb}
&=& n^{-1}\sumi\x_i\{y_i-E^*(Y|\x_i,\bb_0,\bg_0)\} \\
&=& n^{-1}\sumi\x_i\{y_i-E(Y|\x_i)\}
+ n^{-1}\sumi\x_i\{E(Y|\x_i)-E^*(Y|\x_i,\bb_0,\bg_0)\}.
\ese
By Conditions \ref{con:bdd} and \ref{con:bSig}, the first term satisfies
\bse
\left\|n^{-1}\sumi\x_i\{y_i-E(Y|\x_i)\}\right\|_2\asymp_p n^{-1/2}.
\ese
For the second term, by Lemma \ref{lem:deboor} we get
\bse
&& \left\|n^{-1}\sumi\x_i\{E(Y|\x_i)-E^*(Y|\x_i,\bb_0,\bg_0)\}\right\|_2 \\
&\leq& n^{-1}\sumi\left\|\x_i\right\|_2\sup_{\x\in\mx}|E(Y|\x)-E^*(Y|\x,\bb_0,\bg_0)| \\
&=& O_p(h^q),
\ese
where the last equality holds because $E(\|\X\|_2)$ is bounded by Condition \ref{con:bdd}.
Note that $h^q=o_p(n^{-1/2})$ under Conditions \ref{con:knots} and \ref{con:space}.
Therefore, we have
\bse
\left\|\frac{\partial l(\bb_0,\bg_0)}{\partial\bb}\right\|_2 \asymp_p n^{1/2},
\ese
and
\be \label{eq:ldb}
\left\|\frac{\partial l(\bb_0,\bg_0)}{\partial\bb}
-\sumi\x_i\{y_i-E(Y|\x_i)\}\right\|_2 = o_p(n^{1/2}).
\ee
Moreover, for any $\epsilon>0$, there exists a constant $0<C_1<\infty$ such that
\be
\pr\left\{\left\|\frac{\partial l(\bb_0,\bg_0)}{\partial\bb}\right\|_2
\leq C_1n^{1/2}\right\} \geq 1-\epsilon. \label{eq:dldb}
\ee
Now we analyze $\partial^2 l(\bb,\bg)/\partial\bb\partial\bg\trans$.
Let $\u_m\in U_m\equiv\{\u\in\mR^m:\|\u\|_2=1\}$, then for fixed $\bb$ and $\bg$ we have
\bse
\left\|-n^{-1}\frac{\partial^2 l(\bb,\bg)}{\partial\bb\partial\bg\trans}\u_m\right\|_2
&=& \left\|n^{-1}\sumi \x_i\cov^*\{Y,\B(Y)\trans\u_m|\x_i,\bb,\bg\}\right\|_2 \\
&\leq& n^{-1}\sumi \left\|\x_i\right\|_2 \sup_{\x\in\mx}|\cov^*\{Y,\B(Y)\trans\u_m|\x,\bb,\bg\}| \\
&=& O_p(h^{1/2})
\ese
by Condition \ref{con:bdd} and Lemma \ref{lem:becov}.
Then using the fact that $U_m$ is compact, taking supremum with respect to $\u_m$ gives
\be \label{eq:dldbgorder}
\left\|\frac{\partial^2 l(\bb,\bg)}{\partial\bb\partial\bg\trans}\right\|_2
= O_p(nh^{1/2}).
\ee
Further recall that $\|\wh\bg(\bb_0)-\bg_0\|_2=O_p\{(nh)^{-1/2}\}$ by Proposition \ref{pro:bg}.
This implies
\bse
\left\|\frac{\partial^2 l(\bb_0,\bg^*)}{\partial\bb\partial\bg\trans}\{\wh\bg(\bb_0)-\bg_0\}\right\|_2
=O_p(n^{1/2}),
\ese
and equivalently, there exists a constant $0<C_2<\infty$ such that
\be \label{eq:dldbgg}
\pr\left\{
\left\|\frac{\partial^2 l(\bb_0,\bg^*)}{\partial\bb\partial\bg\trans}\{\wh\bg(\bb_0)-\bg_0\}\right\|_2
\leq C_2n^{1/2}\right\} \geq 1-\epsilon.
\ee

We next show $-n^{-1}\partial^2 l(\bb_0,\bg^*)/\partial\bb\partial\bg\trans$ converges to $\bSig_{12}$.
Note that $\|\bg^*-\bg_0\|_2=O_p\{(nh)^{-1/2}\}=o_p(1)$.
Then by Condition \ref{con:bdd}, Lemma \ref{lem:bnorm} and \ref{lem:bbbg}, we get
\be \label{eq:dldbgproof}
&&\left\|\left(-n^{-1}\frac{\partial^2 l(\bb_0,\bg^*)}{\partial\bb\partial\bg\trans}
- \bSig_{12} \right)\u_m\right\|_2 \n\\
&\leq& \left\|n^{-1}\sumi \x_i[\cov^*\{Y,\B(Y)\trans\u_m|\x_i,\bb_0,\bg^*\}
-\cov\{Y,\B(Y)\trans\u_m|\x_i\}]\right\|_2 \n\\
&&+ \left\|n^{-1}\sumi \x_i\cov\{Y,\B(Y)\trans\u_m|\x_i\}
-E[\X\cov\{Y,\B(Y)\trans\u_m|\X\}]\right\|_2 \n\\
&\leq& \left(n^{-1}\sumi \|\x_i\|_2\right)o_p(h^{1/2}) + o_p(n^{-1/2}) \n\\
&=& o_p(h^{1/2}).
\ee
The last equality holds because $n^{-1/2}=o(h^{1/2})$ by Conditions \ref{con:knots} and \ref{con:space}.
This implies
\be \label{eq:dldbg}
\left\|-n^{-1}\frac{\partial^2 l(\bb_0,\bg^*)}{\partial\bb\partial\bg\trans}
- \bSig_{12}\right\|_2 = o_p(h^{1/2}).
\ee

Now we show $-n^{-1}\partial^2 l\{\bb^*,\wh\bg(\bb^*)\}/\partial\bb\partial\bb\trans$ converges to $\bSig_{11}$.
First we will show that
$\|\wh\bg(\bb^*)-\bg_0\|_2=O_p\{(nh)^{-1/2}\}$. 
Note that since
\bse
&& \begin{bmatrix}
\partial^2 l\{\bb,\wh\bg(\bb)\}/\partial\bb\partial\bb\trans
& \partial^2 l\{\bb,\wh\bg(\bb)\}/\partial\bb\partial\bg\trans \\
\partial^2 l\{\bb,\wh\bg(\bb)\}/\partial\bg\partial\bb\trans
& \partial^2 l\{\bb,\wh\bg(\bb)\}/\partial\bg\partial\bg\trans
\end{bmatrix} \\
&=& -\sumi \begin{bmatrix}
\x_i\x_i\trans\var^*\{Y|\x_i,\bb,\wh\bg(\bb)\}
& \x_i\cov^*\{Y,\B(Y)|\x_i,\bb,\wh\bg(\bb)\} \\
[\x_i\cov^*\{Y,\B(Y)|\x_i,\bb,\wh\bg(\bb)\}]\trans
& \cov^*\{\B(Y),\B(Y)|\x_i,\bb,\wh\bg(\bb)\}
\end{bmatrix} \\
&=& -\sumi
\cov^*\left\{ \begin{bmatrix}\x_iY \\\B(Y)\end{bmatrix},
\begin{bmatrix}\x_iY \\ \B(Y)\end{bmatrix} \Bigg|\x_i,\bb,\wh\bg(\bb)\right\}
\ese
is negative definite, $\partial^2 l\{\bb,\wh\bg(\bb)\}/\partial\bb\partial\bb\trans$
and $\partial^2 l\{\bb,\wh\bg(\bb)\}/\partial\bg\partial\bg\trans$ are negative definite,
and the Schur complement of $\partial^2 l\{\bb,\wh\bg(\bb)\}/\partial\bg\partial\bg\trans$,
\bse
\frac{\partial^2 l\{\bb,\wh\bg(\bb)\}}{\partial\bb\partial\bb\trans} 
- \frac{\partial^2 l\{\bb,\wh\bg(\bb)\}}{\partial\bb\partial\bg\trans}
\left[\frac{\partial^2 l\{\bb,\wh\bg(\bb)\}}{\partial\bg\partial\bg\trans}\right]^{-1}
\frac{\partial^2 l\{\bb,\wh\bg(\bb)\}}{\partial\bg\partial\bb\trans},
\ese
is also negative definite. Let $\u_p\in U_p\equiv\{\u\in\mR^p:\|\u\|_2=1\}$, then we have
\bse
0 \leq \u_p\trans \frac{\partial^2 l\{\bb,\wh\bg(\bb)\}}{\partial\bb\partial\bg\trans}
\left[-\frac{\partial^2 l\{\bb,\wh\bg(\bb)\}}{\partial\bg\partial\bg\trans}\right]^{-1}
\frac{\partial^2 l\{\bb,\wh\bg(\bb)\}}{\partial\bg\partial\bb\trans}\u_p
< \u_p\trans\left[-\frac{\partial^2 l\{\bb,\wh\bg(\bb)\}}{\partial\bb\partial\bb\trans}\right]\u_p.
\ese
It is easy to see that
\bse
\u_p\trans\left[-\frac{\partial^2 l\{\bb,\wh\bg(\bb)\}}{\partial\bb\partial\bb\trans}\right]\u_p
=\sumi(\x_i\trans\u_p)^2\var^*\{Y|\x_i,\bb,\wh\bg(\bb)\}
\asymp_p n,
\ese
and
\bse
\left\|\frac{\partial^2l\{\bb,\wh\bg(\bb)\}}{\partial\bg\partial\bg\trans}\right\|_2
&=& \sup_{\u_m\in U_m} \u_m\trans\left[-\frac{\partial^2 l\{\bb,\wh\bg(\bb)\}}{\partial\bg\partial\bg\trans}\right]\u_m \\
&=& \sup_{\u_m\in U_m} \sumi\var^*\{\B(Y)\trans\u_m|\x_i,\bb,\wh\bg(\bb)\} \\
&\asymp_p& nh
\ese
by Lemma \ref{lem:bvar}. Hence we have
\be \label{eq:dgdborder}
\left\|\left[\frac{\partial^2 l\{\bb,\wh\bg(\bb)\}}{\partial\bg\partial\bg\trans}\right]^{-1}
\frac{\partial^2 l\{\bb,\wh\bg(\bb)\}}{\partial\bg\partial\bb\trans}\right\|_2=O_p(h^{-1/2}).
\ee
Now, \eqref{eq:dgdb} and the Taylor expansion of $\wh\bg(\bb^*)$ with
$\bb^{**}$ on the line connecting $\bb_0$ and $\bb^*$ gives
\bse
\wh\bg(\bb^*)-\bg_0
&=& \wh\bg(\bb_0) -\bg_0 
-\left[\frac{\partial^2 l\{\bb^{**},\wh\bg(\bb^{**})\}}{\partial\bg\partial\bg\trans}\right]^{-1}
\frac{\partial^2 l\{\bb^{**},\wh\bg(\bb^{**})\}}{\partial\bg\partial\bb\trans}(\bb^*-\bb_0).
\ese
Recall that $\|\wh\bg(\bb_0) -\bg_0\|_2=O_p\{(nh)^{-1/2}\}$ by Proposition \ref{pro:bg}
and $\|\bb^*-\bb_0\|_2=O(n^{-1/2})$ by construction.
These imply $\|\wh\bg(\bb^*)-\bg_0\|_2=O_p\{(nh)^{-1/2}\}$.
Therefore, we get
\bse
&&\left|\u_p\trans\left[-n^{-1}\frac{\partial^2 l\{\bb^*,\wh\bg(\bb^*)\}}{\partial\bb\partial\bb\trans}
- \bSig_{11} \right]\u_p\right| \\
&\leq& \left|n^{-1}\sumi (\x_i\trans\u_p)^2[\var^*\{Y|\x_i,\bb^*,\wh\bg(\bb^*)\}
-\var(Y|\x_i)]\right| \\
&&+ \left|n^{-1}\sumi (\x_i\trans\u_p)^2 \var(Y|\x_i)
-E\{(\X\trans\u_p)^2 \var(Y|\X)\}\right| \\
&=& o_p(1)
\ese
by Condition \ref{con:bdd} and Lemma \ref{lem:bbbg}, which implies
\be \label{eq:dldbb}
\left\|-n^{-1}\frac{\partial^2 l\{\bb^*,\wh\bg(\bb^*)\}}{\partial\bb\partial\bb\trans}
- \bSig_{11} \right\|_2
= o_p(1).
\ee

Below, we show $-n^{-1}d^2l\{\bb^*,\wh\bg(\bb^*)\}/d\bb d\bb\trans$ converges to $\bSig^*$.
Using similar argument to \eqref{eq:dldbgproof}, we get
\be \label{eq:dldbg2}
\left\|-n^{-1}\frac{\partial^2 l\{\bb^*,\wh\bg(\bb^*)\}}{\partial\bb\partial\bg\trans}
- \bSig_{12}\right\|_2 = o_p(h^{1/2}).
\ee
In addition, by the similar proof
to showing $-n^{-1}\partial^2 l(\bb_0,\bg^*)/\partial\bg\partial\bg\trans$
converges to $\bSig_{22}^{-1}$ in Proposition \ref{pro:bg}, we get
\be  \label{eq:dldgg}
\left\|\left[-n^{-1}\frac{\partial^2 l\{\bb^*,\wh\bg(\bb^*)\}}
{\partial\bg\partial\bg\trans}\right]^{-1}-\bSig_{22}^{-1}\right\|_2=o_p(h^{-1}).
\ee
Now, note that
\bse
\frac{d^2l\{\bb,\wh\bg(\bb)\}}{d\bb d\bb\trans}
= \frac{\partial^2 l\{\bb,\wh\bg(\bb)\}}{\partial\bb\partial\bb\trans}
- \frac{\partial^2 l\{\bb,\wh\bg(\bb)\}}{\partial\bb\partial\bg\trans}
\left[\frac{\partial^2 l\{\bb,\wh\bg(\bb)\}}{\partial\bg\partial\bg\trans}\right]^{-1}
\frac{\partial^2 l\{\bb,\wh\bg(\bb)\}}{\partial\bg\partial\bb\trans}.
\ese
Combining \eqref{eq:dldbb}, \eqref{eq:dldbg2}, and \eqref{eq:dldgg}, we have
\be \label{eq:dldbbA}
\left\|-n^{-1}\frac{d^2l\{\bb^*,\wh\bg(\bb^*)\}}{d\bb d\bb\trans}-\bSig^*\right\|_2= o_p(1),
\ee
where $\bSig^*$ is given in Condition \ref{con:bSig}.
This shows the Hessian at $\bb^*$ is negative definite
and its 2-norm is of order $n$ with probability approaching 1.
Hence, there exists a constant $0<C_3<\infty$ such that
\be \label{eq:hess}
\pr\left\{ 
\v\trans\frac{d^2l\{\bb^*,\wh\bg(\bb^*)\}}{d\bb d\bb\trans} \v \leq -C^2C_3n \right\} \geq 1-\epsilon.
\ee

Combining \eqref{eq:lt}, \eqref{eq:dldb}, \eqref{eq:dldbgg}, and \eqref{eq:hess}, 
we have
\bse
&& l\{\bb_0+n^{-1/2}\v,\wh\bg(\bb_0+n^{-1/2}\v)\} - l\{\bb_0,\wh\bg(\bb_0)\} \\
&\leq& Cn^{-1/2}(C_1n^{1/2} + C_2n^{1/2})
+ \frac{1}{2}n^{-1}(-C^2C_3n) \\
&=& C\left\{(C_1+C_2) - \frac{C_3}{2}C\right\} \\
&<& 0
\ese
with probability at least $1-3\epsilon$ when $C>2(C_1+C_2)/C_3$, which
proves \eqref{eq:lmax}.

We next show that the
Hessian of $l\{\bb,\wh\bg(\bb)\}$ is negative definite for all $\bb$,
hence the local maximizer inside the disk with radius $Cn^{-1/2}$ is
in fact the maximizer $\wh\bb$, i.e. 
$\|\wh\bb-\bb_0\|_2=O_p(n^{-1/2})$. We have
\bse
&& \begin{bmatrix}
\partial^2 l\{\bb,\wh\bg(\bb)\}/\partial\bb\partial\bb\trans
& \partial^2 l\{\bb,\wh\bg(\bb)\}/\partial\bb\partial\bg\trans \\
\partial^2 l\{\bb,\wh\bg(\bb)\}/\partial\bg\partial\bb\trans
& \partial^2 l\{\bb,\wh\bg(\bb)\}/\partial\bg\partial\bg\trans
\end{bmatrix} \\
&=& -\sumi \begin{bmatrix}
\x_i\x_i\trans\var^*\{Y|\x_i,\bb,\wh\bg(\bb)\}
& \x_i\cov^*\{Y,\B(Y)|\x_i,\bb,\wh\bg(\bb)\} \\
[\x_i\cov^*\{Y,\B(Y)|\x_i,\bb,\wh\bg(\bb)\}]\trans
& \cov^*\{\B(Y),\B(Y)|\x_i,\bb,\wh\bg(\bb)\}
\end{bmatrix} \\
&=& -\sumi
\cov^*\left\{ \begin{bmatrix}\x_iY \\\B(Y)\end{bmatrix},
\begin{bmatrix}\x_iY \\ \B(Y)\end{bmatrix} \Bigg|\x_i,\bb,\wh\bg(\bb)\right\}
\ese
is negative definite at all $\bb$.
Then since the Hessian is the Schur complement in the above matrix,
it is also negative definite for all $\bb$.
Hence we conclude that $\|\wh\bb-\bb_0\|_2=O_p(n^{-1/2})$.

Now, we analyze the asymptotic properties of $\wh\bb$.
Since $\wh\bb$ maximizes $l\{\bb,\wh\bg(\bb)\}$, we have
\bse
\0
&=& \frac{d l\{\wh\bb,\wh\bg(\wh\bb)\}}{d\bb} \\
&=& \frac{d l\{\bb_0,\wh\bg(\bb_0)\}}{d\bb}
+ \frac{d^2 l\{\bb^*,\wh\bg(\bb^*)\}}{d\bb d\bb\trans}(\wh\bb-\bb_0),
\ese
by the Taylor expansion with $\bb^*\equiv\alpha_1\wh\bb+(1-\alpha_1)\bb_0$ for some $\alpha_1\in(0,1)$.
This implies
\bse
\wh\bb-\bb_0
= \left[-n^{-1}\frac{d^2 l\{\bb^*,\wh\bg(\bb^*)\}}{d\bb d\bb\trans}\right]^{-1}
n^{-1}\frac{d l\{\bb_0,\wh\bg(\bb_0)\}}{d\bb}.
\ese
Recall that for $\bg^*=\alpha_2\wh\bg(\bb_0)+(1-\alpha_2)\bg_0$ for some $\alpha_2\in(0,1)$,
\bse
\frac{dl\{\bb_0,\wh\bg(\bb_0)\}}{d\bb}
=\frac{\partial l(\bb_0,\bg_0)}{\partial\bb}
+ \frac{\partial^2 l(\bb_0,\bg^*)}{\partial\bb\partial\bg\trans}\{\wh\bg(\bb_0)-\bg_0\}.
\ese
Then by \eqref{eq:ldb}, \eqref{eq:dldbgorder}, \eqref{eq:dldbg} and
Proposition \ref{pro:bg}, we can write the gradient as
\bse
\frac{d l\{\bb_0,\wh\bg(\bb_0)\}}{d\bb} 
&=& \sumi\x_i\{y_i-E(Y|\x_i)\} \n\\
&&- \bSig_{12}\bSig_{22}^{-1}\sumi[\B(y_i)-E\{\B(Y)|\x_i\}] + \r_2',
\ese
where $\|\r_2'\|_2=o_p(n^{1/2})$.
Hence by \eqref{eq:dldbbA},
\bse
\wh\bb-\bb_0
&=& \bSig^{*-1}n^{-1}\sumi\x_i\{y_i-E(Y|\x_i)\} \\
&&- \bSig^{*-1}\bSig_{12}\bSig_{22}^{-1}n^{-1}\sumi[\B(y_i)-E\{\B(Y)|\x_i\}] + \r_2,
\ese
where $\|\r_2\|_2=o_p(n^{-1/2})$.
Thus by the central limit theorem and the Slutsky's theorem, we get
\bse
{\bSig^*}^{1/2}\sqrt{n}(\wh\bb-\bb_0)\to N(\0, \I)
\ese
in distribution when $n\to\infty$.
\qed

\subsection{Lemmas}

We introduce another lemmas for asymptotic properties of the profile estimators.
\begin{Lem} \label{lem:whbg}
Under Conditions \ref{con:bdd}-\ref{con:bSig}, $\|\wh\bg(\wh\bb)-\bg_0\|_2=O_p\{(nh)^{-1/2}\}$ and
\bse
&& \wh\bg(\wh\bb)-\bg_0 \n\\
&=& - \bSig_{22}^{-1}\bSig_{21}\bSig^{*-1}n^{-1}\sumi\x_i\{y_i-E(Y|\x_i)\} \n\\
&&+ (\bSig_{22}^{-1}+\bSig_{22}^{-1}\bSig_{21}\bSig^{*-1}\bSig_{12}\bSig_{22}^{-1})
n^{-1}\sumi[\B(y_i)-E\{\B(Y)|\x_i\}] + \r_3,
\ese
where $\|\r_3\|_2=o_p\{(nh)^{-1/2}\}$.
\end{Lem}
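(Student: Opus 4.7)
The plan is to reduce the claim to the already-established expansions in Propositions \ref{pro:bg} and \ref{pro:bb} by Taylor expanding the profile function $\wh\bg(\cdot)$ around $\bb_0$, using the derivative formula (\ref{eq:dgdb}) derived in the proof of Proposition \ref{pro:bb}.

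First I would write, using the integral form of the mean value theorem,
\bse
\wh\bg(\wh\bb)-\wh\bg(\bb_0)
=\left\{\int_0^1\frac{\partial\wh\bg(\bb_t)}{\partial\bb\trans}dt\right\}(\wh\bb-\bb_0),
\ese
where $\bb_t\equiv\bb_0+t(\wh\bb-\bb_0)$, and invoke (\ref{eq:dgdb}) to rewrite each derivative matrix as the product of the two Hessian blocks. Then I would argue that the approximations already worked out for a single intermediate point in the proof of Proposition \ref{pro:bb}---namely that $-n^{-1}\partial^2 l\{\bb_t,\wh\bg(\bb_t)\}/\partial\bg\partial\bg\trans=\bSig_{22}+o_p(h)$ in 2-norm and $-n^{-1}\partial^2 l\{\bb_t,\wh\bg(\bb_t)\}/\partial\bg\partial\bb\trans=\bSig_{21}+o_p(h^{1/2})$ in 2-norm---hold uniformly over $t\in[0,1]$. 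This uniformity follows from the same Lemma \ref{lem:bbbg} arguments together with the fact that $\|\bb_t-\bb_0\|_2=O_p(n^{-1/2})$ uniformly in $t$ (so that $\|\wh\bg(\bb_t)-\bg_0\|_2=O_p\{(nh)^{-1/2}\}$ uniformly in $t$, by the same argument that yielded $\|\wh\bg(\bb^*)-\bg_0\|_2=O_p\{(nh)^{-1/2}\}$ in the proof of Proposition \ref{pro:bb}).

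Second, I would combine these two approximations, tracking the orders carefully. Since $\|\bSig_{22}^{-1}\|_2=O(h^{-1})$ (Remark implicit in \eqref{eq:v} and Condition \ref{con:bb}), a first-order matrix perturbation argument gives
\bse
\left[-n^{-1}\frac{\partial^2 l\{\bb_t,\wh\bg(\bb_t)\}}{\partial\bg\partial\bg\trans}\right]^{-1}=\bSig_{22}^{-1}+o_p(h^{-1})
\ese
in 2-norm, uniformly in $t$. Since $\|\bSig_{21}\|_2=O(h^{1/2})$ by Lemma \ref{lem:becov}, multiplying the two expansions and tracking every cross-term yields
\bse
\frac{\partial\wh\bg(\bb_t)}{\partial\bb\trans}=-\bSig_{22}^{-1}\bSig_{21}+o_p(h^{-1/2}),
\ese
uniformly in $t\in[0,1]$. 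Applying this to $(\wh\bb-\bb_0)=O_p(n^{-1/2})$ gives
\bse
\wh\bg(\wh\bb)=\wh\bg(\bb_0)-\bSig_{22}^{-1}\bSig_{21}(\wh\bb-\bb_0)+o_p\{(nh)^{-1/2}\}.
\ese

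Third, I would substitute the Bahadur-type expansions of Propositions \ref{pro:bg} and \ref{pro:bb}, collect terms, and verify that the residuals $\r_1$ (order $o_p\{(nh)^{-1/2}\}$) and $\bSig_{22}^{-1}\bSig_{21}\r_2$ (order $O(h^{-1})O(h^{1/2})o_p(n^{-1/2})=o_p\{(nh)^{-1/2}\}$) are both absorbable into $\r_3$. The coefficient of $n^{-1}\sumi\x_i\{y_i-E(Y|\x_i)\}$ becomes $-\bSig_{22}^{-1}\bSig_{21}\bSig^{*-1}$, and the coefficient of $n^{-1}\sumi[\B(y_i)-E\{\B(Y)|\x_i\}]$ becomes $\bSig_{22}^{-1}+\bSig_{22}^{-1}\bSig_{21}\bSig^{*-1}\bSig_{12}\bSig_{22}^{-1}$, matching the stated expansion. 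The overall rate $O_p\{(nh)^{-1/2}\}$ follows because each leading term has this rate (for the $\x$-term, $O(h^{-1})\cdot O(h^{1/2})\cdot O(1)\cdot O_p(n^{-1/2})=O_p(n^{-1/2}h^{-1/2})$; for the $\B$-term, $O(h^{-1})\cdot O_p(n^{-1/2}h^{1/2})=O_p(n^{-1/2}h^{-1/2})$).

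The main obstacle is the delicate bookkeeping of powers of $h$: because $\bSig_{22}$ is ill-conditioned (its operator norm scales like $h$ while its inverse scales like $h^{-1}$), the error bounds in the Hessian approximations must be tight enough ($o_p(h)$ for the $\bg\bg$-block and $o_p(h^{1/2})$ for the $\bg\bb$-block) so that after inversion and multiplication the overall error in $\partial\wh\bg/\partial\bb\trans$ remains $o_p(h^{-1/2})$ rather than merely $o_p(h^{-1})$. Without this sharper rate, multiplying by $\wh\bb-\bb_0=O_p(n^{-1/2})$ would give only $o_p(n^{-1/2}h^{-1})$, which is too large by a factor $h^{-1/2}$.
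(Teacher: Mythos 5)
Your proposal is correct and follows essentially the same route as the paper's own proof: both expand $\wh\bg(\wh\bb)-\wh\bg(\bb_0)$ via the profile derivative formula (\ref{eq:dgdb}), approximate $-[\partial^2 l/\partial\bg\partial\bg\trans]^{-1}\partial^2 l/\partial\bg\partial\bb\trans$ by $-\bSig_{22}^{-1}\bSig_{21}$ with error $o_p(h^{-1/2})$ using exactly the bounds $\|\bSig_{22}^{-1}\|_2\asymp h^{-1}$, $\|\bSig_{21}\|_2=O(h^{1/2})$ and the Hessian rates \eqref{eq:dldbg2} and \eqref{eq:dldgg}, and then substitute the expansions of Propositions \ref{pro:bg} and \ref{pro:bb}. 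Your use of the integral form of the mean value theorem in place of the paper's single intermediate point $\bb^*$ is a minor (and if anything slightly more careful) variation, and your accounting of why the error must be $o_p(h^{-1/2})$ rather than $o_p(h^{-1})$ matches the paper's key bookkeeping.
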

\begin{proof}
Based on \eqref{eq:dgdb} and the Taylor expansion of $\wh\bg(\wh\bb)$,
letting $\bb^{*}$ be on the line connecting $\bb_0$ and $\wh\bb$,
we get
\be \label{eq:whbgwhbb}
\wh\bg(\wh\bb)-\bg_0
&=& \wh\bg(\bb_0) -\bg_0 
-\left[\frac{\partial^2 l\{\bb^*,\wh\bg(\bb^*)\}}{\partial\bg\partial\bg\trans}\right]^{-1}
\frac{\partial^2 l\{\bb^*,\wh\bg(\bb^*)\}}{\partial\bg\partial\bb\trans}(\wh\bb-\bb_0).
\ee
It is easy to verify that $\|\wh\bg(\wh\bb)-\bg_0\|_2=O_p\{(nh)^{-1/2}\}$
because $\|\wh\bg(\bb_0) -\bg_0\|_2=O_p\{(nh)^{-1/2}\}$ by Proposition \ref{pro:bg},
$\|\wh\bb-\bb_0\|_2=O_p(n^{-1/2})$ by Proposition \ref{pro:bb} and \eqref{eq:dgdborder}.
Now recall that $\|\bSig_{22}^{-1}\|_2\asymp h^{-1}$ by \eqref{eq:v} and Condition \ref{con:bb}.
In addition, for any $\u\in\mR^m$ such that $\|\u\|_2=1$, we get
\bse
\|\bSig_{12}\u\|_2
\leq \|E(\X)\|_2\sup_{\x\in\mx}|\cov\{Y,\B(Y)\trans\u|\x\}|
= O(h^{1/2})
\ese
by Condition \ref{con:bdd} and Lemma \ref{lem:becov}, which implies $\|\bSig_{21}\|_2= O(h^{1/2})$.
Then by \eqref{eq:dldbg2} and \eqref{eq:dldgg}, we get
\bse
\R
\equiv -\left[\frac{\partial^2 l\{\bb^*,\wh\bg(\bb^*)\}}{\partial\bg\partial\bg\trans}\right]^{-1}
\frac{\partial^2 l\{\bb^*,\wh\bg(\bb^*)\}}{\partial\bg\partial\bb\trans}
+ \bSig_{22}^{-1}\bSig_{21}
\ese
satisfies $\|\R\|_2= o_p(h^{-1/2})$. Then by Proposition \ref{pro:bb}, we have
\bse
&& -\left[\frac{\partial^2 l\{\bb^*,\wh\bg(\bb^*)\}}{\partial\bg\partial\bg\trans}\right]^{-1}
\frac{\partial^2 l\{\bb^*,\wh\bg(\bb^*)\}}{\partial\bg\partial\bb\trans}(\wh\bb-\bb_0) \\
&=& -\bSig_{22}^{-1}\bSig_{21}\bSig^{*-1}n^{-1}\sumi\x_i\{y_i-E(Y|\x_i)\} \\
&&+ \bSig_{22}^{-1}\bSig_{21}\bSig^{*-1}\bSig_{12}\bSig_{22}^{-1}n^{-1}\sumi[\B(y_i)-E\{\B(Y)|\x_i\}] + \r_3',
\ese
where
\bse
\|\r_3'\|_2
&=& \left\|\R(\wh\bb-\bb_0-\r_2) -\left[\frac{\partial^2 l\{\bb^*,\wh\bg(\bb^*)\}}{\partial\bg\partial\bg\trans}\right]^{-1}
\frac{\partial^2 l\{\bb^*,\wh\bg(\bb^*)\}}{\partial\bg\partial\bb\trans}\r_2\right\|_2 \\
&\leq& \|\R\|_2(\|\wh\bb-\bb_0\|_2+\|\r_2\|_2) +
\left\|\left[\frac{\partial^2 l\{\bb^*,\wh\bg(\bb^*)\}}{\partial\bg\partial\bg\trans}\right]^{-1}
\frac{\partial^2 l\{\bb^*,\wh\bg(\bb^*)\}}{\partial\bg\partial\bb\trans}\right\|_2\|\r_2\|_2 \\
&\leq& o_p(h^{-1/2})O_p(n^{-1/2}) + O_p(h^{-1/2})o_p(n^{-1/2}) \\
&=& o_p\{(nh)^{-1/2}\}
\ese
by Proposition \ref{pro:bb} and \eqref{eq:dgdborder}.
Combining this with Proposition \ref{pro:bg} and \eqref{eq:whbgwhbb}, we get
\bse
&& \wh\bg(\wh\bb)-\bg_0 \n\\
&=& - \bSig_{22}^{-1}\bSig_{21}\bSig^{*-1}n^{-1}\sumi\x_i\{y_i-E(Y|\x_i)\} \n\\
&&+ (\bSig_{22}^{-1}+\bSig_{22}^{-1}\bSig_{21}\bSig^{*-1}\bSig_{12}\bSig_{22}^{-1})
n^{-1}\sumi[\B(y_i)-E\{\B(Y)|\x_i\}] + \r_3,
\ese
where $\|\r_3\|_2=o_p\{(nh)^{-1/2}\}$.
\end{proof}

\begin{Lem} \label{lem:whbbwhbg}
Under Conditions \ref{con:bdd}-\ref{con:bSig},
\bse
\begin{bmatrix}
\wh\bb-\bb_0 \\ \wh\bg(\wh\bb)-\bg_0
\end{bmatrix}
=
\bSig^{-1}
n^{-1}\sumi
\begin{bmatrix}
\x_i\{y_i-E(Y|\x_i)\} \\ \B(y_i)-E\{\B(Y)|\x_i\}
\end{bmatrix}
+
\begin{bmatrix}
\r_2 \\ \r_3
\end{bmatrix},
\ese
where $\|\r_2\|_2=o_p(n^{-1/2})$ and $\|\r_3\|_2=o_p\{(nh)^{-1/2}\}$.
\end{Lem}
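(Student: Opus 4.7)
The plan is to assemble the joint expansion directly from the two expansions already in hand, namely the one for $\wh\bb-\bb_0$ in Proposition \ref{pro:bb} and the one for $\wh\bg(\wh\bb)-\bg_0$ in Lemma \ref{lem:whbg}, by invoking the standard block-inversion formula for $\bSig$. Concretely, recall that by Condition \ref{con:bSig} the block matrix $\bSig$ is positive definite with $\bSig_{22}$ invertible and Schur complement $\bSig^*\equiv\bSig_{11}-\bSig_{12}\bSig_{22}^{-1}\bSig_{21}$ also invertible, so the standard identity yields
\bse
\bSig^{-1}=\begin{bmatrix}
\bSig^{*-1} & -\bSig^{*-1}\bSig_{12}\bSig_{22}^{-1}\\
-\bSig_{22}^{-1}\bSig_{21}\bSig^{*-1} & \bSig_{22}^{-1}+\bSig_{22}^{-1}\bSig_{21}\bSig^{*-1}\bSig_{12}\bSig_{22}^{-1}
\end{bmatrix}.
\ese

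With this identity in hand, the first step is to read off the top block of $\bSig^{-1}n^{-1}\sumi(\x_i\{y_i-E(Y\mid\x_i)\}\trans,[\B(y_i)-E\{\B(Y)\mid\x_i\}]\trans)\trans$ and compare with Proposition \ref{pro:bb}: the two coefficient matrices multiplying the two sample averages agree exactly with $\bSig^{*-1}$ and $-\bSig^{*-1}\bSig_{12}\bSig_{22}^{-1}$, so the top block yields the Proposition \ref{pro:bb} expansion with remainder $\r_2$ satisfying $\|\r_2\|_2=o_p(n^{-1/2})$.

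The second step repeats this for the bottom block by comparing with Lemma \ref{lem:whbg}. The coefficients of the two sample averages there are precisely $-\bSig_{22}^{-1}\bSig_{21}\bSig^{*-1}$ and $\bSig_{22}^{-1}+\bSig_{22}^{-1}\bSig_{21}\bSig^{*-1}\bSig_{12}\bSig_{22}^{-1}$, matching the bottom row of the block inverse above, which leaves the remainder $\r_3$ with $\|\r_3\|_2=o_p\{(nh)^{-1/2}\}$. Stacking the two equalities and packaging the two remainder terms into the single column $(\r_2\trans,\r_3\trans)\trans$ gives exactly the claimed representation.

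The main subtlety in this argument is simply to justify the block-inversion identity for $\bSig$ itself. This is routine given that $\bSig$ is positive definite and $\bSig_{22}$ is invertible by Condition \ref{con:bSig}; the identity can be verified by direct multiplication. Note that no new probabilistic estimate is needed beyond those already derived in Proposition \ref{pro:bb} and Lemma \ref{lem:whbg}, so apart from the Schur complement computation there is nothing nontrivial to do. Thus the proof is essentially a rewriting step, and no genuine obstacle is anticipated.
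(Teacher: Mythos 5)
Your proposal is correct and follows essentially the same route as the paper: the paper likewise stacks the expansions from Proposition \ref{pro:bb} and Lemma \ref{lem:whbg} and observes that the resulting block coefficient matrix, with entries $\bSig^{*-1}$, $-\bSig^{*-1}\bSig_{12}\bSig_{22}^{-1}$, $-\bSig_{22}^{-1}\bSig_{21}\bSig^{*-1}$, and $\bSig_{22}^{-1}+\bSig_{22}^{-1}\bSig_{21}\bSig^{*-1}\bSig_{12}\bSig_{22}^{-1}$, is exactly $\bSig^{-1}$ by the Schur-complement block-inversion identity. No additional probabilistic work is needed, just as you anticipate.
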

\begin{proof}
Note that Proposition \ref{pro:bb} and Lemma \ref{lem:whbg} lead to
\bse
\begin{bmatrix}
\wh\bb-\bb_0 \\ \wh\bg(\wh\bb)-\bg_0
\end{bmatrix}
=
\begin{bmatrix}
\C_{11} & \C_{12} \\ \C_{21} & \C_{22}
\end{bmatrix}
n^{-1}\sumi
\begin{bmatrix}
\x_i\{y_i-E(Y|\x_i)\} \\ \B(y_i)-E\{\B(Y)|\x_i\}
\end{bmatrix}
+
\begin{bmatrix}
\r_2 \\ \r_3
\end{bmatrix},
\ese
where $\|\r_2\|_2=o_p(n^{-1/2})$, $\|\r_3\|_2=o_p\{(nh)^{-1/2}\}$, and
\bse
\C_{11} &\equiv& \bSig^{*-1}, \\
\C_{12} &\equiv& -\bSig^{*-1}\bSig_{12}\bSig_{22}^{-1} = \C_{21}\trans, \\
\C_{22} &\equiv& \bSig_{22}^{-1}+\bSig_{22}^{-1}\bSig_{21}\bSig^{*-1}\bSig_{12}\bSig_{22}^{-1}.
\ese
In addition, it is easy to see that
\bse
\begin{bmatrix}
\C_{11} & \C_{12} \\ \C_{21} & \C_{22}
\end{bmatrix}
=
\bSig^{-1}.
\ese
\end{proof}

\subsection{Proof of Theorem \ref{th:xi}} \label{proof:xi}

We can write $\wh\bxi-\bxi_0$ as
\be \label{eq:bxi}
\wh\bxi-\bxi_0
&=& (\wh\bb-\bb_0)\,n^{-1}\sumi \var^*\{Y|\x_i,\wh\bb,\wh\bg(\wh\bb)\} \n\\
&&+ \bb_0\,n^{-1}\sumi\left[\var^*\{Y|\x_i,\wh\bb,\wh\bg(\wh\bb)\}-\var(Y|\x_i)\right] \n\\
&&+ \bb_0\left[n^{-1}\sumi \var( Y|\x_i )-E\{\var( Y|\X )\}\right].
\ee
Noting that $\|\wh\bb-\bb_0\|_2=O_p(n^{-1/2})$ by Proposition \ref{pro:bb}
and $\|\wh\bg(\wh\bb)-\bg_0\|_2=O_p\{(nh)^{-1/2}\}$ by Lemma \ref{lem:whbg},
it is easy to see that
\be \label{eq:vconv}
n^{-1}\sumi \var^*\{Y|\x_i,\wh\bb,\wh\bg(\wh\bb)\}
&=& n^{-1}\sumi [\var^*\{Y|\x_i,\wh\bb,\wh\bg(\wh\bb)\}-\var(Y|\x_i)] \n\\
&&+ n^{-1}\sumi [\var(Y|\x_i)-E\{\var(Y|\X)\}] + E\{\var(Y|\X)\} \n\\
&=& E\{\var(Y|\X)\} + o_p(1)
\ee
by Lemma \ref{lem:bbbg}. Next, we have
\be \label{eq:v2}
&& n^{-1}\sumi\left[\var^*\{Y|\x_i,\wh\bb,\wh\bg(\wh\bb)\} - \var(Y|\x_i)\right] \\
&=& n^{-1}\sumi\left[\var^*\{Y|\x_i,\wh\bb,\wh\bg(\wh\bb)\} - \var^*(Y|\x_i,\bb_0,\bg_0)
+ \var^*(Y|\x_i,\bb_0,\bg_0) - \var(Y|\x_i)\right] \n\\
&=& n^{-1}\sumi \left[ \frac{\partial \var^*(Y|\x_i,\bb^*,\bg^*)}{\partial\bb\trans}(\wh\bb - \bb_0)
+ \frac{\partial \var^*(Y|\x_i,\bb^*,\bg^*)}{\partial\bg\trans}\{\wh\bg(\wh\bb) - \bg_0\} \right]
+ O(h^q), \n
\ee
where $(\bb^{*\rm T},\bg^{*\rm T})\trans$ is a point on the line
connecting $\{\wh\bb\trans,\wh\bg(\wh\bb)\trans\}\trans$ and $(\bb_0\trans,\bg_0\trans)\trans$.
The last equality holds by Lemma \ref{lem:deboor}.
Then by Lemma \ref{lem:bbbg}, it is easy to check that
\be \label{eq:dvdb}
&& \left\|n^{-1}\sumi\frac{\partial \var^*(Y|\x_i,\bb^*,\bg^*)}{\partial\bb}
- E[\{Y-E(Y|\X)\}^3\X]\right\|_2 \n\\
&=& \left\|n^{-1}\sumi E^*[\{Y-E^*(Y|\x_i,\bb^*,\bg^*)\}^3|\x_i,\bb^*,\bg^*]\x_i
- E[\{Y-E(Y|\X)\}^3\X]\right\|_2 \n\\
&=& o_p(1).
\ee 
Furthermore, we have
\be \label{eq:dvdg}
\frac{\partial \var^*(Y|\x,\bb,\bg)}{\partial\bg}
=E^*(\{Y-E^*(Y|\x,\bb,\bg)\}^2[\B(Y)-E^*\{\B(Y)|\x,\bb,\bg\}]|\x,\bb,\bg).
\ee 
Note that for $\u\in\mR^m$ such that $\|\u\|_2=1$, 
\bse
&& E^*[\{Y-E^*(Y|\x,\bb,\bg)\}^2\B(Y)|\x,\bb,\bg]\trans\u \\
&\leq& \|\B(\cdot)\trans\u\|_1\sup_{y\in[0,1],\x\in\mx}\{y-E^*(Y|\x,\bb,\bg)\}^2\fyx^*(y,\x,\bb,\bg) \\
&=& O(h^{1/2})
\ese
by Condition \ref{con:bdd} and Lemma \ref{lem:bnorm}. Similar argument will show
\bse
E^*[\{Y-E^*(Y|\x,\bb,\bg)\}^2E^*\{\B(Y)|\x,\bb,\bg\}|\x,\bb,\bg]\trans\u = O(h^{1/2}),
\ese
and
\be \label{eq:evb}
E(\{Y-E(Y|\x)\}^2[\B(Y)-E\{\B(Y)|\x\}]|\x)\trans\u = O(h^{1/2}).
\ee
Then \eqref{eq:dvdg}, Condition \ref{con:bdd}, and Lemma \ref{lem:bbbg} lead to
\bse
&& \sup_{\x\in\mx}\left|\frac{\partial \var^*(Y|\x,\bb^*,\bg^*)}{\partial\bg\trans}\u
- E(\{Y-E(Y|\x)\}^2[\B(Y)-E\{\B(Y)|\x\}]|\x)\trans\u\right| \\
&=& o_p(h^{1/2}).
\ese 
Thus, we have
\bse
\left\|n^{-1}\sumi\frac{\partial \var^*(Y|\x_i,\bb^*,\bg^*)}{\partial\bg}
- E(\{Y-E(Y|\X)\}^2[\B(Y)-E\{\B(Y)|\X\}])\right\|_2
= o_p(h^{1/2}).
\ese
Using the fact that $\|\wh\bb-\bb_0\|_2=O_p(n^{-1/2})$ and $\|\wh\bg(\wh\bb)-\bg_0\|_2=O_p\{(nh)^{-1/2}\}$,
we combine the above with \eqref{eq:v2} and \eqref{eq:dvdb} so get
\bse
&& n^{-1}\sumi\left[\var^*\{Y|\x_i,\wh\bb,\wh\bg(\wh\bb)\} - \var(Y|\x_i)\right] \\
&=& E[\{Y-E(Y|\X)\}^3\X]\trans(\wh\bb-\bb_0) \\
&&+ E(\{Y-E(Y|\X)\}^2[\B(Y)-E\{\B(Y)|\X\}])\trans\{\wh\bg(\wh\bb) - \bg_0\} + o_p(n^{-1/2}).
\ese
Then, we can rewrite \eqref{eq:bxi} using \eqref{eq:vconv} as
\be \label{eq:whbxi}
\wh\bxi-\bxi_0
&=& (E\{\var(Y|\X)\}\I + \bb_0E[\{Y-E(Y|\X)\}^3\X\trans])(\wh\bb-\bb_0) \n\\
&&+ \bb_0 E(\{Y-E(Y|\X)\}^2[\B(Y)-E\{\B(Y)|\X\}])\trans\{\wh\bg(\wh\bb) - \bg_0\} \n\\
&&+ \bb_0n^{-1}\sumi\left[\var( Y|\x_i )-E\{\var( Y|\X )\}\right] + \r,
\ee
where $\|\r\|_2=o_p(n^{-1/2})$.
Hence we get $\|\wh\bxi-\bxi_0\|_2=O_p(n^{-1/2})$ by \eqref{eq:evb}.

Now we derive the asymptotic distribution of $\wh\bxi$.
Recall from Theorem \ref{th:xi} that $\A=[\A_1, \A_2]$ and
\bse
\A_1 &=& E\{\var(Y|\X)\}\I + \bb_0E[\{Y-E(Y|\X)\}^3\X\trans], \\
\A_2 &=& \bb_0 E(\{Y-E(Y|\X)\}^2[\B(Y)-E\{\B(Y)|\X\}])\trans.
\ese
Then by Lemma \ref{lem:whbbwhbg}, \eqref{eq:whbxi} equals to
\bse
\wh\bxi-\bxi_0
&=& \A \bSig^{-1}n^{-1}\sumi
\begin{bmatrix}
\x_i\{y_i-E(Y|\x_i)\} \\ \B(y_i)-E\{\B(Y)|\x_i\}
\end{bmatrix} \n\\
&&+ \bb_0n^{-1}\sumi\left[\var( Y|\x_i )-E\{\var( Y|\X )\}\right] \n\\
&&+ \A_1\r_2 + \A_2\r_3 + \r.
\ese
We obviously have
\bse
\cov[\X\{Y-E(Y|\X)\},\var(Y|\X)] &=& \0_p, \\
\cov[\B(Y)-E\{\B(Y)|\X\},\var(Y|\X)] &=& \0_m.
\ese
In addition, recall that $\|\r_2\|_2=o_p(n^{-1/2})$ and $\|\r_3\|_2=o_p\{(nh)^{-1/2}\}$
by Lemma \ref{lem:whbbwhbg},
and $\|\A_2\|_2=O(h^{1/2})$ by \eqref{eq:evb},
then the remainders satisfy
\bse
\|\A_1\r_2 + \A_2\r_3 + \r\|_2
&\leq& \|\A_1\|_2\|\r_2\|_2 + \|\A_2\|_2\|\r_3\|_2 + \|\r\|_2 \n\\
&=& O(1)o_p(n^{-1/2}) + O(h^{1/2})o_p\{(nh)^{-1/2}\} + o_p(n^{-1/2}) \n\\
&=& o_p(n^{-1/2}).
\ese
Hence, letting $\bSig_\bxi\equiv\A\bSig^{-1}\A\trans+\bb_0\bb_0\trans\var\{\var(Y|\X)\}$,
$\bSig_\bxi^{-1/2}\sqrt{n}(\wh\bxi-\bxi_0)$ converges to the normal distribution
with mean $\0$ and variance $\I$.
\qed

\subsection{Proof of Theorem \ref{th:bta}}

For notational brevity, we denote $\nu\equiv\x\trans\bb$, $q^*(\nu,\bg)$ be such that
\bse
\int_0^{q^*(\nu,\bg)}\exp\{t\nu+\B(t)\trans\bg\}dt=\tau\int_0^1\exp\{t\nu+\B(t)\trans\bg\}dt,
\ese
and
\bse
{q^*}'(\nu,\bg)
&\equiv&\frac{\partial q^*(\nu,\bg)}{\partial\nu}\\
&=&\frac{\tau\int_0^1t\exp\{t\nu+\B(t)\trans\bg\}dt-\int_0^yt\exp\{t\nu+\B(t)\trans\bg\}dt}
{\exp\{y\nu+\B(y)\trans\bg\}}\bigg|_{y=q^*(\nu,\bg)}\\
&=&\frac{E^*([\tau-I\{Y<q^*(\nu,\bg)\}]Y|\x,\bb,\bg)}{\fyx^*\{q^*(\nu,\bg),\x,\bb,\bg\}}.
\ese
Let $\bb^*$ and $\bg^*$ be such that $\|\bb^*-\bb_0\|_2=o_p(1)$ and $\|\bg^*-\bg_0\|_2=o_p(1)$ respectively. 
First note that
\be \label{eq:qconv}
\sup_{\x\in\mx}|q^*(\x\trans\bb^*,\bg^*)-q(\x\trans\bb_0)|=o_p(1).
\ee
This is because
\bse
0
&=&\int_0^{q^*(\x\trans\bb^*,\bg^*)}\fyx^*(y,\x,\bb^*,\bg^*)dy
-\int_0^{q(\x\trans\bb_0)}\fyx(y|\x)dy\\
&=&\int_0^{q^*(\x\trans\bb^*,\bg^*)}\{\fyx^*(y,\x,\bb^*,\bg^*)-\fyx(y|\x)\}dy
+\int_{q(\x\trans\bb_0)}^{q^*(\x\trans\bb^*,\bg^*)}\fyx(y|\x)dy,
\ese
which implies, by Condition \ref{con:bdd} and Lemma \ref{lem:bbbg}, that

\bse
c_f|q^*(\x\trans\bb^*,\bg^*)-q(\x\trans\bb_0)|
&\leq&\left|\int_0^{q^*(\x\trans\bb^*,\bg^*)}\{\fyx^*(y,\x,\bb^*,\bg^*)-\fyx(y|\x)\}dy\right|\\
&=&o_p(1)
\ese
uniformly in $\x$, where $c_f=\inf_{y\in[0,1],\x\in\mx}\fyx(y|\x)$.
Condition \ref{con:bdd} and Lemma \ref{lem:bbbg} further lead to
\be \label{eq:q'conv}
\sup_{\x\in\mx}|{q^*}'(\x\trans\bb^*,\bg^*)-q'(\x\trans\bb_0)|=o_p(1),
\ee
since uniformly in $\x, y$,
\be \label{eq:fq}
&&\left|\fyx^*\{q^*(\x\trans\bb^*,\bg^*),\x,\bb^*,\bg^*\}
-\fyx\{q(\x\trans\bb_0)|\x\}\right|\n\\
&\leq&\left|\fyx^*\{q^*(\x\trans\bb^*,\bg^*),\x,\bb^*,\bg^*\}
-\fyx\{q^*(\x\trans\bb^*,\bg^*)|\x\}\right|\n\\
&&+\left|\fyx\{q^*(\x\trans\bb^*,\bg^*)|\x\}-\fyx\{q(\x\trans\bb_0)|\x\}\right|\n\\
&=&o_p(1),
\ee
$|E^*(Y|\x,\bb^*,\bg^*)-E(Y|\x)|=o_p(1)$,
and for an arbitrary $g(\cdot)$ such that $\|g(\cdot)\|_2<\infty$,
\be \label{eq:egyconv}
&&\left|E^*\left[I\{Y<q^*(\x\trans\bb^*,\bg^*)\}g(Y)|\x,\bb^*,\bg^*\right]
-E\left[I\{Y<q(\bb_0\trans\x)\}g(Y)|\x\right]\right|\n\\
&\leq&\left|\int_0^{q^*(\x\trans\bb^*,\bg^*)}g(y)\{\fyx^*(y,\x,\bb^*,\bg^*)-\fyx(y|\x)\}dy\right|
+\left|\int_{q(\x\trans\bb_0)}^{q^*(\x\trans\bb^*,\bg^*)}g(y)\fyx(y|\x)dy\right|\n\\
&\leq&o_p(1)\|g(\cdot)\|_2+\left|\int_{q(\x\trans\bb_0)}^{q^*(\x\trans\bb^*,\bg^*)}\{g(y)\}^2dy
\int_{q(\x\trans\bb_0)}^{q^*(\x\trans\bb^*,\bg^*)}\{\fyx(y|\x)\}^2dy\right|^{1/2}\n\\
&\leq&o_p(1)\|g(\cdot)\|_2+C_f|q^*(\x\trans\bb^*,\bg^*)-q(\x\trans\bb_0)|^{1/2}\|g(\cdot)\|_2\n\\
&=&o_p(1)\|g(\cdot)\|_2.
\ee 
Similarly, we also have
\be \label{eq:q'deboor}
\sup_{\x\in\mx}|{q^*}'(\x\trans\bb_0,\bg_0)-q'(\x\trans\bb_0)|=O(h^q)
\ee
by Condition \ref{con:bdd} and Lemma \ref{lem:deboor}.

In addition, we can show that
\be
&&\frac{\partial {q^*}'(\nu,\bg)}{\partial\bb}\n\\
&=&\frac{\tau\int_0^1\x t^2\exp\{t\nu+\B(t)\trans\bg\}dt}{\exp[q^*(\nu,\bg)\nu+\B\{q^*(\nu,\bg)\}\trans\bg]}\n\\
&&-\frac{\int_0^{q^*(\nu,\bg)}\x t^2\exp\{t\nu+\B(t)\trans\bg\}dt
+q^*(\nu,\bg)\exp[q^*(\nu,\bg)\nu+\B\{q^*(\nu,\bg)\}\trans\bg]\x{q^*}'(\nu,\bg)}
{\exp[q^*(\nu,\bg)\nu+\B\{q^*(\nu,\bg)\}\trans\bg]}\n\\
&&-\frac{\tau\int_0^1t\exp\{t\nu+\B(t)\trans\bg\}dt-\int_0^{q^*(\nu,\bg)}t\exp\{t\nu+\B(t)\trans\bg\}dt}
{\exp[q^*(\nu,\bg)\nu+\B\{q^*(\nu,\bg)\}\trans\bg]}\n\\
&&\times\left[\x{q^*}'(\nu,\bg)\nu+q^*(\nu,\bg)\x+\B'\{q^*(\nu,\bg)\}\trans\bg\x{q^*}'(\nu,\bg)\right]\n\\
&=&\x\Bigg\{\frac{ E^*([\tau-I\{Y\leq q^*(\nu,\bg)\}]Y^2|\x,\bb,\bg)}
{\fyx^*\{q^*(\nu,\bg),\x,\bb,\bg\}}\n\\
&&-2{q^*}'(\nu,\bg)q^*(\nu,\bg)-\{{q^*}'(\nu,\bg)\}^2[\nu+\B'\{q^*(\nu,\bg)\}\trans\bg]\Bigg\},\label{eq:dq'db}\\
&&\frac{\partial q^*(\nu,\bg)}{\partial\bg}\n\\
&=&\frac{\tau\int_0^1\B(t)\exp\{t\nu+\B(t)\trans\bg\}dt-\int_0^y\B(t)\exp\{t\nu+\B(t)\trans\bg\}dt}
{\exp\{y\nu+\B(y)\trans\bg\}}\bigg|_{y=q^*(\nu,\bg)}\n\\
&=&\frac{E^*([\tau-I\{Y\leq q^*(\nu,\bg)\}]\B(Y)|\x,\bb,\bg)}{\fyx^*\{q^*(\nu,\bg),\x,\bb,\bg\}},\label{eq:dqdg}
\ee 
and
\be
&&\frac{\partial {q^*}'(\nu,\bg)}{\partial\bg}\n\\
&=&\frac{\tau\int_0^1t\B(t)\exp\{t\nu+\B(t)\trans\bg\}dt}{\exp\{q^*(\nu,\bg)\nu+\B(y)\trans\bg\}}\n\\
&&-\frac{\int_0^{q^*(\nu,\bg)}t\B(t)\exp\{t\nu+\B(t)\trans\bg\}dt
+q^*(\nu,\bg)\exp[q^*(\nu,\bg)\nu+\B\{q^*(\nu,\bg)\}\trans\bg]\frac{\partial q^*(\nu,\bg)}{\partial\bg}}
{\exp[q^*(\nu,\bg)\nu+\B\{q^*(\nu,\bg)\}\trans\bg]}\n\\
&&-\frac{\tau\int_0^1t\exp\{t\nu+\B(t)\trans\bg\}dt-\int_0^{q^*(\nu,\bg)}t\exp\{t\nu+\B(t)\trans\bg\}dt}
{\exp[q^*(\nu,\bg)\nu+\B\{q^*(\nu,\bg)\}\trans\bg]}\n\\
&&\times\left[\frac{\partial q^*(\nu,\bg)}{\partial\bg}\nu+\B'\{q^*(\nu,\bg)\}\trans\bg
\frac{\partial q^*(\nu,\bg)}{\partial\bg}+\B\{q^*(\nu,\bg)\}\right]\n\\
&=&\frac{E^*([\tau-I\{Y\leq q^*(\nu,\bg)\}]Y\B(Y)|\x,\bb,\bg)}{\fyx^*\{q^*(\nu,\bg),\x,\bb,\bg\}}\label{eq:dq'dg}\\
&&-{q^*}'(\nu,\bg)\B\{q^*(\nu,\bg)\}
-\frac{\partial q^*(\nu,\bg)}{\partial\bg}[q^*(\nu,\bg)+{q^*}'(\nu,\bg)\nu+{q^*}'(\nu,\bg)\B'\{q^*(\nu,\bg)\}\trans\bg].\n
\ee

Now using \eqref{eq:q'deboor}, we write $\wh{\bta_\tau}-\bta_{\tau0}$ as
\be \label{eq:bta}
\wh{\bta_\tau}-\bta_{\tau0}
&=& (\wh\bb-\bb_0)n^{-1}\sumi{q^*}'\{\x_i\trans\wh\bb,\wh\bg(\wh\bb)\}
+\bb_0n^{-1}\sumi\left[{q^*}'\{\x_i\trans\wh\bb,\wh\bg(\wh\bb)\}-{q^*}'(\x_i\trans\bb_0,\bg_0)\right]\n\\
&&+\bb_0n^{-1}\sumi\left\{{q^*}'(\x_i\trans\bb_0,\bg_0)-q'(\x_i\trans\bb_0)\right\}
+\bb_0\left[n^{-1}\sumi q'(\x_i\trans\bb_0)-E\{q'(\X\trans\bb_0)\}\right]\n\\
&=& \left[n^{-1}\sumi{q^*}'\{\x_i\trans\wh\bb,\wh\bg(\wh\bb)\}\I
+\bb_0n^{-1}\sumi\frac{\partial {q^*}'(\x_i\trans\bb^*,\bg^*)}{\partial\bb\trans}\right](\wh\bb-\bb_0)\n\\
&&+\bb_0n^{-1}\sumi\frac{\partial {q^*}'(\x_i\trans\bb^*,\bg^*)}{\partial\bg\trans}\{\wh\bg(\wh\bb)-\bg_0\}\n\\
&&+\bb_0\left[n^{-1}\sumi q'(\x_i\trans\bb_0)-E\{q'(\X\trans\bb_0)\}\right]+O(h^q),
\ee
where $({\bb^*}\trans,{\bg^*}\trans)\trans$ is a point on the line
connecting $\{\wh\bb\trans,\wh\bg(\wh\bb)\trans\}\trans$ and $(\bb_0\trans,\bg_0\trans)\trans$.
To treat the first term in \eqref{eq:bta}, we first obtain
\be\label{eq:bta11}
&&\left|n^{-1}\sumi{q^*}'\{\x_i\trans\wh\bb,\wh\bg(\wh\bb)\}-E\{q'(\X\trans\bb_0)\}\right|\n\\
&\leq&n^{-1}\sumi\left|{q^*}'\{\x_i\trans\wh\bb,\wh\bg(\wh\bb)\}-q'(\x_i\trans\bb_0)\right|
+\left|n^{-1}\sumi q'(\x_i\trans\bb_0)-E\{q'(\X\trans\bb_0)\}\right|\n\\
&=&o_p(1)
\ee
by \eqref{eq:q'conv}.
Now we will show
\be\label{eq:t}
&&\Bigg\|n^{-1}\sumi\frac{\partial {q^*}'(\x_i\trans\bb^*,\bg^*)}{\partial\bb}
-E\Bigg[\X\Bigg\{\frac{ E([\tau-I\{Y\leq q(\X\trans\bb_0)\}]Y^2|\X)}{\fyx\{q(\X\trans\bb_0)|\X\}}\n\\
&&-2q'(\X\trans\bb_0)q(\X\trans\bb_0)-\{q'(\X\trans\bb_0)\}^2[\X\trans\bb_0+c'\{q(\X\trans\bb_0)\}]\}]\|_2\n\\
&=&o_p(1).
\ee
We prove \eqref{eq:t} through proving 
\be\label{eq:dq'dbconv}
&&\Bigg\|\frac{\partial {q^*}'(\x\trans\bb^*,\bg^*)}{\partial\bb}
-\x\Bigg\{\frac{ E([\tau-I\{Y\leq q(\x\trans\bb_0)\}]Y^2|\x)}{\fyx\{q(\x\trans\bb_0)|\x\}}\n\\
&&-2q'(\x\trans\bb_0)q(\x\trans\bb_0)-\{q'(\x\trans\bb_0)\}^2[\x\trans\bb_0+c'\{q(\x\trans\bb_0)\}]\Bigg\}\Bigg\|_2\n\\
&=&o_p(1)
\ee
uniformly in $\x$ under Condition \ref{con:bdd},
where $\partial {q^*}'(\x\trans\bb^*,\bg^*)/\partial\bb$ is given in
\eqref{eq:dq'db}. To prove (\ref{eq:dq'dbconv}), 
first note that
\be\label{eq:dq'db1}
&&\left|\frac{E^*([\tau-I\{Y\leq q^*(\x\trans\bb^*,\bg^*)\}]Y^2|\x,\bb^*,\bg^*)}
{\fyx^*\{ q^*(\x\trans\bb^*,\bg^*),\x,\bb^*,\bg^*\}}
-\frac{ E([\tau-I\{Y\leq q(\x\trans\bb_0)\}]Y^2|\x)}{\fyx\{q(\x\trans\bb_0)|\x\}}\right|\n\\
&=&o_p(1),
\ee
because $\left|E^*(Y^2|\x,\bb^*,\bg^*)-E(Y^2|\x)\right|=o_p(1)$ by Lemma \ref{lem:bbbg},
\bse
|E^*[I\{Y\leq q^*(\x\trans\bb^*,\bg^*)\}Y^2|\x,\bb^*,\bg^*]-E[I\{Y\leq q(\x\trans\bb_0)\}Y^2|\x]|=o_p(1)
\ese
by \eqref{eq:egyconv},  and
\bse
\left|\frac{1}{\fyx^*\{ q^*(\x\trans\bb^*,\bg^*),\x,\bb^*,\bg^*\}}
-\frac{1}{\fyx\{ q(\x\trans\bb_0)|\x\}}\right|
=o_p(1)
\ese
by \eqref{eq:fq}. Furthermore, we have
\be \label{eq:dq'db2}
|q{^*}'(\x\trans\bb^*,\bg^*)q^*(\x\trans\bb^*,\bg^*)-q'(\x\trans\bb_0)q(\x\trans\bb_0)|=o_p(1)
\ee
by \eqref{eq:qconv} and \eqref{eq:q'conv}. We additionally get
\be\label{eq:b'}
&&|\B'\{q^*(\x\trans\bb^*,\bg^*)\}\trans\bg^*-c'\{q(\x\trans\bb_0)\}|\n\\
&\leq&\|\B'\{q^*(\x\trans\bb^*,\bg^*)\}\|_2\|\bg^*-\bg_0\|_2
+|\B'\{q^*(\x\trans\bb^*,\bg^*)\}\trans\bg_0-c'\{q^*(\x\trans\bb^*,\bg^*)\}|\n\\
&&+|c'\{q^*(\x\trans\bb^*,\bg^*)\}-c'\{q(\x\trans\bb_0)\}|\n\\
&=& o_p(1),
\ee
where, to bound the first term, we used
$\|\bg^*-\bg_0\|_2=O_p\{(nh)^{-1/2}\}$ by Lemma \ref{lem:whbg}, for
the second term, we used
$|\B'\{q^*(\x\trans\bb^*,\bg^*)\}\trans\bg_0-c'\{q^*(\x\trans\bb^*,\bg^*)\}|=O(h^{q/2})$
under Conditions \ref{con:deboor}, \ref{con:q} 
and \ref{con:order2}, and for the third term, we noted that $c'(\cdot)$ is continuous under Condition
\ref{con:bdd} and used \eqref{eq:qconv}.
The results in (\ref{eq:b'}) and \eqref{eq:q'conv}, together with the
fact $\|\bb^*-\bb_0\|_2=O_p(n^{-1/2})$
 directly lead to
\be\label{eq:dq'db3}
&&|{q^*}'(\x\trans\bb^*,\bg^*)\}^2[\x\trans\bb^*+\B'\{q^*(\x\trans\bb^*,\bg^*)\}\trans\bg^*]
-\{q'(\x\trans\bb_0)\}^2[\x\trans\bb_0+c'\{q(\x\trans\bb_0)\}]|\n\\
&=&o_p(1).
\ee
Combining the results in \eqref{eq:dq'db1}, \eqref{eq:dq'db2}, and
\eqref{eq:dq'db3}, and taking into
account the form in \eqref{eq:dq'db} lead to \eqref{eq:dq'dbconv}, and
subsequently \eqref{eq:t}.
We now combine \eqref{eq:bta11} and (\ref{eq:t}) to get
\be\label{eq:bta1}
\left\|n^{-1}\sumi{q^*}'\{\x_i\trans\wh\bb,\wh\bg(\wh\bb)\}\I
+\bb_0n^{-1}\sumi\frac{\partial {q^*}'(\x_i\trans\bb^*,\bg^*)}{\partial\bb\trans}-\C_1\right\|_2=o_p(1).
\ee

Next, we handle the second term in \eqref{eq:bta}. 
By \eqref{eq:dqdg} and $\u\in U\equiv\{\u\in\mR^m:\|\u\|_2=1\}$, we
have
\bse
&&\left|\left\{\frac{\partial q^*(\x\trans\bb^*,\bg^*)}{\partial\bg}
-\frac{E([\tau-I\{Y\leq q(\x\trans\bb_0)\}]\B(Y)|\x)}{\fyx\{q(\x\trans\bb_0)|\x\}}\right\}\trans\u\right|\\
&\leq&\Bigg|\frac{1}{\fyx^*\{ q^*(\x\trans\bb^*,\bg^*),\x,\bb^*,\bg^*\}}
\big\{\tau[E^*\{\B(Y)\trans\u|\x,\bb^*,\bg^*\}-E\{\B(Y)\trans\u|\x\}]\\
&&-(E^*[I\{Y\leq q^*(\x\trans\bb^*,\bg^*)\}\B(Y)\trans\u|\x,\bb^*,\bg^*]
-E[I\{Y\leq q(\x\trans\bb_0)\}\B(Y)\trans\u|\x])\big\}\Bigg|\\
&&+\Bigg|\left[\frac{1}{\fyx^*\{ q^*(\x\trans\bb^*,\bg^*),\x,\bb^*,\bg^*\}}
-\frac{1}{\fyx\{ q(\x\trans\bb_0)|\x\}}\right]\\
&&\times E([\tau-I\{Y\leq q(\x\trans\bb_0)\}]\B(Y)\trans\u|\x)\Bigg|\\
&=&o_p(h^{1/2}).
\ese
The last equality holds because
$\left|E^*\{\B(Y)\trans\u|\x,\bb^*,\bg^*\}-E\{\B(Y)\trans\u|\x\}\right|=o_p(h^{1/2})$
by Lemmas \ref{lem:bnorm} and \ref{lem:bbbg}, 
\bse
\left|E^*[I\{Y\leq q^*(\x\trans\bb^*,\bg^*)\}\B(Y)\trans\u|\x,\bb^*,\bg^*]
-E[I\{Y\leq q(\x\trans\bb_0)\}\B(Y)\trans\u|\x]\right|=o_p(h^{1/2})
\ese
by \eqref{eq:egyconv} and Lemma \ref{lem:bnorm}, 
\bse
\left|\frac{1}{\fyx^*\{ q^*(\x\trans\bb^*,\bg^*),\x,\bb^*,\bg^*\}}
-\frac{1}{\fyx\{ q(\x\trans\bb_0)|\x\}}\right|
=o_p(1)
\ese
by \eqref{eq:fq}, and
\be\label{eq:dq'dg1}
|E([\tau-I\{Y\leq q(\x\trans\bb_0)\}]\B(Y)\trans\u|\x)|\leq E\{|\B(Y)\trans\u|\mid\x\}=O(h^{1/2})
\ee
by Lemma \ref{lem:becov}.  Hence we get
\be\label{eq:q'1}
\left\|\frac{\partial q^*(\x\trans\bb^*,\bg^*)}{\partial\bg}
-\frac{E([\tau-I\{Y\leq q(\x\trans\bb_0)\}]\B(Y)|\x)}{\fyx\{q(\x\trans\bb_0)|\x\}}\right\|_2
=o_p(h^{1/2})
\ee
uniformly in $\x$ under Condition \ref{con:bdd}.
Similarly, we have 
\be\label{eq:dq'dg2}
|E([\tau-I\{Y\leq q(\x\trans\bb_0)\}]Y\B(Y)\trans\u|\x)|=O(h^{1/2})
\ee
and
\be\label{eq:q'2}
&&\Bigg\|\frac{E^*([\tau-I\{Y\leq q^*(\x\trans\bb^*,\bg^*)\}]Y\B(Y)|\x,\bb^*,\bg^*)}
{\fyx^*\{q^*(\x\trans\bb^*,\bg^*),\x,\bb^*,\bg^*\}}\n\\
&&-\frac{E([\tau-I\{Y\leq q(\x\trans\bb_0)\}]Y\B(Y)|\x)}{\fyx\{q(\x\trans\bb_0)|\x\}}\Bigg\|_2\n\\
&=&o_p(h^{1/2})
\ee
uniformly in $\x$. In addition, \eqref{eq:qconv}, \eqref{eq:q'conv}
and Condition \ref{con:order2} imply 
\be\label{eq:q'3}
&&\left\|n^{-1}\sumi{q^*}'(\x_i\trans\bb^*,\bg^*)\B\{q^*(\x_i\trans\bb^*,\bg^*)\}
-E\left[q'(\X\trans\bb_0)\B\{q(\X\trans\bb_0)\}\right]\right\|_2\n\\
&\leq&\sup_{\u\in U}\left|n^{-1}\sumi{q^*}'(\x_i\trans\bb^*,\bg^*)\B\{q^*(\x_i\trans\bb^*,\bg^*)\}\trans\u
-E\left[{q^*}'(\X\trans\bb^*,\bg^*)\B\{q^*(\X\trans\bb^*,\bg^*)\}\trans\u\right]\right|\n\\
&&+\sup_{\u\in U}\left|E\left[{q^*}'(\X\trans\bb^*,\bg^*)\B\{q^*(\X\trans\bb^*,\bg^*)\}\trans\u\right]
-E\left[q'(\X\trans\bb_0)\B\{q(\X\trans\bb_0)\}\trans\u\right]\right|\n\\
&=&o_p(h^{1/2}).
\ee
The last equality in \eqref{eq:q'3} is because
\be\label{eq:dq'dg3}
&&\left|E\left[q'(\X\trans\bb_0)\B\{q(\X\trans\bb_0)\}\trans\u\right]\right|\n\\
&=&\left|\int_{\{\nu:\nu=\x\trans\bb_0,\x\in\mx\}}
\B\{q(\nu)\}\trans\u \left\{\int_{\{\x:\x\trans\bb_0=\nu\}}\fx(\x)d\x\right\} q'(\nu)d\nu\n\right|\\
&=&\left|\int_0^1\B(t)\trans\u\left\{\int_{\{\x:q(\x\trans\bb_0)=t\}}\fx(\x)d\x\right\}dt\right|\n\\
&\leq&\|\B(\cdot)\trans\u\|_1\sup_{\x\in\mx}\fx(\x)\n\\
&=&O(h^{1/2}).
\ee
 by Lemma \ref{lem:bnorm},  
and similarly
$|E\left[{q^*}'(\X\trans\bb^*,\bg^*)\B\{q^*(\X\trans\bb^*,\bg^*)\}\trans\u\right]|=O(h^{1/2})$,
which leads to the last step in \eqref{eq:q'3} under Condition
\ref{con:bdd}. 
Furthermore, we have
\be
&&|[q^*(\x\trans\bb^*,\bg^*)+{q^*}'(\x\trans\bb^*,\bg^*)\x\trans\bb^*
+{q^*}'(\x\trans\bb^*,\bg^*)\B'\{q^*(\x\trans\bb^*,\bg^*)\}\trans\bg^*]\n\\
&&-[q(\x\trans\bb_0)+q'(\x\trans\bb_0)\x\trans\bb_0+q'(\x\trans\bb_0)c'\{q(\x\trans\bb_0)\}]|\n\\
&=&o_p(1)\label{eq:added}
\ee
by \eqref{eq:qconv}, \eqref{eq:q'conv}, \eqref{eq:b'}, and
$\|\bb^*-\bb_0\|_2=O_p(n^{-1/2})$.
Now, combining 
\eqref{eq:dq'dg1}, \eqref{eq:dq'dg2} and \eqref{eq:dq'dg3}, we obtain 
\be\label{eq:c2}
\|\C_2\|_2=O(h^{1/2}).
\ee 
Combining \eqref{eq:dq'dg}, \eqref{eq:q'1}, \eqref{eq:q'2},
\eqref{eq:q'3} and \eqref{eq:added}, we further get
\be\label{eq:bta2}
\left\|\bb_0n^{-1}\sumi\frac{\partial {q^*}'(\x_i\trans\bb^*,\bg^*)}{\partial\bg\trans}-\C_2\right\|_2
=o_p(h^{1/2}).
\ee

Inserting 
\eqref{eq:bta1} and \eqref{eq:bta2} in \eqref{eq:bta}, using
 Lemma \ref{lem:whbbwhbg}, we get
\bse
\wh{\bta_\tau}-\bta_{\tau0}
&=&\C_1(\wh\bb-\bb_0)+\C_2\{\wh\bg(\wh\bb)-\bg_0\}\\
&&+\bb_0\left[n^{-1}\sumi q'(\x_i\trans\bb_0)-E\{q'(\X\trans\bb_0)\}\right]+\r\\
&=&\C\bSig^{-1}n^{-1}\sumi
\begin{bmatrix}\x_i\{y_i-E(Y|\x_i)\} \\ \B(y_i)-E\{\B(Y)|\x_i\}\end{bmatrix}\\
&&+\bb_0\left[n^{-1}\sumi q'(\x_i\trans\bb_0)-E\{q'(\X\trans\bb_0)\}\right]
+\C_1\r_2+\C_2\r_3+\r,
\ese
where $\|\r\|_2=o_p(n^{-1/2})$ by Proposition \ref{pro:bb} and Lemma \ref{lem:whbg},
hence
\bse
\|\C_1\r_2+\C_2\r_3+\r\|_2=o_p(n^{-1/2})
\ese
by Lemma \ref{lem:whbbwhbg} and \eqref{eq:c2}. In addition, we obviously have 
\bse
\cov[\X\{Y-E(Y|\X)\},q'(\X\trans\bb_0)] &=& \0_p, \\
\cov[\B(Y)-E\{\B(Y)|\X\},q'(\X\trans\bb_0)] &=& \0_m.
\ese
Thus with $\bSig_{\bta_\tau}=\C\bSig^{-1}\C\trans+\bb_0\bb_0\trans\var\{q'(\X\trans\bb_0)\}$ defined in Theorem \ref{th:bta}, 
$\bSig_{\bta_\tau}^{-1/2}\sqrt{n}(\wh\bta_\tau-\bta_{\tau0})$ converges
to the normal distribution with mean $\0$ and variance $\I$.
\qed

\subsection{Proof of Proposition \ref{pro:effbb}} \label{proof:effbb}

In terms of estimating $\bb$, the score function is 
$\S_\bb=y\x-E(Y\x\mid\x)$ and the nuisance tangent space is 
\bse 
\Lambda&=&[\a(y)-E\{\a(Y)\mid\x\}+\b(\x): \forall \a(y), \b(\x)\in 
{\cal R}^p, E\{\b(\X)\}=\0]. 
\ese 
 Its orthogonal complement is 
\bse 
\Lambda^\perp=(\a(y,\x)-E\{\a(Y,\x)\mid\x\}: 
E\{\a(y,\X)\mid y\}=E[E\{\a(Y,\X)\mid \X\}\mid y]). 
\ese 
Thus, the efficient score is 
$\S\eff=y\x-\a_0(y)-E\{Y\x-\a_0(Y)\mid\x\}$, where 
$\a_0(y)$ satisfies 
\be\label{eq:a0}
\a_0(y)-E[E\{\a_0(Y)\mid\X\}\mid y]=
E(y\X\mid y)-E\{E(Y\X\mid\X)\mid y\}. 
\ee 
Thus, the efficient variance is $\{E(\S\eff^{\otimes2})\}^{-1}$. 

To show that the MLE estimator $\wh\bb$ is actually efficient, we only 
need to show 
$\bSig^*\to E(\S\eff^{\otimes2})$ when $n\to\infty$. 
Let $\a_0(y)=\Lambda\B(y)+O(h^q)$, where $\Lambda$ is a $p\times m$
coefficient matrix. Then (\ref{eq:a0}) implies that 
\bse 
&& \Lambda E\{\B(Y)^{\otimes2}\}-\Lambda E[E\{\B(Y)\mid\X\}^{\otimes2}] \\
&=&
E[E\{Y\X-E(Y\X\mid\X)\B(Y)\trans\mid\X\}]+O(h^q)E\{\B\trans(Y)\}\\
&=&E[\X\cov\{Y,\B(Y)\mid\X\}]+O(h^q)E\{\B\trans(Y)\}, 
\ese 
i.e. 
\bse 
\a_0(y)&=&\Lambda\B(y)=\bSig_{12}\bSig_{22}^{-1}\B(y)+O(h^q) 
E\{\B\trans(Y)\}\bSig_{22}^{-1}\B(y)\\
&=&\bSig_{12}\bSig_{22}^{-1}\B(y)+O(h^{q-1/2}), 
\ese 
where we used $\|E\{\B(Y)\}\|_2=O(h^{1/2}), \|\bSig_{22}^{-1}\|_2\asymp h^{-1}$, 
and $\|\B(\cdot)\|_2=O(1)$. 
Therefore 
\bse 
&&E(\S\eff^{\otimes2}) -\bSig^*\\
&=&E\left\{(
Y\X-E(Y\X\mid\X) 
-
\bSig_{12}\bSig_{22}^{-1}[\B(y)-E\{B(Y)\mid\X])^{\otimes2}\right\}-\bSig^*+O(h^{q-1/2})\\
&=&o(1). 
\ese 
\qed

\subsection{Proof of Theorem \ref{th:effxi}} \label{proof:effxi}

Since 
\bse 
E(\bphi\eff^{\otimes2})&=&E\left(
[\bb v(\bb\trans\X)-\bb E\{v(\bb\trans\X)\}]^{\otimes2}\right)\\
&&+E\left(
[\bb Y^2+\a(Y) +\M\X Y 
-E\{\bb Y^2+\a(Y)+\M\X Y\mid\X\}]^{\otimes2}\right), 
\ese 
and 
\bse 
\bSig_\bxi=\A\bSig^{-1}\A\trans+\bb^{\otimes2}\var\{v(\bb\trans\X)\}, 
\ese 
so we only need to show 
\bse 
E\left(
[\bb Y^2+\a(Y) +\M\X Y 
-E\{\bb Y^2+\a(Y)+\M\X Y\mid\X\}]^{\otimes2}\right)-
\A\bSig^{-1}\A\trans\to\0. 
\ese 
Now we have $\bb y^2+\a(y)=\bLam\B(y)+O(h^q)$, where $\bLam\in{\cal 
  R}^{p\times m}$. Then (\ref{eq:a}) and (\ref{eq:M}) imply 
\bse 
&&-\bLam E[\B(y) -E\{\B(Y) \mid \X\}\mid y]
+\bb E\{ y^2 -E(Y^2 \mid \X)\mid y\}
+O(h^q)\n\\
&=&2\bb E[ yE(Y\mid\X)-
    E\{YE(Y\mid\X) \mid \X\}\mid y]
+\M E[\X\{ y - E( Y \mid \X)\}\mid y]\\
&=&2\bb E[ yE(Y\mid\X)-
    E\{YE(Y\mid\X) \mid \X\}\mid y]\\
&&+\left(E\{v(\bb\trans\X)\}\I 
-E\left[2\bb\X\trans Y v(\bb\trans\X) +\{\bLam\B(Y) -\bb Y^2\}
\{Y-E(Y\mid\X)\}\X\trans 
\right]\right)\\
&&\times \bSig_{11}^{-1}
E[\X\{ y - E( Y \mid \X)\}\mid y]. 
\ese 
Multiplying $\B\trans(y)$ on both sides and taking expectation lead 
to 
\bse 
&&-\bLam \bSig_{22}
+\bb E[\{ Y^2 -E(Y^2 \mid \X)\}\B\trans(Y)]
+O(h^q) E\{\B\trans(Y)\}\\
&=&2\bb E[ YE(Y\mid\X) \B\trans(Y)-
    \{E(Y\mid\X)\}^2\B\trans(Y)]\\
&&+\left(E\{v(\bb\trans\X)\}\I 
-E\left[2\bb\X\trans Y v(\bb\trans\X) +\{\bLam\B(Y)-\bb Y^2\}
\{Y-E(Y\mid\bb\trans\X)\}\X\trans 
\right]\right)\\
&&\times \bSig_{11}^{-1}
E[\X\{ Y - E( Y \mid \X)\}\B\trans(Y)]\\
&=&2\bb E[Y\cov\{Y,\B(Y)\mid\X\}]
+E\{v(\bb\trans\X)\}\bSig_{11}^{-1}\bSig_{12}
-E\left\{2\bb\X\trans Y v(\bb\trans\X) 
\right\}\bSig_{11}^{-1}\bSig_{12}\\
&&-\bLam \bSig_{21}\bSig_{11}^{-1}\bSig_{12}
+\bb E[ Y^2 
\{Y-E(Y\mid\bb\trans\X)\}\X\trans]\bSig_{11}^{-1}\bSig_{12}\\
&=&2\bb E[Y\cov\{Y,\B(Y)\mid\X\}]+\A_1 \bSig_{11}^{-1}\bSig_{12}
-\bLam \bSig_{21}\bSig_{11}^{-1}\bSig_{12}, 
\ese 
hence 
\bse 
&&-\bLam \bSig_{22}+\bLam \bSig_{21}\bSig_{11}^{-1}\bSig_{12}\\
&=&2\bb E[Y\cov\{Y,\B(Y)\mid\X\}]
+\A_1\bSig_{11}^{-1}\bSig_{12}
-\bb E[\{ Y^2 -E(Y^2 \mid \X)\}\B\trans(Y)]\\
&&+O(h^q) E\{\B\trans(Y)\}\\
&=&\A_1\bSig_{11}^{-1}\bSig_{12}-\A_2 
+O(h^q) E\{\B\trans(Y)\}, 
\ese 
and 
\bse 
\bb y^2+\a(y)&=&(\A_1\bSig_{11}^{-1}\bSig_{12}-\A_2) 
(\bSig_{21}\bSig_{11}^{-1}\bSig_{12}-\bSig_{22})^{-1}\B(y)\\
&&+
O(h^q) E\{\B\trans(Y)\}(\bSig_{21}\bSig_{11}^{-1}\bSig_{12}-\bSig_{22})^{-1}\B(y)\\
&=&\U \B(y)+O(h^{q-1/2}), 
\ese 
where, for notational brevity, 
\bse 
\U&\equiv&(\A_1\bSig_{11}^{-1}\bSig_{12}-\A_2) 
(\bSig_{21}\bSig_{11}^{-1}\bSig_{12}-\bSig_{22})^{-1}. 
\ese 
The above holds by $\|(\bSig_{22}-\bSig_{21}\bSig_{11}^{-1}\bSig_{12})^{-1}\|_2=O(h^{-1})$. 
This is because by Conditions \ref{con:bb} and \ref{con:bSig}, 
all eigenvalues of $\bSig$ are of order either 1 or $h$, 
which implies $\bSig^{-1}$ has all eigenvalues either of order 1 or $h^{-1}$. 
Since $(\bSig_{22}-\bSig_{21}\bSig_{11}^{-1}\bSig_{12})^{-1}$ is a block diagonal element of $\bSig^{-1}$, 
its eigenvalues are of order 1 or $h^{-1}$ as well. 
Then 
\bse 
\M&=&
\left(E\{v(\bb\trans\X)\}\I 
-E\{2\bb\X\trans Y v(\bb\trans\X)\}
 -\U  E\left[\B(Y)\{Y-E(Y\mid\bb\trans\X)\}\X\trans 
\right]\right.\\
&&\left. +\bb  E\left[Y^2\{Y-E(Y\mid\bb\trans\X)\}\X\trans 
\right]
\right) \bSig_{11}^{-1}
+O(h^{q-1/2})\\
&=&\left(E\{v(\bb\trans\X)\}\I 
-E\{2\bb\X\trans Y v(\bb\trans\X)\}
 -\U \bSig_{21}+\bb  E\left[Y^2\{Y-E(Y\mid\X)\}\X\trans\right]
 \right)\\
&&\times\bSig_{11}^{-1}+O(h^{q-1/2})\\
&=&(\A_1 -\U \bSig_{21})\bSig_{11}^{-1}+O(h^{q-1/2}). 
\ese 
Hence 
\bse 
\bb y^2+\a(y)-E\{\bb Y^2+\a(Y)\mid\x\}
=\U[
\B(y)-E\{\B(Y)\mid\x\}]+O(h^{q-1/2}), 
\ese 
and 
\bse 
E\{\var(\M\X Y\mid\X)\}
&=&\M \bSig_{11}\M\trans,\\
E[\var\{\bb Y^2+\a(Y)\mid\X\}]
&=&\U\bSig_{22}\U\trans+O(h^{q-1/2}),\\
E[\cov\{\M\X Y,\bb Y^2+\a(Y)\mid\X\}]
&=&\M \bSig_{12}\U\trans +O(h^{q-1/2}). 
\ese 
Thus, noting that $\A=[\A_1, \A_2]$, 
\bse 
&&E\left(
[\bb Y^2+\a(Y) +\M\X Y 
-E\{\bb Y^2+\a(Y)+\M\X Y\mid\X\}]^{\otimes2}\right)\\
&=&\M\bSig_{11}\M\trans+
\U\bSig_{22}\U\trans+
\M \bSig_{12}\U\trans 
+\U\bSig_{21}\M\trans 
+O(h^{q-1/2})\\
&=&(\A_1 -\U \bSig_{21})\bSig_{11}^{-1}(\A_1\trans 
-\bSig_{12}\U\trans) +\U\bSig_{22}\U\trans+
(\A_1 -\U \bSig_{21})\bSig_{11}^{-1}\bSig_{12}\U\trans\\
&&+
\U\bSig_{21}\bSig_{11}^{-1}(\A_1\trans 
-\bSig_{12}\U\trans)+O(h^{q-1/2})\\
&=&\A_1 \bSig_{11}^{-1}\A_1\trans 
+\U(\bSig_{22}
 -
\bSig_{21}\bSig_{11}^{-1}\bSig_{12})\U\trans+O(h^{q-1/2})\\
&=&
\A_1\bSig_{11}^{-1}\A_1\trans 
+
(\A_1\bSig_{11}^{-1}\bSig_{12}-\A_2) 
(\bSig_{22}-\bSig_{21}\bSig_{11}^{-1}\bSig_{12})^{-1}
(\bSig_{21}\bSig_{11}^{-1}\A_1\trans-\A_2\trans) \\
&&+O(h^{q-1/2})\\
&=&\A 
\left(\begin{array}{cc}
\bSig_{11}^{-1}+\bSig_{11}^{-1}\bSig_{12}(\bSig_{22}-\bSig_{21}\bSig_{11}^{-1}\bSig_{12})^{-1}\bSig_{21}\bSig_{11}^{-1}&
                                                                                                          -\bSig_{11}^{-1}\bSig_{12}(\bSig_{22}-\bSig_{21}\bSig_{11}^{-1}\bSig_{12})^{-1}\\
-(\bSig_{22}-\bSig_{21}\bSig_{11}^{-1}\bSig_{12})^{-1}\bSig_{21}\bSig_{11}^{-1}&
(\bSig_{22}-\bSig_{21}\bSig_{11}^{-1}\bSig_{12})^{-1}
\end{array}\right)\A\trans\\
&&+O(h^{q-1/2})\\
&=&\A\bSig^{-1}\A\trans+o(1). 
\ese 
\qed

\subsection{Proof of Theorem \ref{th:effbta}}

By \eqref{eq:efftau}
\bse 
E(\bphi\eff^{\otimes2})
&=&E\left([\bb q'(\nu)-\bb E\{q'(\nu)\}]^{\otimes2}\right)\\
&&+E\left([\M_1\X Y+\a(Y)-E\{\M_1\X Y+\a(Y)\mid\X\}]^{\otimes2}\right), 
\ese 
and 
\bse 
\bSig_{\bta_\tau}=\C\bSig^{-1}\C\trans+\bb^{\otimes2}\var\{q'(\nu)\}
\ese 
in Theorem \ref{th:bta}, 
 we only need to show 
\bse 
E\left([\M_1\X Y+\a(Y)-E\{\M_1\X Y+\a(Y)\mid\X\}]^{\otimes2}\right)-\C\bSig^{-1}\C\trans\to\0. 
\ese 
Now we have $\a(y)=\bLam\B(y)+O(h^q)$ where $\bLam\in{\cal R}^{p\times m}$.
Then \eqref{eq:atau} and \eqref{eq:Mtau} imply 
\bse 
&&-\bLam E[\B(y)-E\{\B(Y)\mid\X\}\mid y]+O(h^q)\n\\
&=&-\bb E\{r(y,\nu)\mid y\}+\M_1 E[\X\{y-E(Y\mid\X)\}\mid y]\\
&=&-\bb E\{r(y,\nu)\mid y\}\\
&&+\left(E\{q'(\nu)\}\I+\bb E\{\X\trans q''(\nu)\}
-E\left[\{\bLam\B(Y)+O(h^q)\}\X\trans\{Y-E(Y\mid\X)\}\right]\right)\\
&&\times\bSig_{11}^{-1}E[\X\{y-E(Y\mid\X)\}\mid y]. 
\ese 
Multiplying $\B\trans(y)$ on both sides above and taking expectation,
incorporating \eqref{eq:r},
we get
\bse 
&&-\bLam\bSig_{22}+O(h^q) E\{\B\trans(Y)\}\\
&=&-\bb E\{r(Y,\nu)\B\trans(Y)\}\\
&&+\left(E\{q'(\nu)\}\I+\bb E\{\X\trans q''(\nu)\}
-E\left[\{\bLam\B(Y)+O(h^q)\}\X\trans\{Y-E(Y\mid\X)\}\right]\right)\\
&&\times\bSig_{11}^{-1}E[\X\{Y-E(Y\mid\X)\}\B\trans(Y)]\\ 
&=&-\bb E\Bigg(\frac{E\{\epsilon Y\B\trans(Y)|\X\}}{f\{q(\nu),\nu\}}-q'(\nu)\B\trans\{q(\nu)\}
-\frac{E\{\epsilon\B\trans(Y)|\X\}}{f\{q(\nu),\nu\}}[q(\nu)+q'(\nu)\nu+q'(\nu)c'\{q(\nu)\}]\Bigg)\\
&&+\C_1\bSig_{11}^{-1}\bSig_{12}-\bLam \bSig_{21}\bSig_{11}^{-1}\bSig_{12}+O(h^q)\bSig_{11}^{-1}\bSig_{12}\\
&=&-\bLam \bSig_{21}\bSig_{11}^{-1}\bSig_{12}+\C_1\bSig_{11}^{-1}\bSig_{12}-\C_2+O(h^q)\bSig_{11}^{-1}\bSig_{12}, 
\ese 
hence 
\bse 
\bLam(\bSig_{22}-\bSig_{21}\bSig_{11}^{-1}\bSig_{12})
=-\C_1\bSig_{11}^{-1}\bSig_{12}+\C_2+O(h^{q+1/2}), 
\ese
since $\|E\{\B\trans(Y)\}\|_2=O(h^{1/2})$ and $\|\bSig_{12}\|_2=O(h^{1/2})$ by Lemma \ref{lem:becov}.
Then 
\bse
\a(y)=\U\B(y)+O(h^{q-1/2}),
\ese 
where for notational brevity, 
\bse 
\U\equiv(-\C_1\bSig_{11}^{-1}\bSig_{12}+\C_2)(\bSig_{22}-\bSig_{21}\bSig_{11}^{-1}\bSig_{12})^{-1}.
\ese
The above holds by $\|(\bSig_{22}-\bSig_{21}\bSig_{11}^{-1}\bSig_{12})^{-1}\|_2=O(h^{-1})$
under Conditions \ref{con:bb} and \ref{con:bSig}. 
Then by \eqref{eq:Mtau},
\bse 
&&\M_1\\
&=&(E\{q'(\nu)\}\I+\bb E\{\X\trans q''(\nu)\}
-\U E\left[\B(Y)\X\trans\{Y-E(Y\mid\X)\}\right])\bSig_{11}^{-1}+O(h^{q-1/2})\\
&=&(\C_1-\U\bSig_{21})\bSig_{11}^{-1}+O(h^{q-1/2}). 
\ese 
Hence 
\bse 
\a(y)-E\{\a(Y)\mid\x\}=\U[\B(y)-E\{\B(Y)\mid\x\}]+O(h^{q-1/2}), 
\ese 
and 
\bse 
E\{\var(\M_1\X Y\mid\X)\}&=&\M_1 \bSig_{11}\M_1\trans,\\
E[\cov\{\M\X Y,\a(Y)\mid\X\}]&=&\M_1\bSig_{12}\U\trans+O(h^{q-1/2}),\\
E[\var\{\a(Y)\mid\X\}]&=&\U\bSig_{22}\U\trans+O(h^{q-1/2}). 
\ese 
Thus noting that $\C=[\C_1,\C_2]$, we obtain
\bse 
&&E\left([\M_1\X Y+\a(Y)-E\{\M_1\X Y+\a(Y)\mid\X\}]^{\otimes2}\right)\\
&=&\M_1\bSig_{11}\M_1\trans+\M_1\bSig_{12}\U\trans+\U\bSig_{21}\M_1\trans+\U\bSig_{22}\U\trans+O(h^{q-1/2})\\
&=&(\C_1-\U\bSig_{21})\bSig_{11}^{-1}(\C_1\trans-\bSig_{12}\U\trans)
+(\C_1-\U\bSig_{21})\bSig_{11}^{-1}\bSig_{12}\U\trans\\
&&+\U\bSig_{21}\bSig_{11}^{-1}(\C_1\trans-\bSig_{12}\U\trans)+\U\bSig_{22}\U\trans+O(h^{q-1/2})\\
&=&\C_1\bSig_{11}^{-1}\C_1\trans +\U(\bSig_{22}-\bSig_{21}\bSig_{11}^{-1}\bSig_{12})\U\trans+O(h^{q-1/2})\\
&=&\C_1\bSig_{11}^{-1}\C_1\trans 
+(-\C_1\bSig_{11}^{-1}\bSig_{12}+\C_2)
(\bSig_{22}-\bSig_{21}\bSig_{11}^{-1}\bSig_{12})^{-1}
(-\bSig_{21}\bSig_{11}^{-1}\C_1\trans+\C_2\trans) \\
&&+O(h^{q-1/2})\\
&=&\C 
\left(\begin{array}{cc}
\bSig_{11}^{-1}+\bSig_{11}^{-1}\bSig_{12}(\bSig_{22}-\bSig_{21}\bSig_{11}^{-1}\bSig_{12})^{-1}\bSig_{21}\bSig_{11}^{-1}
&-\bSig_{11}^{-1}\bSig_{12}(\bSig_{22}-\bSig_{21}\bSig_{11}^{-1}\bSig_{12})^{-1}\\
-(\bSig_{22}-\bSig_{21}\bSig_{11}^{-1}\bSig_{12})^{-1}\bSig_{21}\bSig_{11}^{-1}&
(\bSig_{22}-\bSig_{21}\bSig_{11}^{-1}\bSig_{12})^{-1}
\end{array}\right)\C\trans\\
&=&\C\bSig^{-1}\C\trans+o(1).
\ese 
\qed

\subsection{Lemmas}

We now introduce a lemma for the analysis of the discrete response case.

\begin{Lem}\label{lem:lipschitz}
Let $\btheta^*\equiv(\bb^{*\rm T},\bg^{*\rm T})\trans$ and
$\|\btheta^*-\btheta_0\|_2=o(1)$. 
Under Conditions \ref{con:catbdd}-\ref{con:catlipschitz},
uniformly with respect to $\x$,
\begin{enumerate}[label=(\roman*)]
    \item
    $\|\p(\x,\btheta^*)-\p(\x,\btheta_0)\|_2
    =O(\|\bb^*-\bb_0\|_2 + \|\bg^*-\bg_0\|_2)$,
    \item
    $\|\p_1(\x,\btheta^*)-\p_1(\x,\btheta_0)\|_2
    =O(\|\bb^*-\bb_0\|_2 + \|\bg^*-\bg_0\|_2)$,
    \item
    $\|\p_2(\x,\btheta^*)-\p_2(\x,\btheta_0)\|_2
    =O(\|\bb^*-\bb_0\|_2 + \|\bg^*-\bg_0\|_2)$,
    \item
    $\|\var\{\B(Y)\mid\x,\btheta^*\}-\var\{\B(Y)\mid\x,\btheta_0\}\|_2
    =O(\|\bb^*-\bb_0\|_2 + \|\bg^*-\bg_0\|_2)$.
\end{enumerate}
If $\btheta^*$ satisfies $\|\btheta^*-\btheta_0\|_2=o_p(1)$,
the above results hold in probability.
\end{Lem}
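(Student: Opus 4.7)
All four claims are Lipschitz-continuity statements in $\btheta$ that I would establish, uniformly in $\x\in\mx$, via the integral form of the mean value theorem. For any smooth vector- or matrix-valued $\g(\btheta)$, setting $\tilde\btheta_t\equiv\btheta_0+t(\btheta^*-\btheta_0)$,
\bse
\g(\btheta^*)-\g(\btheta_0)=\int_0^1\frac{\partial\g(\tilde\btheta_t)}{\partial\btheta\trans}dt\,(\btheta^*-\btheta_0),
\ese
so $\|\g(\btheta^*)-\g(\btheta_0)\|_2\le\sup_{t\in[0,1]}\|\partial\g(\tilde\btheta_t)/\partial\btheta\trans\|_2\,(\|\bb^*-\bb_0\|_2+\|\bg^*-\bg_0\|_2)$. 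Each claim then reduces to a uniform operator-norm bound on the Jacobian, which I would compute explicitly using the exponential-family form $\pr(Y=y\mid\x;\btheta)=\exp\{y\bb\trans\x+c(y)\}/\sum_j\exp\{j\bb\trans\x+c(j)\}$.

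For (i), differentiating gives
\bse
\frac{\partial\p(\x,\btheta)}{\partial\bb\trans}=\{\p_1(\x,\btheta)-E(Y\mid\x,\btheta)\p(\x,\btheta)\}\x\trans,\qquad
\frac{\partial\p(\x,\btheta)}{\partial\bg\trans}=\var\{\B(Y)\mid\x,\btheta\}.
\ese
The first matrix has spectral norm at most $\{\var(Y\mid\x,\btheta)\}^{1/2}\|\x\|_2$, controlled by Conditions \ref{con:catbdd} and \ref{con:catlipschitz}; the second equals $\diag(\p)-\p\p\trans$, whose spectral norm is at most $\max_y\pr(Y=y\mid\x,\btheta)+\|\p\|_2^2\le 2$. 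Parts (ii) and (iii) follow analogously from
\bse
\frac{\partial\p_k(\x,\btheta)}{\partial\bb\trans}=\{\p_{k+1}(\x,\btheta)-E(Y\mid\x,\btheta)\p_k(\x,\btheta)\}\x\trans,
\ese
and $[\partial\p_k/\partial\bg\trans]_{yj}=y^k\pr(Y=y\mid\x,\btheta)\{I(y=j)-\pr(Y=j\mid\x,\btheta)\}$, whose Frobenius norms are bounded by expressions involving $E[Y^{2k}\{Y-E(Y\mid\x,\tilde\btheta)\}^2\mid\x,\tilde\btheta]$ and $E(Y^{2k}\mid\x,\tilde\btheta)$. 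For (iv), the identity $\var\{\B(Y)\mid\x,\btheta\}=\diag\{\p(\x,\btheta)\}-\p(\x,\btheta)\p(\x,\btheta)\trans$ gives
\bse
\var\{\B(Y)\mid\x,\btheta^*\}-\var\{\B(Y)\mid\x,\btheta_0\}=\diag(\p^*-\p^0)-(\p^*-\p^0)\p^{*\trans}-\p^0(\p^*-\p^0)\trans,
\ese
and taking spectral norms with $\|\p\|_2\le 1$ and $\|\diag(\p^*-\p^0)\|_2\le\|\p^*-\p^0\|_2$ reduces (iv) immediately to (i).

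The main obstacle will be uniform control of the conditional moments of $Y$ at the intermediate point $\tilde\btheta_t$: Condition \ref{con:catbdd} bounds $E(Y^4\mid\x)$ only at $\btheta_0$, while the Jacobian bounds for (ii)--(iii) \emph{a priori} involve moments of $Y$ at $\tilde\btheta_t$. I would combine Condition \ref{con:catbdd} with the Lipschitz control in Condition \ref{con:catlipschitz} for $k=1,2,3$ and the assumption $\|\btheta^*-\btheta_0\|_2=o(1)$ to confine $\tilde\btheta_t$ to a shrinking neighborhood of $\btheta_0$ on which the relevant expectations remain bounded uniformly in $\x\in\mx$, absorbing any residual dependence on $m$ into the constants. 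The stochastic version with $\|\btheta^*-\btheta_0\|_2=o_p(1)$ is then immediate because every bound above is a deterministic function of $\btheta^*$ and $\btheta_0$.
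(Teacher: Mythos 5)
Your overall strategy --- mean value theorem in $\btheta$ plus uniform operator-norm bounds on the Jacobians $\partial\p_k/\partial\bb\trans$ and $\partial\p_k/\partial\bg\trans$, with (iv) reduced to (i) through the identity $\var\{\B(Y)\mid\x,\btheta\}=\diag\{\p(\x,\btheta)\}-\p(\x,\btheta)\p\trans(\x,\btheta)$ --- is exactly the route the paper takes, and the Jacobian formulas you write down are the correct ones. The problem is in how you propose to bound them. Your Frobenius/spectral bounds of the Cauchy--Schwarz type, e.g.
$\|\p_{k+1}(\x,\wt\btheta)-E(Y\mid\x,\wt\btheta)\p_k(\x,\wt\btheta)\|_2^2\le E[Y^{2k}\{Y-E(Y\mid\x,\wt\btheta)\}^2\mid\x,\wt\btheta]$ and $\|\partial\p_k/\partial\bg\trans\|_F^2\lesssim E(Y^{2k}\mid\x,\wt\btheta)$, require conditional moments of order up to $2k+2$ (so up to $E(Y^6\mid\x,\wt\btheta)$ for $k=2$, and already $E(Y^4\mid\x,\wt\btheta)$ for $k=1$) \emph{at the intermediate point} $\wt\btheta$. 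Conditions \ref{con:catbdd}--\ref{con:catlipschitz} do not supply these: \ref{con:catbdd} bounds $E(Y^4\mid\x)$ only at $\btheta_0$, and \ref{con:catlipschitz} gives Lipschitz control of $E(Y^k\mid\x,\btheta)$ only for $k=1,2,3$, so your plan to ``confine $\wt\btheta$ to a shrinking neighborhood on which the relevant expectations remain bounded'' has nothing to lean on for the fourth and sixth conditional moments. This step, as stated, fails under the hypotheses of the lemma.

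The fix --- and what the paper actually does --- is to trade $\ell_2$ for $\ell_1$ on these probability-weighted vectors, which lowers the moment requirements to exactly what \ref{con:catbdd}--\ref{con:catlipschitz} cover. Since each entry of $\p_k$ carries a factor $\pr(Y=y\mid\x,\wt\btheta)\le1$, one has
\bse
\|\p_{k+1}(\x,\wt\btheta)-E(Y\mid\x,\wt\btheta)\p_k(\x,\wt\btheta)\|_2
\le\sum_{y=1}^m y^k|y-E(Y\mid\x,\wt\btheta)|\pr(Y=y\mid\x,\wt\btheta)
\le E(Y^{k+1}\mid\x,\wt\btheta)+E(Y\mid\x,\wt\btheta)E(Y^k\mid\x,\wt\btheta),
\ese
and similarly $\|\diag\{\p_k\}-\p_k\p\trans\|_2\le\|\p_k\|_\infty+\|\p_k\|_2\|\p\|_2\le2\|\p_k\|_1=2E(Y^k\mid\x,\wt\btheta)$. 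These need only $E(Y^j\mid\x,\wt\btheta)$ for $j\le3$, each of which equals $E(Y^j\mid\x)+O(\|\wt\btheta-\btheta_0\|_2)$ by Condition \ref{con:catlipschitz} and is therefore uniformly bounded by Condition \ref{con:catbdd}. With that substitution the rest of your argument, including the reduction of (iv) to (i) and the passage to the $o_p(1)$ version, goes through as the paper's does.
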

\begin{proof}
For some $\wt\btheta$ on the line connecting $\btheta^*$ and $\btheta_0$,
we have
\bse
\p(\x,\btheta^*)-\p(\x,\btheta_0)
&=&\frac{\partial\p(\x,\wt\btheta)}{\partial\bb\trans}(\bb^*-\bb_0)
+\frac{\partial\p(\x,\wt\btheta)}{\partial\bg\trans}(\bg^*-\bg_0).
\ese
First, by Conditions \ref{con:catbdd} and \ref{con:catlipschitz},
\bse
\left\|\frac{\partial\p(\x,\wt\btheta)}{\partial\bb\trans}(\bb^*-\bb_0)\right\|_2
&=&\left\|\left\{\p_1(\x,\wt\btheta)-E(Y\mid\x,\wt\btheta)\p(\x,\wt\btheta)\right\}
\x\trans(\bb^*-\bb_0)\right\|_2\\
&\leq&\left[\sum_{y=1}^m\left\{y-E(Y\mid\x,\wt\btheta)\right\}^2\pr(Y=y\mid\x,\wt\btheta)^2\right]^{1/2}
O(\|\bb^*-\bb_0\|_2)\\
&\leq&\sum_{y=1}^m\left|y-E(Y\mid\x,\wt\btheta)\right|\pr(Y=y\mid\x,\wt\btheta)O(\|\bb^*-\bb_0\|_2)\\
&\leq&2E(Y\mid\x,\wt\btheta)O(\|\bb^*-\bb_0\|_2)\\
&\leq&2\{E(Y\mid\x,\btheta_0)+O(\|\wt\btheta-\btheta_0\|_2)\}O(\|\bb^*-\bb_0\|_2)\\
&=&O(\|\bb^*-\bb_0\|_2).
\ese
The last inequality holds since $E(Y\mid\x,\btheta_0)$ is bounded by Condition \ref{con:catbdd},
and $\|\wt\btheta-\btheta_0\|_2=o(1)$ by the definition of $\wt\btheta$.
In addition, for a vector $\a$, let $\diag(\a)$ be the diagonal
matrix with entries equal to the elements of $\a$, then
\bse
\left\|\frac{\partial\p(\x,\wt\btheta)}{\partial\bg\trans}(\bg^*-\bg_0)\right\|_2
&=&\left\|\left[\diag\left\{\p(\x,\wt\btheta)\right\}-\p(\x,\wt\btheta)\p\trans(\x,\wt\btheta)\right]
(\bg^*-\bg_0)\right\|_2\\
&\leq&\left[\|\p(\x,\wt\btheta)\|_\infty+\|\p(\x,\wt\btheta)\|_2^2\right]
\|\bg^*-\bg_0\|_2\\
&=&O(\|\bg^*-\bg_0\|_2)
\ese
because $\|\p(\x,\wt\btheta)\|_\infty\leq\|\p(\x,\wt\btheta)\|_2\leq 1$.
Hence, we get
\bse
\|\p(\x,\btheta^*)-\p(\x,\btheta_0)\|_2
&=&O(\|\bb^*-\bb_0\|_2+\|\bg^*-\bg_0\|_2).
\ese
Similarly for some $\wt\btheta$ on the line connecting $\btheta^*$ and $\btheta_0$,
\bse
\left\|\frac{\partial\p_1(\x,\wt\btheta)}{\partial\bb\trans}(\bb^*-\bb_0)\right\|_2
&=&\left\|\left\{\p_2(\x,\wt\btheta)-E(Y\mid\x,\wt\btheta)\p_1(\x,\wt\btheta)\right\}
\x\trans(\bb^*-\bb_0)\right\|_2\\
&\leq&\sum_{y=1}^m\left|y-E(Y\mid\x,\wt\btheta)\right|y\pr(Y=y\mid\x,\wt\btheta)O(\|\bb^*-\bb_0\|_2)\\
&\leq&\left[E(Y^2\mid\x,\btheta_0)+\{E(Y\mid\x,\btheta_0)\}^2+O(\|\wt\btheta-\btheta_0\|_2)\right]
O(\|\bb^*-\bb_0\|_2)\\
&=&O(\|\bb^*-\bb_0\|_2)
\ese
by Conditions \ref{con:catbdd} and \ref{con:catlipschitz}.
In addition,
\bse
\left\|\frac{\partial\p_1(\x,\wt\btheta)}{\partial\bg\trans}(\bg^*-\bg_0)\right\|_2
&=&\left\|\left[\diag\left\{\p_1(\x,\wt\btheta)\right\}-\p_1(\x,\wt\btheta)\p\trans(\x,\wt\btheta)\right]
(\bg^*-\bg_0)\right\|_2\\
&\leq&\left[\left\|\diag\left\{\p_1(\x,\wt\btheta)\right\}\right\|_2
+\left\|\p_1(\x,\wt\btheta)\p\trans(\x,\wt\btheta)\right\|_2\right]
\|\bg^*-\bg_0\|_2\\
&\leq&\left[\|\p_1(\x,\wt\theta)\|_\infty+\|\p_1(\x,\wt\btheta)\|_2\|\p(\x,\wt\btheta)\|_2\right]
\|\bg^*-\bg_0\|_2\\
&\leq&O(\|\bg^*-\bg_0\|_2),
\ese
because $\|\p_1(\x,\wt\theta)\|_\infty\leq\|\p_1(\x,\wt\theta)\|_2\leq\|\p_1(\x,\wt\theta)\|_1
=E(Y\mid\x,\wt\theta)=E(Y\mid\x)+o(1)$
and $E(Y\mid\x)$ is bounded by Condition \ref{con:catbdd}.
Therefore,
\bse
\|\p_1(\x,\btheta^*)-\p_1(\x,\btheta_0)\|_2
&=&\left\|\frac{\partial\p_1(\x,\wt\btheta)}{\partial\bb\trans}(\bb^*-\bb_0)
+\frac{\partial\p_1(\x,\wt\btheta)}{\partial\bg\trans}(\bg^*-\bg_0)\right\|_2\\
&=&O(\|\bb^*-\bb_0\|_2+\|\bg^*-\bg_0\|_2).
\ese
Also, by Conditions \ref{con:catbdd} and \ref{con:catlipschitz},
\bse
\left\|\frac{\partial\p_2(\x,\wt\btheta)}{\partial\bb\trans}(\bb^*-\bb_0)\right\|_2
&=&\left\|\left\{\p_3(\x,\wt\btheta)-E(Y\mid\x,\wt\btheta)\p_2(\x,\wt\btheta)\right\}
\x\trans(\bb^*-\bb_0)\right\|_2\\
&\leq&\sum_{y=1}^m\left|y-E(Y\mid\x,\wt\btheta)\right|y^2\pr(Y=y\mid\x,\wt\btheta)O(\|\bb^*-\bb_0\|_2)\\
&\leq&\left[E(Y^3\mid\x)+E(Y\mid\x)E(Y^2\mid\x)+O(\|\wt\btheta-\btheta_0\|_2)\right]
O(\|\bb^*-\bb_0\|_2)\\
&=&O(\|\bb^*-\bb_0\|_2),
\ese
and
\bse
\left\|\frac{\partial\p_2(\x,\wt\btheta)}{\partial\bg\trans}(\bg^*-\bg_0)\right\|_2
&=&\left\|\left[\diag\left\{\p_2(\x,\wt\btheta)\right\}-\p_2(\x,\wt\btheta)\p\trans(\x,\wt\btheta)\right]
(\bg^*-\bg_0)\right\|_2\\
&\leq&\left[\left\|\diag\left\{\p_2(\x,\wt\btheta)\right\}\right\|_2
+\left\|\p_2(\x,\wt\btheta)\p\trans(\x,\wt\btheta)\right\|_2\right]
\|\bg^*-\bg_0\|_2\\
&\leq&\left[\|\p_2(\x,\wt\theta)\|_\infty+\|\p_2(\x,\wt\btheta)\|_2\|\p(\x,\wt\btheta)\|_2\right]
\|\bg^*-\bg_0\|_2\\
&\leq&O(\|\bg^*-\bg_0\|_2).
\ese
The last inequality holds because
$\|\p_2(\x,\wt\theta)\|_\infty\leq\|\p_2(\x,\wt\theta)\|_2\leq\|\p_2(\x,\wt\theta)\|_1
=E(Y^2\mid\x,\wt\theta)=E(Y^2\mid\x)+o(1)$
and $E(Y^2\mid\x)$ is bounded by Condition \ref{con:catbdd}.
Then we get
\bse
\|\p_2(\x,\btheta^*)-\p_2(\x,\btheta_0)\|_2
&=&\left\|\frac{\partial\p_2(\x,\wt\btheta)}{\partial\bb\trans}(\bb^*-\bb_0)
+\frac{\partial\p_2(\x,\wt\btheta)}{\partial\bg\trans}(\bg^*-\bg_0)\right\|_2\\
&=&O(\|\bb^*-\bb_0\|_2+\|\bg^*-\bg_0\|_2).
\ese
Now,
\bse
&&\left\|\var\{\B(Y)\mid\x,\btheta^*\}-\var\{\B(Y)\mid\x,\btheta_0\}\right\|_2\\
&\leq&\left\|\diag\left\{\p(\x,\btheta^*)-\p(\x,\btheta_0)\right\}\right\|_2
+\left\|\p(\x,\btheta^*)\p\trans(\x,\btheta^*) - \p(\x,\btheta_0)\p\trans(\x,\btheta_0)\right\|_2\\
&=&\|\p(\x,\btheta^*)-\p(\x,\btheta_0)\|_\infty\\
&&+\left\|\p(\x,\btheta^*)\{\p(\x,\btheta^*)-\p(\x,\btheta_0)\}\trans
+\{\p(\x,\btheta^*)-\p(\x,\btheta_0)\}\p\trans(\x,\btheta_0)\right\|_2\\
&\leq&\|\p(\x,\btheta^*)-\p(\x,\btheta_0)\|_2\\
&&+\left|\{\p(\x,\btheta^*)-\p(\x,\btheta_0)\}\trans\p(\x,\btheta^*)\right|
+\left|\p\trans(\x,\btheta_0)\{\p(\x,\btheta^*)-\p(\x,\btheta_0)\}\right|\\
&\leq&3\left\|\p(\x,\btheta^*)-\p(\x,\btheta_0)\right\|_2\\
&=&O(\|\bb^*-\bb_0\|_2+\|\bg^*-\bg_0\|_2).
\ese
Since the support of $\x$ is compact by Condition \ref{con:catbdd},
the about results hold uniformly with respect to $\x$.
\end{proof}

\subsection{Proof of Proposition \ref{pro:catbbbg}} \label{proof:catbbbg}

First, we will show that
$\|\wh\bb-\bb_0\|_2=O_p(n^{-1/2})$ and $\|\wh\bg-\bg_0\|_2=O_p(n^{-1/2}m^{1/2})$.
Note that the Hessian of $l(\bb,\bg)$, i.e. 
\bse
&& \begin{bmatrix}
\partial^2 l(\bb,\bg)/\partial\bb\partial\bb\trans & \partial^2 l(\bb,\bg)/\partial\bb\partial\bg\trans \\
\partial^2 l(\bb,\bg)/\partial\bg\partial\bb\trans & \partial^2 l(\bb,\bg)/\partial\bg\partial\bg\trans
\end{bmatrix} \\
&=& -\sumi \begin{bmatrix}
\x_i\x_i\trans\var(Y\mid\x_i,\btheta) & \x_i\cov\{Y,\B(Y)\mid\x_i,\btheta\} \\
[\x_i\cov\{Y,\B(Y)\mid\x_i,\btheta\}]\trans & \var\{\B(Y)\mid\x_i,\btheta\}
\end{bmatrix} \\
&=& -\sumi
\cov\left\{\begin{bmatrix}\x_iY \\ \B(Y)\end{bmatrix},
\begin{bmatrix}\x_iY \\ \B(Y)\end{bmatrix} \mid \x_i,\btheta\right\}
\ese
is negative definite for any $\btheta$,
which implies that a local maximizer is the global maximizer.
Hence similarly to the proof of Proposition \ref{pro:bg},
it suffices to show for any $\epsilon>0$
there exists constants $C_\bb,C_\bg>0$ such that
\be\label{eq:catball}
\pr\left\{l(\bb_0,\bg_0)>
\sup_{\|\v_\bb\|_2=C_\bb,\|\v_\bg\|_2=C_\bg}
l(\bb_0+n^{-1/2}\v_\bb,\bg_0+n^{-1/2}m^{1/2}\v_\bg)\right\} \geq 1-5\epsilon
\ee
for a sufficiently large $n$.
Now, by the Taylor expansion,
\be\label{eq:cattaylor}
&& l(\bb_0+n^{-1/2}\v_\bb,\bg_0+n^{-1/2}m^{1/2}\v_\bg) - l(\bb_0,\bg_0)\n\\
&=& n^{-1/2}\frac{\partial l(\bb_0,\bg_0)}{\partial\bb\trans}\v_\bb
+ n^{-1/2}m^{1/2}\frac{\partial l(\bb_0,\bg_0)}{\partial\bg\trans}\v_\bg\n\\
&&+ \frac{1}{2}n^{-1}\v_\bb\trans\frac{\partial^2 l(\bb^*,\bg^*)}{\partial\bb\partial\bb\trans}\v_\bb
+ \frac{1}{2}n^{-1}m\v_\bg\trans\frac{\partial^2 l(\bb^*,\bg^*)}{\partial\bg\partial\bg\trans}\v_\bg
+ n^{-1}m^{1/2}\v_\bb\trans\frac{\partial^2 l(\bb^*,\bg^*)}{\partial\bb\partial\bg\trans}\v_\bg
\ee
where $\bb^*=\bb_0+\alpha_1n^{-1/2}\v_\bb$ and $\bg^*=\bg_0+\alpha_2n^{-1/2}m^{1/2}\v_\bg$
for some $\v_\bb,\v_\bg$ such that $\|\v_\bb\|_2=C_\bb,\|\v_\bg\|_2=C_\bg$
and some $\alpha_1\in(0,1),\alpha_2\in(0,1)$.
We first have
\bse
\left\|\frac{\partial l(\bb_0,\bg_0)}{\partial\bb}\right\|_2
= \left\|\sumi\x_i\{y_i-E(Y\mid\x_i)\}\right\|_2
\asymp_p n^{1/2}
\ese
by Conditions \ref{con:catbdd} and \ref{con:catvar}.
Therefore, for any $\epsilon>0$ there exists a constant $0<C_1<\infty$ such that
\be\label{eq:catb}
\pr\left\{\left\|\frac{\partial l(\bb_0,\bg_0)}{\partial\bb}\right\|_2
\leq C_1n^{1/2}\right\} \geq 1-\epsilon.
\ee
Also,
\bse
\left\|\frac{\partial l(\bb_0,\bg_0)}{\partial\bg}\right\|_2
= \left\|\sumi\{\B(y_i)-E\{\B(Y)\mid\x_i\}]\right\|_2
\asymp_p n^{1/2}m^{-1/2},
\ese
because
\bse
\left\|E\left\{\left(\sumi\left[\B(Y_i) - E\{\B(Y)\mid\X_i\}\right]\right)^{\otimes2}\right\}\right\|_2
&=& \left\|n E\left(\left[\B(Y)-E\{\B(Y)\mid\X\}\right]^{\otimes2}\right)\right\|_2 \\
&=& n \left\|E[\var\{\B(Y)\mid\X\}]\right\|_2 \\
&\asymp& nm^{-1}
\ese
by Remark \ref{rem:var22}.
Hence, for any $\epsilon>0$ there exists a constant $0<C_2<\infty$ such that
\be\label{eq:catg}
\pr\left\{\left\|\frac{\partial l(\bb_0,\bg_0)}{\partial\bg}\right\|_2
\leq C_2n^{1/2}m^{-1/2}\right\} \geq 1-\epsilon.
\ee
In addition, noting that
$\|\btheta^*-\btheta_0\|_2=O(n^{-1/2}m^{1/2})=o(1)$ by Condition
\ref{con:catm}, we get
\be\label{eq:R11}
\left\|-n^{-1}\frac{\partial^2 l(\bb^*,\bg^*)}{\partial\bb\partial\bb\trans}-\bSig_{11}\right\|_2
&\leq&\left\|n^{-1}\sumi\x_i\x_i\trans\{\var(Y\mid\x_i,\btheta^*)-\var(Y\mid\x_i)\}\right\|_2 \n\\
&&+\left\|n^{-1}\sumi\x_i\x_i\trans\var(Y\mid\x_i)-E\{\X\X\trans\var(Y\mid\X)\}\right\|_2 \n\\
&=&o_p(1)
\ee
by Conditions \ref{con:catbdd} and \ref{con:catlipschitz}.
Since all eigenvalues of $\bSig_{11}$ are of constant order
by Conditions \ref{con:catbdd} and \ref{con:catvar},
the above implies there exists a constant $0<C_3<\infty$ such that
\be\label{eq:catbb}
\pr\left\{\v_\bb\trans\frac{\partial^2 l(\bb^*,\bg^*)}{\partial\bb\partial\bb\trans}\v_\bb
\leq -C_3C_\bb^2n\right\} \geq 1-\epsilon.
\ee
Further, since $\|\bb^*-\bb_0\|_2=O(n^{-1/2})$ and $\|\bg^*-\bg_0\|_2=O(n^{-1/2}m^{1/2})$,
\be\label{eq:R22}
\left\|-n^{-1}\frac{\partial^2 l(\bb^*,\bg^*)}{\partial\bg\partial\bg\trans}-\bSig_{22}\right\|_2
&\leq&\left\|n^{-1}\sumi[\var\{\B(Y)\mid\x_i,\btheta^*\}-\var\{\B(Y)\mid\x_i\}]\right\|_2 \n\\
&&+\left\|n^{-1}\sumi\var\{\B(Y)\mid\x_i\}-E[\var\{\B(Y)\mid\X\}]\right\|_2 \n\\
&=&O(n^{-1/2}m^{1/2})+O_p(n^{-1/2}m^{-1})\n\\
&=&O_p(n^{-1/2}m^{1/2})\n\\
&=&o_p(m^{-1})
\ee
by Lemma \ref{lem:lipschitz}, Remark \ref{rem:var22},
and $n^{-1/2}m^{1/2}=o(m^{-1})$ by Condition \ref{con:catm}.
Together with Remark \ref{rem:var22}, this implies
\be\label{eq:catgg}
\pr\left\{\v_\bg\trans\frac{\partial^2 l(\bb^*,\bg^*)}{\partial\bg\partial\bg\trans}\v_\bg
\leq -C_4 C_\bg^2 nm^{-1}\right\} \geq 1-\epsilon
\ee
for some constant $0<C_4<\infty$.
Now, we have $\|\bSig_{12}\|_2=O(m^{-1/2})$,
because noting that $\bSig_{11}$ and $\bSig_{22}-\bSig_{21}\bSig_{11}^{-1}\bSig_{12}$
are positive definite by Condition \ref{con:catvar},
for any $\u\in\mR^{m}$ such that $\|\u\|_2=1$ we have
\be\label{eq:bSig12}
c\|\bSig_{12}\u\|_2^2
\leq\|\bSig_{11}^{-1/2}\bSig_{12}\u\|_2^2
=\u\trans\bSig_{21}\bSig_{11}^{-1}\bSig_{12}\u
<\u\trans\bSig_{22}\u
\leq\|\bSig_{22}\|_2
\asymp m^{-1}
\ee
by Remark \ref{rem:var22} for some constant $c>0$. 
Also, since $\|\bb^*-\bb_0\|_2=O(n^{-1/2})$ and $\|\bg^*-\bg_0\|_2=O(n^{-1/2}m^{1/2})$,
\be\label{eq:R12}
\left\|-n^{-1}\frac{\partial^2 l(\bb^*,\bg^*)}{\partial\bb\partial\bg\trans}-\bSig_{12}\right\|_2
&\leq&\left\|n^{-1}\sumi\x_i[\cov\{Y,\B(Y)\mid\x_i,\btheta^*\}-\cov\{Y,\B(Y)\mid\x_i\}]\right\|_2 \n\\
&&+\left\|n^{-1}\sumi\x_i\cov\{Y,\B(Y)\mid\x_i\}-E[\X\cov\{Y,\B(Y)\mid\X\}]\right\|_2 \n\\
&\leq&\left\|n^{-1}\sumi\x_i\{\p_1\trans(\x_i,\btheta^*)-\p_1\trans(\x_i,\btheta_0)\}\right\|_2 \n\\
&&+\left\|n^{-1}\sumi\x_i\{E(Y\mid\x_i,\btheta^*)\p\trans(\x_i,\btheta^*)
-E(Y\mid\x_i,\btheta_0)\p\trans(\x_i,\btheta_0)\}\right\|_2 \n\\
&&+\left\|n^{-1}\sumi\x_i\cov\{Y,\B(Y)\mid\x_i\}-E[\X\cov\{Y,\B(Y)\mid\X\}]\right\|_2\n\\
&=&O(n^{-1/2}m^{1/2})+O_p(n^{-1/2}m^{-1/2})\n\\
&=&o_p(m^{-1/2})
\ee
by Lemma \ref{lem:lipschitz}, Condition \ref{con:catbdd},
and $n^{-1/2}m^{1/2}=o(m^{-1/2})$ by Condition \ref{con:catm}.
Thus we have
\be\label{eq:catbg}
\pr\left\{\v_\bb\trans\frac{\partial^2 l(\bb^*,\bg^*)}{\partial\bb\partial\bg\trans}\v_\bg
\leq C_5 C_\bb C_\bg nm^{-1/2}\right\} \geq 1-\epsilon
\ee
for some constant $0<C_5<\infty$. 
Combining \eqref{eq:cattaylor}, \eqref{eq:catb}, \eqref{eq:catg},
\eqref{eq:catbb}, \eqref{eq:catgg}, and \eqref{eq:catbg},
with probability at least $1-5\epsilon$,
\bse
&& l(\bb_0+n^{-1/2}\v_\bb,\bg_0+n^{-1/2}m^{1/2}\v_\bg) - l(\bb_0,\bg_0) \\
&\leq& C_1C_\bb+C_2C_\bg-\frac{C_3}{2}C_\bb^2-\frac{C_4}{2}C_\bg^2 + C_5C_\bb C_\bg \\
&=& C_\bb\left(C_1-\frac{C_3}{2}C_\bb+C_5C_\bg\right)
+C_\bg\left(C_2-\frac{C_4}{2}C_\bg\right) \\
&<&0
\ese
when $C_\bb>2(C_1+C_5C_\bg)/C_3$ and $C_\bg>2C_2/C_4$, and this proves \eqref{eq:catball}.
Therefore, $\|\wh\bb-\bb_0\|_2=O_p(n^{-1/2})$ and $\|\wh\bg-\bg_0\|_2=O_p(n^{-1/2}m^{1/2})$.

Now, we analyze the asymptotic behavior of $(\wh\bb\trans,\wh\bg\trans)\trans$.
Since $(\wh\bb\trans,\wh\bg\trans)\trans$ is the maximizer of $l(\bb,\bg)$,
letting $(\bb^{*\rm T},\bg^{*\rm T})\trans$ be on the line connecting
$(\wh\bb\trans,\wh\bg\trans)\trans$ and $(\bb_0\trans,\bg_0\trans)\trans$,
\bse
\0
&=&n^{-1}\frac{\partial l(\bb_0,\bg_0)}{\partial\btheta}
-\left\{-n^{-1}\frac{\partial^2 l(\bb^*,\bg^*)}{\partial\btheta\partial\btheta\trans}\right\}
\begin{bmatrix}\wh\bb-\bb_0 \\ \wh\bg-\bg_0\end{bmatrix}\\
&=&n^{-1}\sumi\begin{bmatrix}\x_i\{y_i-E(Y\mid\x_i)\} \\ \B(y_i)-E\{\B(Y)\mid\x_i\}\end{bmatrix}
-\left(\bSig+\begin{bmatrix}\R_{11}&\R_{12}\\\R_{12}\trans&\R_{22} \end{bmatrix}\right)
\begin{bmatrix}\wh\bb-\bb_0 \\ \wh\bg-\bg_0\end{bmatrix},
\ese
where
\bse
\R_{11}&\equiv&
-n^{-1}\frac{\partial^2 l(\bb^*,\bg^*)}{\partial\bb\partial\bb\trans}-\bSig_{11},\\
\R_{12}&\equiv&
-n^{-1}\frac{\partial^2 l(\bb^*,\bg^*)}{\partial\bb\partial\bg\trans}-\bSig_{12},\\
\R_{22}&\equiv&
-n^{-1}\frac{\partial^2 l(\bb^*,\bg^*)}{\partial\bg\partial\bg\trans}-\bSig_{22}.
\ese
Since $\bSig$ is invertible by Condition \ref{con:catvar}, we have
\bse
\begin{bmatrix}\wh\bb-\bb_0 \\ \wh\bg-\bg_0\end{bmatrix}
=
\bSig^{-1}n^{-1}\sumi\begin{bmatrix}\x_i\{y_i-E(Y\mid\x_i)\} \\ \B(y_i)-E\{\B(Y)\mid\x_i\}\end{bmatrix}
-\bSig^{-1}\begin{bmatrix}\R_{11}&\R_{12}\\\R_{12}\trans&\R_{22} \end{bmatrix}
\begin{bmatrix}\wh\bb-\bb_0 \\ \wh\bg-\bg_0\end{bmatrix}.
\ese
Now, note that
\be\label{eq:bsiginv}
\bSig^{-1}
=\begin{bmatrix}\bSig_\bb
&-\bSig_\bb\bSig_{12}\bSig_{22}^{-1}\\
-\bSig_{22}^{-1}\bSig_{21}\bSig_\bb
&\bSig_{22}^{-1}+\bSig_{22}^{-1}\bSig_{21}\bSig_\bb\bSig_{12}\bSig_{22}^{-1} \end{bmatrix}
=\begin{bmatrix}O(1)&O(m^{1/2})\\O(m^{1/2})&O(m)\end{bmatrix}
\ee
in terms of the 2-norms of the block matrices.
This is because $\bSig_\bb^{-1}=\bSig_{11}-\bSig_{12}\bSig_{22}^{-1}\bSig_{21}$
is positive definite by Condition \ref{con:catvar},
$\|\bSig_{12}\|_2=O(m^{-1/2})$ by \eqref{eq:bSig12},
and $\|\bSig_{22}^{-1}\|_2\asymp m$ by Remark \ref{rem:var22}.
In addition, using the fact that
$\|\bb^*-\bb_0\|_2=O_p(n^{-1/2})$ and $\|\bg^*-\bg_0\|_2=O_p(n^{-1/2}m^{1/2})$, 
similar arguments to \eqref{eq:R11}, \eqref{eq:R22}, and
\eqref{eq:R12} leads to 
$\|\R_{11}\|_2=o_p(1)$,
$\|\R_{22}\|_2=o_p(m^{-1})$ and  $\|\R_{12}\|_2=o_p(m^{-1/2})$.
Hence, in terms of their 2-norms,
\bse
&&-\bSig^{-1}\begin{bmatrix}\R_{11}&\R_{12}\\\R_{12}\trans&\R_{22} \end{bmatrix}
\begin{bmatrix}\wh\bb-\bb_0 \\ \wh\bg-\bg_0\end{bmatrix}\\
&=&\begin{bmatrix}O(1)&O(m^{1/2})\\O(m^{1/2})&O(m)\end{bmatrix}
\begin{bmatrix}o_p(1)&o_p(m^{-1/2})\\o_p(m^{-1/2})&o_p(m^{-1})\end{bmatrix}
\begin{bmatrix}O_p(n^{-1/2}) \\ O_p(n^{-1/2}m^{1/2})\end{bmatrix}\\
&=&\begin{bmatrix}o_p(n^{-1/2})\\o_p(n^{-1/2}m^{1/2})\end{bmatrix}.
\ese
Therefore we get
\bse
\begin{bmatrix}\wh\bb-\bb_0 \\ \wh\bg-\bg_0\end{bmatrix}
=
\bSig^{-1}n^{-1}\sumi\begin{bmatrix}\x_i\{y_i-E(Y\mid\x_i)\} \\ \B(y_i)-E\{\B(Y)\mid\x_i\}\end{bmatrix}
+\begin{bmatrix}\r_1 \\ \r_2\end{bmatrix},
\ese
where $\|\r_1\|_2=o_p(n^{-1/2})$ and $\|\r_2\|_2=o_p(n^{-1/2}m^{1/2})$,
which proves the second result of Proposition \ref{pro:catbbbg}.
Also, we can express $\wh\bb-\bb_0$ as
\bse
\wh\bb-\bb_0
&=&\begin{bmatrix}\I_p&\0_m\end{bmatrix}
\bSig^{-1}n^{-1}\sumi\begin{bmatrix}\x_i\{y_i-E(Y\mid\x_i)\} \\ \B(y_i)-E\{\B(Y)\mid\x_i\}\end{bmatrix}
+\r_1.
\ese
Then since
\bse
E\left(\begin{bmatrix}\X\{Y-E(Y\mid\X)\} \\
\B(Y)-E\{\B(Y)\mid\X\}\end{bmatrix}^{\otimes2}\right)=\bSig,
\ese
using \eqref{eq:bsiginv} we can conclude 
$\bSig_\bb^{-1/2}\sqrt{n}(\wh\bb-\bb_0)\to N(\0_p,\I_p)$ in distribution as $n\to\infty$.
\qed

\subsection{Proof of Theorem \ref{th:catxi}} \label{proof:catxi}

We can express $\wh\bxi-\bxi_0$ as
\be\label{eq:catbxi}
\wh\bxi-\bxi_0
&=&(\wh\bb-\bb_0)n^{-1}\sumi\var(Y\mid\x_i,\wh\btheta)
+\bb_0n^{-1}\sumi\{\var(Y\mid\x_i,\wh\btheta)-\var(Y\mid\x_i)\}\n\\
&&+\bb_0\left[n^{-1}\sumi\var(Y\mid\x_i)-E\{\var(Y\mid\X)\}\right]\n\\
&=&\left\{n^{-1}\sumi\var(Y\mid\x_i,\wh\btheta)\I
+\bb_0n^{-1}\sumi\frac{\partial\var(Y\mid\x_i,\btheta^*)}{\partial\bb\trans}\right\}(\wh\bb-\bb_0)\n\\
&&+\bb_0n^{-1}\sumi\frac{\partial\var(Y\mid\x_i,\btheta^*)}{\partial\bg\trans}(\wh\bg-\bg_0)\n\\
&&+\bb_0\left[n^{-1}\sumi\var(Y\mid\x_i)-E\{\var(Y\mid\X)\}\right],
\ee
where $\btheta^*$ is on the line connecting $\wh\btheta$ and $\btheta_0$.
Since $\|\wh\btheta-\btheta_0\|_2=O_p(n^{-1/2}m^{1/2})$ by Proposition \ref{pro:catbbbg}
and $\var(Y\mid\x_i,\btheta)$ is Lipschitz continuous at $\btheta_0$ uniformly with respect to $\x$
by Conditions \ref{con:catbdd} and \ref{con:catlipschitz},
it is easy to see that
\be\label{eq:bxi1}
n^{-1}\sumi\var(Y\mid\x_i,\wh\btheta)
&=&n^{-1}\sumi\var(Y\mid\x_i)+O_p(n^{-1/2}m^{1/2})\n\\
&=&E\{\var(Y\mid\X)\}+O_p(n^{-1/2}m^{1/2}).
\ee
Similarly, Conditions \ref{con:catbdd} and \ref{con:catlipschitz} lead to
\be\label{eq:bxi2}
&&\left\|n^{-1}\sumi\frac{\partial\var(Y\mid\x_i,\btheta^*)}{\partial\bb}
-E\left[\X\{Y-E(Y\mid\X)\}^3\right]\right\|_2\n\\
&=&\left\|n^{-1}\sumi\x_i E[\{Y-E(Y\mid\x_i,\btheta^*)\}^3\mid\x_i,\btheta^*]
-E\left[\X\{Y-E(Y\mid\X)\}^3\right]\right\|_2\n\\
&=&O_p(n^{-1/2}m^{1/2}).
\ee 
In addition, in terms of the 2-norm, we have
\bse
&&n^{-1}\sumi\frac{\partial\var(Y\mid\x_i,\btheta^*)}{\partial\bg}\\
&=&n^{-1}\sumi \left[E\{\B(Y)Y^2\mid\x_i,\btheta^*\}
-2E\{\B(Y)Y\mid\x_i,\btheta^*\}E(Y\mid\x_i,\btheta^*)\right.\\
&&\left.+2E\{\B(Y)\mid\x_i,\btheta^*\}E(Y\mid\x_i,\btheta^*)^2
-E\{\B(Y)\mid\x_i,\btheta^*\}E(Y^2\mid\x_i,\btheta^*)\right]\\
&=&n^{-1}\sumi\left\{\p_2(\x_i,\btheta^*)-2\p_1(\x_i,\btheta^*)E(Y\mid\x_i)
+2\p(\x_i,\btheta^*)E(Y\mid\x_i)^2-\p(\x_i,\btheta^*)E(Y^2\mid\x_i)\right\}\\
&&+O_p(n^{-1/2}m^{1/2})\\
&=&n^{-1}\sumi\left\{\p_2(\x_i,\btheta_0)-2\p_1(\x_i,\btheta_0)E(Y\mid\x_i)
+2\p(\x_i,\btheta_0)E(Y\mid\x_i)^2-\p(\x_i,\btheta_0)E(Y^2\mid\x_i)\right\}\\
&&+O_p(n^{-1/2}m^{1/2})\\
&=&n^{-1}\sumi
E([\B(Y)-E\{\B(Y)\mid\x_i\}]\{Y-E(Y\mid\x_i)\}^2\mid\x_i)+O_p(n^{-1/2}m^{1/2})\\
&=&
E([\B(Y)-E\{\B(Y)\mid\X\}]\{Y-E(Y\mid\X)\}^2)\\
&&+
n^{-1}\sumi
E([\B(Y)-E\{\B(Y)\mid\x_i\}]\{Y-E(Y\mid\x_i)\}^2\mid\x_i)
-E([\B(Y)-E\{\B(Y)\mid\X\}]\{Y-E(Y\mid\X)\}^2)\\
&&+O_p(n^{-1/2}m^{1/2}).
\ese
The second equality holds by Conditions \ref{con:catbdd} and \ref{con:catlipschitz},
and the third equality holds by Lemma \ref{lem:lipschitz}.
Now let $\W_1\equiv\B(Y)-E\{\B(Y)\mid\X\}$ and $W_2\equiv\{Y-E(Y\mid\X)\}^2$,
then we can show $\|E(\W_1W_2)\|_2=O(m^{-1/2})$ since for any $\u\in\mR^{m}$ such that $\|\u\|_2=1$,
\bse
\{E(\W_1W_2)\trans\u\}^2
&=&\u\trans E\{\cov(\W_1,W_2\mid\X)\}E\{\cov(W_2,\W_1\mid\X)\}\u\n\\
&\leq&\u\trans E\{\cov(\W_1,W_2\mid\X)\cov(W_2,\W_1\mid\X)\}\u\n\\
&\leq&\u\trans E\{\var(\W_1\mid\X)\var(W_2\mid\X)\}\u\n\\
&\leq&C\u\trans E\{\var(\W_1\mid\X)\}\u\n\\
&=&O(m^{-1})
\ese
for some constant $C>0$.
The fourth argument holds since $\var(W_2\mid\x)$ is uniformly bounded
by Condition \ref{con:catbdd}, 
and the last argument holds because
$\|E\{\var(\W_1\mid\X)\}\|_2=\|\bSig_{22}\|_2\asymp m^{-1}$ by Remark
\ref{rem:var22}.
This leads to
\be\label{eq:bxi3}
\|\A_2\|_2=\|\bb_0E(\W_1W_2)\trans\|_2=\|\bb_0\|_2\|E(\W_1W_2)\|_2=O(m^{-1/2}).
\ee
Similarly,
\bse
E\left[\left\{m^{1/2}E(\W_1\trans\u W_2\mid\X)\right\}^2\right]
&=&m E\left[\left\{\cov(\W_1\trans\u,W_2\mid\X)\right\}^2\right]\\
&\leq&m E\left\{\var(\W_1\trans\u\mid\X)\var(W_2\mid\X)\right\}\\
&\leq&Cm \u\trans E\left\{\var(\W_1\mid\X)\right\}\u\\
&=&O(1),
\ese
i.e. the second moment of $m^{1/2}E(\W_1\trans\u W_2\mid\X)$ is finite, then
\bse
&&\left\|n^{-1}\sumi E(\W_1W_2\mid\x_i)-E(\W_1W_2)\right\|_2\\
&=&m^{-1/2}\sup_{\u\in\R^m:\|\u\|_2=1}
\left|n^{-1}\sumi m^{1/2}E(\W_1\trans\u W_2\mid\x_i)-m^{1/2}E(\W_1\trans\u W_2)\right|\\
&=&O_p(m^{-1/2}n^{-1/2})\\
&=&o_p(m^{-1/2}).
\ese
Thus using $n^{-1/2}m^{1/2}=o(m^{-1/2})$ under Condition \ref{con:catm}, we get
\be\label{eq:bxi4}
&&\left\|n^{-1}\sumi\frac{\partial\var(Y\mid\x_i,\btheta^*)}{\partial\bg}
-E([\B(Y)-E\{\B(Y)\mid\X\}]\{Y-E(Y\mid\X)\}^2)\right\|_2\n\\
&=&\left\|
n^{-1}\sumi
E(\W_1W_2\mid\x_i)
-E(\W_1W_2)\right\|_2+O_p(n^{-1/2}m^{1/2})\n\\
&=&o_p(m^{-1/2}).
\ee
Combining the results of Proposition \ref{pro:catbbbg},
\eqref{eq:catbxi}, \eqref{eq:bxi1}, \eqref{eq:bxi2}, \eqref{eq:bxi3},
and \eqref{eq:bxi4},
in terms of the 2-norm we get
\bse
\wh\bxi-\bxi_0
&=&\A_1(\wh\bb-\bb_0)+\A_2(\wh\bg-\bg_0)+\bb_0\left[n^{-1}\sumi\var(Y\mid\x_i)-E\{\var(Y\mid\X)\}\right]\\
&&+O_p(n^{-1}m^{1/2})+o_p(n^{-1/2})\\
&=&\A\bSig^{-1} n^{-1}\sumi
\begin{bmatrix} \x_i\{y_i-E(Y\mid\x_i)\} \\ \B(y_i)-E\{\B(Y)\mid\x_i\} \end{bmatrix}
+\bb_0\left[n^{-1}\sumi\var(Y\mid\x_i)-E\{\var(Y\mid\X)\}\right]\\
&&+\A_1\r_1+\A_2\r_2+o_p(n^{-1/2})\\
&=&\A\bSig^{-1} n^{-1}\sumi
\begin{bmatrix} \x_i\{y_i-E(Y\mid\x_i)\} \\ \B(y_i)-E\{\B(Y)\mid\x_i\} \end{bmatrix}
+\bb_0\left[n^{-1}\sumi\var(Y\mid\x_i)-E\{\var(Y\mid\X)\}\right]\\
&&+o_p(n^{-1/2})
\ese
since $m=o(n^{1/3})$ by Condition \ref{con:catm}.
Since
\bse
E\left(\begin{bmatrix}\X\{Y-E(Y\mid\X)\} \\
\B(Y)-E\{\B(Y)\mid\X\}\end{bmatrix}^{\otimes2}\right)&=&\bSig,\\
\cov[\X\{Y-E(Y\mid\X)\},\var(Y\mid\X)]&=&\0,\\
\cov[\B(Y)-E\{\B(Y)\mid\X\},\var(Y\mid\X)]&=&\0,
\ese
we can conclude
$\bSig_\bxi^{-1/2}\sqrt{n}(\wh\bxi-\bxi_0)\to N(\0,\I)$ in distribution as $n\to\infty$.
\qed

\subsection{Proof of Proposition \ref{pro:cateffbb}} \label{proof:cateffbb}

The efficient score for $\bb$ given in Appendix \ref{proof:effbb} is
$\S\eff=y\x-\a_0(y)-E\{Y\x-\a_0(Y)\mid\x\}$, where 
$\a_0(y)$ satisfies 
\be\label{eq:cata0}
\a_0(y)-E[E\{\a_0(Y)\mid\X\}\mid y]=
E(y\X\mid y)-E\{E(Y\X\mid\X)\mid y\}. 
\ee 
Thus, the efficient variance is $\{E(\S\eff^{\otimes2})\}^{-1}$. 

To show that the MLE estimator $\wh\bb$ in Proposition \ref{pro:catbbbg} is efficient, we
need to show 
$\bSig_\bb^{-1}= E(\S\eff^{\otimes2})$.
$\a_0(y)$ must be of the general form $\Lambda\B(y)$,
  i.e. $\a_0(y)=\Lambda\B(y)$,
  where $\Lambda$ is a $p\times m$ 
coefficient matrix. Then (\ref{eq:cata0}) implies that 
\bse 
\Lambda E\{\B(Y)^{\otimes2}\}-\Lambda E[E\{\B(Y)\mid\X\}^{\otimes2}]
&=&
E[E\{Y\X-E(Y\X\mid\X)\B(Y)\trans\mid\X\}]\\
&=&E[\X\cov\{Y,\B(Y)\mid\X\}], 
\ese 
i.e.,
\bse 
\a_0(y)=\Lambda\B(y)=\bSig_{12}\bSig_{22}^{-1}\B(y).
\ese 
Therefore,
\bse 
E(\S\eff^{\otimes2})
=E\left\{(
Y\X-E(Y\X\mid\X) 
-
\bSig_{12}\bSig_{22}^{-1}[\B(Y)-E\{\B(Y)\mid\X\}])^{\otimes2}\right\}
=\bSig_\bb^{-1}. 
\ese 
\qed

\subsection{Proof of Theorem \ref{th:cateffxi}} \label{proof:cateffxi}

Since the efficient score for $\bxi$ given in Section \ref{sec:xibound} leads to
\bse 
E(\bphi\eff^{\otimes2})&=&E\left(
[\bb v(\bb\trans\X)-\bb E\{v(\bb\trans\X)\}]^{\otimes2}\right)\\
&&+E\left(
[\bb Y^2+\a(Y) +\M\X Y 
-E\{\bb Y^2+\a(Y)+\M\X Y\mid\X\}]^{\otimes2}\right), 
\ese 
and the asymptotic variance of $\wh\bxi$ in Theorem \ref{th:catxi} is
\bse 
\bSig_\bxi=\A\bSig^{-1}\A\trans+\bb^{\otimes2}\var\{v(\bb\trans\X)\}, 
\ese 
we only need to show 
\bse 
E\left(
[\bb Y^2+\a(Y) +\M\X Y 
-E\{\bb Y^2+\a(Y)+\M\X Y\mid\X\}]^{\otimes2}\right)=
\A\bSig^{-1}\A\trans. 
\ese 
Now we can always write $\bb y^2+\a(y)=\bLam\B(y)$, where $\bLam\in
\mR^{p\times m}$. Then (\ref{eq:a}) and (\ref{eq:M}) imply 
\bse 
&&-\bLam E[\B(y) -E\{\B(Y) \mid \X\}\mid y]
+\bb E\{ y^2 -E(Y^2 \mid \X)\mid y\}\n\\
&=&2\bb E[ yE(Y\mid\X)-
    E\{YE(Y\mid\X) \mid \X\}\mid y]
+\M E[\X\{ y - E( Y \mid \X)\}\mid y]\\
&=&2\bb E[ yE(Y\mid\X)-
    E\{YE(Y\mid\X) \mid \X\}\mid y]\\
&&+\left(E\{v(\bb\trans\X)\}\I 
-E\left[2\bb\X\trans Y v(\bb\trans\X) +\{\bLam\B(Y) -\bb Y^2\}
\{Y-E(Y\mid\X)\}\X\trans 
\right]\right)\\
&&\times \bSig_{11}^{-1}
E[\X\{ y - E( Y \mid \X)\}\mid y]. 
\ese 
Multiplying $\B\trans(y)$ on both sides and taking expectation lead
to 
\bse 
&&-\bLam \bSig_{22}
+\bb E[\{ Y^2 -E(Y^2 \mid \X)\}\B\trans(Y)]\\
&=&2\bb E[ YE(Y\mid\X) \B\trans(Y)-
    \{E(Y\mid\X)\}^2\B\trans(Y)]\\
&&+\left(E\{v(\bb\trans\X)\}\I 
-E\left[2\bb\X\trans Y v(\bb\trans\X) +\{\bLam\B(Y)-\bb Y^2\}
\{Y-E(Y\mid\bb\trans\X)\}\X\trans 
\right]\right)\\
&&\times \bSig_{11}^{-1}
E[\X\{ Y - E( Y \mid \X)\}\B\trans(Y)]\\
&=&2\bb E[Y\cov\{Y,\B(Y)\mid\X\}]
+E\{v(\bb\trans\X)\}\bSig_{11}^{-1}\bSig_{12}
-E\left\{2\bb\X\trans Y v(\bb\trans\X) 
\right\}\bSig_{11}^{-1}\bSig_{12}\\
&&-\bLam \bSig_{21}\bSig_{11}^{-1}\bSig_{12}
+\bb E[ Y^2 
\{Y-E(Y\mid\bb\trans\X)\}\X\trans]\bSig_{11}^{-1}\bSig_{12}\\
&=&2\bb E[Y\cov\{Y,\B(Y)\mid\X\}]+\A_1 \bSig_{11}^{-1}\bSig_{12}
-\bLam \bSig_{21}\bSig_{11}^{-1}\bSig_{12}, 
\ese 
where $\A_1, \A_2$ are defined in Theorem \ref{th:catxi},
hence 
\bse 
&&-\bLam \bSig_{22}+\bLam \bSig_{21}\bSig_{11}^{-1}\bSig_{12}\\
&=&2\bb E[Y\cov\{Y,\B(Y)\mid\X\}]
+\A_1\bSig_{11}^{-1}\bSig_{12}
-\bb E[\{ Y^2 -E(Y^2 \mid \X)\}\B\trans(Y)]\\
&=&\A_1\bSig_{11}^{-1}\bSig_{12}-\A_2, 
\ese 
and 
\bse 
\bb y^2+\a(y)=(\A_1\bSig_{11}^{-1}\bSig_{12}-\A_2) 
(\bSig_{21}\bSig_{11}^{-1}\bSig_{12}-\bSig_{22})^{-1}\B(y)
=\U \B(y), 
\ese 
where, for notational brevity, 
\bse 
\U&\equiv&(\A_1\bSig_{11}^{-1}\bSig_{12}-\A_2) 
(\bSig_{21}\bSig_{11}^{-1}\bSig_{12}-\bSig_{22})^{-1}. 
\ese 
Then 
\bse 
\M&=&
\left(E\{v(\bb\trans\X)\}\I 
-E\{2\bb\X\trans Y v(\bb\trans\X)\}
 -\U  E\left[\B(Y)\{Y-E(Y\mid\bb\trans\X)\}\X\trans 
\right]\right.\\
&&\left. +\bb  E\left[Y^2\{Y-E(Y\mid\bb\trans\X)\}\X\trans 
\right]
\right) \bSig_{11}^{-1}\\
&=&\left(E\{v(\bb\trans\X)\}\I 
-E\{2\bb\X\trans Y v(\bb\trans\X)\}
 -\U \bSig_{21}+\bb  E\left[Y^2\{Y-E(Y\mid\X)\}\X\trans\right]
 \right)\bSig_{11}^{-1}\\
&=&(\A_1 -\U \bSig_{21})\bSig_{11}^{-1}. 
\ese 
Hence 
\bse 
\bb y^2+\a(y)-E\{\bb Y^2+\a(Y)\mid\x\}
=\U[
\B(y)-E\{\B(Y)\mid\x\}], 
\ese 
and 
\bse 
E\{\var(\M\X Y\mid\X)\}
&=&\M \bSig_{11}\M\trans,\\
E[\var\{\bb Y^2+\a(Y)\mid\X\}]
&=&\U\bSig_{22}\U\trans,\\
E[\cov\{\M\X Y,\bb Y^2+\a(Y)\mid\X\}]
&=&\M \bSig_{12}\U\trans. 
\ese 
Thus, noting that $\A=[\A_1, \A_2]$, 
\bse 
&&E\left(
[\bb Y^2+\a(Y) +\M\X Y 
-E\{\bb Y^2+\a(Y)+\M\X Y\mid\X\}]^{\otimes2}\right)\\
&=&\M\bSig_{11}\M\trans+
\U\bSig_{22}\U\trans+
\M \bSig_{12}\U\trans 
+\U\bSig_{21}\M\trans \\
&=&(\A_1 -\U \bSig_{21})\bSig_{11}^{-1}(\A_1\trans 
-\bSig_{12}\U\trans) +\U\bSig_{22}\U\trans+
(\A_1 -\U \bSig_{21})\bSig_{11}^{-1}\bSig_{12}\U\trans\\
&&+
\U\bSig_{21}\bSig_{11}^{-1}(\A_1\trans 
-\bSig_{12}\U\trans)\\
&=&\A_1 \bSig_{11}^{-1}\A_1\trans 
+\U(\bSig_{22}
 -
\bSig_{21}\bSig_{11}^{-1}\bSig_{12})\U\trans\\
&=&
\A_1\bSig_{11}^{-1}\A_1\trans 
+
(\A_1\bSig_{11}^{-1}\bSig_{12}-\A_2) 
(\bSig_{22}-\bSig_{21}\bSig_{11}^{-1}\bSig_{12})^{-1}
(\bSig_{21}\bSig_{11}^{-1}\A_1\trans-\A_2\trans) \\
&\\
&=&\A 
\left(\begin{array}{cc}
\bSig_{11}^{-1}+\bSig_{11}^{-1}\bSig_{12}(\bSig_{22}-\bSig_{21}\bSig_{11}^{-1}\bSig_{12})^{-1}\bSig_{21}\bSig_{11}^{-1}&
                                                                                                          -\bSig_{11}^{-1}\bSig_{12}(\bSig_{22}-\bSig_{21}\bSig_{11}^{-1}\bSig_{12})^{-1}\\
-(\bSig_{22}-\bSig_{21}\bSig_{11}^{-1}\bSig_{12})^{-1}\bSig_{21}\bSig_{11}^{-1}&
(\bSig_{22}-\bSig_{21}\bSig_{11}^{-1}\bSig_{12})^{-1}
\end{array}\right)\A\trans\\
&&\\
&=&\A\bSig^{-1}\A\trans. 
\ese 
\qed

\subsection{Additional Tables for Simulation Experiments}

\bgroup
\def\arraystretch{0.95}
\begin{table}[hp]
    \centering
    \caption{$\bta_\tau$ estimation results under the truncated normal distribution.}
    \label{tab:normaleta}
    \vspace{0.5em}
    \begin{tabular}{c|cl|cccc}
    $\tau$ & & Method & $|$bias$|$ & $\sigma_{\text{sim}}$ & $\wh\sigma_{\text{est}}$ & C.I. \\ \hline
    \multirow{9}{*}{0.05} & $\eta_1$ & aMLE & 0.022 & 0.028 & 0.029 & 0.954 \\
                          &          & pMLE & 0.026 & 0.029 & -     & -     \\
                          &          & MLE  & 0.035 & 0.035 & 0.035 & 0.893 \\
                          & $\eta_2$ & aMLE & 0.029 & 0.036 & 0.034 & 0.937 \\
                          &          & pMLE & 0.036 & 0.035 & -     & -     \\
                          &          & MLE  & 0.053 & 0.041 & 0.036 & 0.681 \\
                          & $\eta_3$ & aMLE & 0.033 & 0.041 & 0.042 & 0.957 \\
                          &          & pMLE & 0.047 & 0.040 & -     & -     \\
                          &          & MLE  & 0.075 & 0.044 & 0.035 & 0.463 \\ \hline
    \multirow{9}{*}{0.25} & $\eta_1$ & aMLE & 0.023 & 0.029 & 0.029 & 0.950 \\
                          &          & pMLE & 0.023 & 0.029 & -     & -     \\
                          &          & MLE  & 0.028 & 0.035 & 0.035 & 0.954 \\
                          & $\eta_2$ & aMLE & 0.028 & 0.036 & 0.034 & 0.941 \\
                          &          & pMLE & 0.028 & 0.035 & -     & -     \\
                          &          & MLE  & 0.035 & 0.041 & 0.036 & 0.898 \\
                          & $\eta_3$ & aMLE & 0.032 & 0.040 & 0.040 & 0.953 \\
                          &          & pMLE & 0.032 & 0.039 & -     & -     \\
                          &          & MLE  & 0.040 & 0.044 & 0.035 & 0.839 \\ \hline
    \multirow{9}{*}{0.50} & $\eta_1$ & aMLE & 0.023 & 0.029 & 0.029 & 0.945 \\
                          &          & pMLE & 0.023 & 0.030 & -     & -     \\
                          &          & MLE  & 0.029 & 0.035 & 0.035 & 0.940 \\
                          & $\eta_2$ & aMLE & 0.028 & 0.036 & 0.034 & 0.945 \\
                          &          & pMLE & 0.029 & 0.035 & -     & -     \\
                          &          & MLE  & 0.039 & 0.041 & 0.036 & 0.841 \\
                          & $\eta_3$ & aMLE & 0.032 & 0.040 & 0.040 & 0.957 \\
                          &          & pMLE & 0.033 & 0.039 & -     & -     \\
                          &          & MLE  & 0.049 & 0.044 & 0.035 & 0.742 \\ \hline
    \multirow{9}{*}{0.75} & $\eta_1$ & aMLE & 0.023 & 0.029 & 0.029 & 0.941 \\
                          &          & pMLE & 0.023 & 0.029 & -     & -     \\
                          &          & MLE  & 0.028 & 0.035 & 0.035 & 0.953 \\
                          & $\eta_2$ & aMLE & 0.028 & 0.036 & 0.034 & 0.940 \\
                          &          & pMLE & 0.028 & 0.034 & -     & -     \\
                          &          & MLE  & 0.035 & 0.041 & 0.036 & 0.898 \\
                          & $\eta_3$ & aMLE & 0.033 & 0.040 & 0.040 & 0.959 \\
                          &          & pMLE & 0.032 & 0.039 & -     & -     \\
                          &          & MLE  & 0.039 & 0.044 & 0.035 & 0.846 \\ \hline
    \multirow{9}{*}{0.95} & $\eta_1$ & aMLE & 0.023 & 0.028 & 0.029 & 0.940 \\
                          &          & pMLE & 0.025 & 0.029 & -     & -     \\
                          &          & MLE  & 0.036 & 0.035 & 0.035 & 0.893 \\
                          & $\eta_2$ & aMLE & 0.028 & 0.035 & 0.034 & 0.945 \\
                          &          & pMLE & 0.032 & 0.034 & -     & -     \\
                          &          & MLE  & 0.053 & 0.041 & 0.036 & 0.678 \\
                          & $\eta_3$ & aMLE & 0.033 & 0.041 & 0.042 & 0.964 \\
                          &          & pMLE & 0.041 & 0.038 & -     & -     \\
                          &          & MLE  & 0.075 & 0.044 & 0.035 & 0.458
    \end{tabular}
\end{table}
\egroup

\bgroup
\def\arraystretch{0.95}
\begin{table}[hp]
    \centering
    \caption{$\bta_\tau$ estimation results under the normal distribution.}
    \label{tab:normal2eta}
    \vspace{0.5em}
    \begin{tabular}{c|cl|cccc}
    $\tau$ & & Method & $|$bias$|$ & $\sigma_{\text{sim}}$ & $\wh\sigma_{\text{est}}$ & C.I. \\ \hline
    \multirow{9}{*}{0.05} & $\eta_1$ & aMLE & 0.026 & 0.033 & 0.032 & 0.947 \\
                          &          & pMLE & 0.030 & 0.039 & -     & -     \\
                          &          & MLE  & 0.026 & 0.032 & 0.032 & 0.954 \\
                          & $\eta_2$ & aMLE & 0.026 & 0.032 & 0.032 & 0.948 \\
                          &          & pMLE & 0.031 & 0.039 & -     & -     \\
                          &          & MLE  & 0.025 & 0.032 & 0.032 & 0.957 \\
                          & $\eta_3$ & aMLE & 0.026 & 0.033 & 0.033 & 0.934 \\
                          &          & pMLE & 0.032 & 0.039 & -     & -     \\
                          &          & MLE  & 0.025 & 0.032 & 0.032 & 0.941 \\ \hline
    \multirow{9}{*}{0.25} & $\eta_1$ & aMLE & 0.026 & 0.032 & 0.032 & 0.950 \\
                          &          & pMLE & 0.030 & 0.038 & -     & -     \\
                          &          & MLE  & 0.026 & 0.032 & 0.032 & 0.954 \\
                          & $\eta_2$ & aMLE & 0.025 & 0.032 & 0.032 & 0.950 \\
                          &          & pMLE & 0.030 & 0.039 & -     & -     \\
                          &          & MLE  & 0.025 & 0.032 & 0.032 & 0.957 \\
                          & $\eta_3$ & aMLE & 0.025 & 0.032 & 0.032 & 0.940 \\
                          &          & pMLE & 0.031 & 0.038 & -     & -     \\
                          &          & MLE  & 0.025 & 0.032 & 0.032 & 0.941 \\ \hline
    \multirow{9}{*}{0.50} & $\eta_1$ & aMLE & 0.026 & 0.032 & 0.032 & 0.953 \\
                          &          & pMLE & 0.029 & 0.038 & -     & -     \\
                          &          & MLE  & 0.026 & 0.032 & 0.032 & 0.954 \\
                          & $\eta_2$ & aMLE & 0.025 & 0.032 & 0.032 & 0.955 \\
                          &          & pMLE & 0.030 & 0.039 & -     & -     \\
                          &          & MLE  & 0.025 & 0.032 & 0.032 & 0.957 \\
                          & $\eta_3$ & aMLE & 0.025 & 0.032 & 0.032 & 0.935 \\
                          &          & pMLE & 0.030 & 0.038 & -     & -     \\
                          &          & MLE  & 0.025 & 0.032 & 0.032 & 0.941 \\ \hline
    \multirow{9}{*}{0.75} & $\eta_1$ & aMLE & 0.026 & 0.032 & 0.032 & 0.953 \\
                          &          & pMLE & 0.029 & 0.038 & -     & -     \\
                          &          & MLE  & 0.026 & 0.032 & 0.032 & 0.954 \\
                          & $\eta_2$ & aMLE & 0.025 & 0.032 & 0.032 & 0.951 \\
                          &          & pMLE & 0.030 & 0.038 & -     & -     \\
                          &          & MLE  & 0.025 & 0.032 & 0.032 & 0.957 \\
                          & $\eta_3$ & aMLE & 0.026 & 0.032 & 0.032 & 0.938 \\
                          &          & pMLE & 0.030 & 0.038 & -     & -     \\
                          &          & MLE  & 0.025 & 0.032 & 0.032 & 0.941 \\ \hline
    \multirow{9}{*}{0.95} & $\eta_1$ & aMLE & 0.026 & 0.032 & 0.032 & 0.950 \\
                          &          & pMLE & 0.029 & 0.038 & -     & -     \\
                          &          & MLE  & 0.026 & 0.032 & 0.032 & 0.954 \\
                          & $\eta_2$ & aMLE & 0.026 & 0.032 & 0.032 & 0.951 \\
                          &          & pMLE & 0.031 & 0.039 & -     & -     \\
                          &          & MLE  & 0.025 & 0.032 & 0.032 & 0.957 \\
                          & $\eta_3$ & aMLE & 0.027 & 0.034 & 0.033 & 0.937 \\
                          &          & pMLE & 0.033 & 0.039 & -     & -     \\
                          &          & MLE  & 0.025 & 0.032 & 0.032 & 0.941
    \end{tabular}
\end{table}
\egroup

\begin{table}[hp]
    \centering
    \caption{$\bta_\tau$ estimation results under the gamma distribution.}
    \label{tab:gamma2eta}
    \vspace{0.5em}
    \begin{tabular}{c|cl|cccc}
    $\tau$ & & Method & $|$bias$|$ & $\sigma_{\rm sim}$ & $\wh\sigma_{\rm est}$ & C.I. \\ \hline
    \multirow{6}{*}{0.05} & $\eta_1$ & aMLE & 0.031 & 0.040 & 0.040 & 0.949 \\
                          &          & pMLE & 0.561 & 0.690 & -     & -     \\
                          &          & MLE  & 0.021 & 0.026 & 0.025 & 0.950 \\
                          & $\eta_2$ & aMLE & 0.038 & 0.049 & 0.049 & 0.939 \\
                          &          & pMLE & 0.840 & 0.588 & -     & -     \\
                          &          & MLE  & 0.022 & 0.028 & 0.028 & 0.950 \\ \hline
    \multirow{6}{*}{0.25} & $\eta_1$ & aMLE & 0.051 & 0.066 & 0.064 & 0.942 \\
                          &          & pMLE & 1.070 & 0.953 & -     & -     \\
                          &          & MLE  & 0.035 & 0.044 & 0.043 & 0.949 \\
                          & $\eta_2$ & aMLE & 0.058 & 0.074 & 0.073 & 0.943 \\
                          &          & pMLE & 1.666 & 0.753 & -     & -     \\
                          &          & MLE  & 0.037 & 0.046 & 0.046 & 0.953 \\ \hline
    \multirow{6}{*}{0.50} & $\eta_1$ & aMLE & 0.070 & 0.089 & 0.088 & 0.944 \\
                          &          & pMLE & 1.259 & 1.080 & -     & -     \\
                          &          & MLE  & 0.048 & 0.060 & 0.059 & 0.948 \\
                          & $\eta_2$ & aMLE & 0.077 & 0.096 & 0.096 & 0.950 \\
                          &          & pMLE & 1.945 & 0.844 & -     & -     \\
                          &          & MLE  & 0.051 & 0.064 & 0.064 & 0.950 \\ \hline
    \multirow{6}{*}{0.75} & $\eta_1$ & aMLE & 0.093 & 0.117 & 0.116 & 0.944 \\
                          &          & pMLE & 1.246 & 1.410 & -     & -     \\
                          &          & MLE  & 0.064 & 0.081 & 0.080 & 0.948 \\
                          & $\eta_2$ & aMLE & 0.103 & 0.127 & 0.126 & 0.944 \\
                          &          & pMLE & 1.831 & 0.907 & -     & -     \\
                          &          & MLE  & 0.069 & 0.086 & 0.086 & 0.951 \\ \hline
    \multirow{6}{*}{0.95} & $\eta_1$ & aMLE & 0.146 & 0.186 & 0.185 & 0.948 \\
                          &          & pMLE & 0.828 & 1.529 & -     & -     \\
                          &          & MLE  & 0.094 & 0.118 & 0.116 & 0.949 \\
                          & $\eta_2$ & aMLE & 0.178 & 0.227 & 0.232 & 0.957 \\
                          &          & pMLE & 0.991 & 1.025 & -     & -     \\
                          &          & MLE  & 0.102 & 0.127 & 0.126 & 0.948
    \end{tabular}
\end{table}

\end{document}